\documentclass[11pt]{article}

% ------------------------------------------------
% Packages
% ------------------------------------------------
\usepackage[margin=1in]{geometry}
\usepackage{times}
\usepackage{microtype}
\usepackage{amsmath,amssymb,amsfonts,amsthm}
\usepackage{graphicx}
\usepackage{enumitem}
\usepackage{authblk}
\usepackage{hyperref}
\usepackage{booktabs}
\usepackage{tikz}
\usepackage{xcolor}
\usepackage{subcaption}
\usepackage[ruled,vlined,linesnumbered]{algorithm2e}
\usepackage{float}
\usepackage{multirow}
\usepackage{comment}
\SetAlgoCaptionSeparator{.}
\SetAlgoCaptionLayout{small}
\SetAlgoNlRelativeSize{-1}
\DontPrintSemicolon
\SetKwComment{Comment}{$\triangleright$\ }{}
\usepackage[T1]{fontenc}
\usepackage[utf8]{inputenc}
\usepackage{newtxtext}
\usepackage{newtxmath}
% ------------------------------------------------
% Hyperref Setup
% ------------------------------------------------
\hypersetup{
    colorlinks=true,
    linkcolor=blue,
    citecolor=blue,
    urlcolor=blue
}

% ------------------------------------------------
% Theorem Environments
% ------------------------------------------------
\newtheorem{theorem}{Theorem}
\newtheorem{lemma}{Lemma}
\newtheorem{corollary}{Corollary}
\newtheorem{proposition}{Proposition}
\theoremstyle{definition}
\newtheorem{definition}{Definition}

\theoremstyle{remark}

% ------------------------------------------------
% Title & Authors
% ------------------------------------------------
\title{The Geometry of Coalition Power:\\
Majorization, Lattices, and Displacement in Multiwinner Elections}

\author[1]{Qian Guo}
\author[2]{Yidan Hu}
\author[1]{Rui Zhang}

\affil[1]{Department of Computer and Information Sciences, University of Delaware}
\affil[2]{Department of Cybersecurity, Rochester Institute of Technology}

\date{}

\begin{document}
\maketitle

\begin{abstract}
How much influence can a coordinated coalition exert in a multiwinner \emph{Top-$k$} election under a positional scoring rule?
We study this question through the lens of the \emph{maximum displacement} problem:
given a coalition of size $m$, how many of the current top-$k$ winners can be forcibly
replaced?

Our first contribution is a structural decomposition showing that coalition power reduces to two independent prefix-majorization constraints—one controlling how strongly the strongest outsiders can be boosted, and one governing how weak winners can be suppressed.
For general positional scoring rules, these prefix inequalities give tight, efficiently checkable necessary conditions---i.e., they exactly capture the coalition's \emph{score-supply} limits in the continuous relaxation.

We obtain an exact discrete characterization when the relevant score ladders are arithmetic-progression (AP) ladders with a common step size $g$ (``common-step AP ladders''), a broad class including Borda, truncated Borda,
$k$-approval/$k$-veto variants, plurality, and other multi-level positional rules (e.g., $3$--$2$--$1$ scoring).
  In this setting we prove an exact
\emph{Majorization–Lattice Theorem}: the coalition’s aggregate feasible score vectors
are precisely the integer points that satisfy (i) the Block–HLP prefix-sum capacity
constraints and (ii) a single global congruence condition modulo~$g$.
For pure Borda ($g=1$), the congruence constraint disappears and feasibility reduces to
a \emph{pure prefix-majorization} test.

This discrete geometry yields a fast $O(k' \log k')$ exact feasibility oracle: given a target
displacement level $k'$, it checks whether a coalition of size $m$ can displace $k'$ winners by combining
a Block--HLP prefix test with a single congruence check. This oracle enables an exact dual-envelope binary
search over $k'$ for computing the \emph{maximum achievable displacement} $k^\ast$.
The resulting algorithm runs in $O\!\left(k(\log k)^2\log(mx)\right)$ time under a fixed Top-$k$ rule and common-step AP scoring ladders, and certifies infeasibility whenever coalition demand exceeds prefix capacities.

Experiments on synthetic Mallows profiles and real PrefLib elections validate the theory: the predicted
cutoff envelopes match the oracle exactly, exhibit clear diminishing returns, and substantially outperform
natural baseline heuristics. Beyond unit-step Borda ($g=1$), we also validate the discrete congruence
phenomenon predicted by Theorem~4 on sparse-step common-step AP rules ($g>1$), where a prefix-only test
admits false positives but the prefix-and-congruence oracle remains exact. The feasibility oracle also scales to extreme election sizes—processing one billion candidates in under 28 seconds—with larger tests limited only by available memory.

\end{abstract}

\paragraph{Keywords.}{
coalitional manipulation;
positional scoring rules;
multiwinner elections;
majorization;
convex geometry;
arithmetic-progression lattices;
feasibility envelopes;
computational social choice;
algorithmic game theory}

\maketitle
\thispagestyle{plain}

%% =========================================================
\section{Introduction}
\label{sec:intro}
%% =========================================================

Coalitional manipulation—the ability of a group of voters to coordinate
their ballots to alter an election outcome—is a central challenge in
computational social choice. Classic results show that manipulating many
positional scoring rules is NP-hard in the worst case
\cite{bartholdi1989,conitzer2006aaai,faliszewski2010cacm}, yet these
hardness barriers offer limited insight into the \emph{structure} of how
coalitions can redistribute scores under widely used rules such as Borda.
Empirical work further suggests that real elections often exhibit
regularities that make manipulation substantially easier in practice
\cite{aziz2015aamas,azizLee2020aamasEA}.

In modern multiwinner settings—committee formation, peer-review
aggregation, competitive evaluations—the strategic aim is rarely to make a
single candidate win; it is to change the \emph{composition} of the winner
set. We focus on the standard Top-$k$ rule, where the committee size $k$ is
fixed in advance. This assumption is crucial: it creates a well-defined
separating threshold and makes the notion of displacement well posed.

This leads to a fundamental operational question:

\vspace{4pt}
\begin{quote}
\textbf{Maximum Displacement.}
\emph{Given a coalition of size $m$, how many of the current top-$k$ winners
can it forcibly replace?}
\end{quote}
\vspace{4pt}

A coalition achieves displacement level $k'$ if it can cause $k'$ currently
nonwinning candidates to finish above the final cutoff while forcing $k'$ of
the current top-$k$ winners to finish strictly below it. Our goal is to
compute the \emph{maximum achievable displacement} $k^\ast$ for a coalition of
size $m$, where
\[
k^\ast \;:=\; \max\{\,k' : \text{$k'$ is achievable by a coalition of size $m$}\,\}.
\] 
Understanding this requires a precise description of the aggregate
score vectors the coalition can generate, going far beyond classical
worst-case complexity analyses.

\paragraph{Structural decomposition of manipulation.}
Our first contribution is a structural decomposition of coalition influence
at a fixed displacement query level~$k'$. We show that:
\begin{itemize}
\item only the top-$k'$ and bottom-$k'$ positions of each coalition ballot
affect displacement feasibility;
\item any successful manipulation can be transformed into a canonical one in
which all top-$k'$ positions are assigned to the strong outsiders and all
bottom-$k'$ positions to the weak winners; and
\item feasibility separates cleanly into two independent subproblems: one governing the boosts available to strong outsiders and one
governing the suppressions achievable on weak winners.
\end{itemize}
Together with a complementary impossibility theorem, this yields a sharp
\emph{if-and-only-if} description of feasible displacement: feasibility of
both subproblems at cutoff~$B$ is necessary and sufficient for the coalition
to deterministically replace at least~$k'$ winners in \emph{every}
completion, regardless of how the middle positions are filled.

\paragraph{Geometric view: the Block--HLP envelope.}
Our second, and technically central, contribution is a geometric
characterization of coalition power. For any positional scoring rule, the set
of aggregate coalition score allocations forms the Minkowski sum of $m$
permutahedra, one per ballot. Its convex hull is governed by a concise family
of blockwise prefix-sum inequalities generalizing the classical
Hardy--Littlewood--Pólya majorization relations. This \emph{Block--HLP}
envelope exposes the intrinsic geometry of the coalition’s score supply,
independent of any particular manipulation strategy, and provides the
foundation for all subsequent structural and algorithmic results.

\paragraph{Exact discrete structure for common-step AP ladders.}
For the broad and practically important class of scoring rules whose
per-ballot scores follow an \emph{arithmetic-progression (AP) ladder with a
common step size}—including Borda, truncated Borda, $k$-approval, 3--2--1
scoring, and many affine variants—we refine the convex Block--HLP envelope to
an \emph{exact discrete} characterization.  
In this setting, an integer score vector is realizable by the $m$ colluding
ballots if and only if it satisfies:
(i) the Block--HLP prefix-sum capacity constraints, and  
(ii) a single global congruence condition modulo $g$, the common AP step size.  
Hence the feasible region is a clean \emph{prefix-and-congruence lattice}:
the intersection of the Block--HLP polymatroid with one arithmetic residue
class.  
This structure bypasses any need to enumerate ballot permutations and yields an
efficient and exact feasibility oracle.

\paragraph{Algorithmic consequences.}
Combining the structural and geometric insights yields an efficient
algorithm for computing the maximum achievable displacement $k^\ast$ under AP
scoring rules. For each cutoff~$B$, the boost and suppress requirements become
ordered demand vectors; feasibility reduces to checking prefix dominance
together with a single congruence test. Both checks are monotone in $B$, enabling binary search to identify the feasible cutoff interval for each queried displacement level~$k'$.
An outer binary search over $k' \in \{0,\dots,k\}$ then computes $k^\ast$, yielding overall running time
$O\!\left(k(\log k)^2\log(mx)\right)$ in the worst case (since $k'\le k$), with infeasibility certified whenever
coalition demand exceeds prefix capacities.

\paragraph{Contributions.}
This paper makes the following contributions:
\begin{enumerate}
\item \textbf{Structural decomposition.}
      We show that displacement at level $k'$ decomposes into two independent feasibility questions: boosting strong outsiders
      and suppressing weak incumbents.

\item \textbf{Guaranteed displacement.}
      We prove that feasibility of the boost and suppress subproblems at
      cutoff~$B$ is \emph{necessary and sufficient} for a coalition to
      deterministically replace at least~$k'$ winners in \emph{every}
      completion of the nonmanipulated ballots.

\item \textbf{Geometric characterization (continuous envelope via classical majorization).}
      For arbitrary positional scoring rules, we show that the coalition’s continuous feasible region is the
Minkowski sum of ballot-wise permutahedra and therefore admits a succinct Block--HLP (prefix-capacity)
description by classical majorization theory.
This yields a simple and efficiently checkable continuous feasibility oracle.

\item \textbf{Exact discrete structure for common-step AP ladders.}
      When all colluding ballots follow AP ladders with a
      common step size, we prove a \emph{Majorization–Lattice Theorem}:
      an integer score vector is realizable iff it satisfies the Block--HLP
      prefix constraints together with a single global congruence condition.
      This yields an exact prefix-and-congruence lattice of feasible
      aggregates.

\item \textbf{Polynomial-time computation of maximum achievable displacement.}
      Using this lattice structure, we design a fast algorithm for all
      common-step AP rules that computes the maximum achievable displacement
      $k^\ast$ via a dual-envelope binary search over $k'$ supported by
      efficient prefix and congruence oracles.

\end{enumerate}

Overall, our results provide a unified structural and algorithmic framework for quantifying coalition power
in multiwinner elections under positional scoring rules, and yield an exact polynomial-time algorithm in the
common-step AP regime.

\begin{center}
\fbox{
\begin{minipage}{0.95\linewidth}
\textbf{Technical Roadmap.}
(i) We show that maximum displacement decomposes into independent boost and suppress subproblems.
(ii) We characterize the continuous feasibility region via Block--Hardy--Littlewood--Pólya prefix constraints.
(iii) For common-step arithmetic scoring ladders, we prove exact discrete realizability via prefix dominance plus a single congruence condition.
(iv) We leverage monotone feasibility envelopes to compute the maximum achievable displacement $k^\ast$ efficiently.
\end{minipage}}
\end{center}

\paragraph{Roadmap.}
Section~\ref{sec:related} reviews related work.
Section~\ref{sec:model} formalizes the voting model and problem definition.
Section~\ref{sec:decomposition} develops the structural decomposition.
Section~\ref{sec:geometry} presents the geometric Block--HLP framework and its
AP specialization.
Section~\ref{sec:algorithm} describes the dual-envelope search algorithm.
Section~\ref{sec:experiments} reports empirical results.
Section~\ref{sec:discussion} discusses tractability boundaries, geometric
frontiers, and extensions of the framework.
Section~\ref{sec:conclusion} concludes and outlines promising future directions.

%% =========================================================
\section{Related Work}
\label{sec:related}
%% =========================================================

\paragraph{Computational manipulation in voting.}
A central theme in computational social choice is whether strategic behavior
can be discouraged by computational hardness. Following the impossibility
theorems of Gibbard and Satterthwaite \cite{gibbard1973,satterthwaite1975},
Bartholdi et al.~\cite{bartholdi1989} initiated the computational study of
coalitional manipulation, establishing NP-hardness results for prominent
positional rules such as Borda and STV. A rich subsequent literature has mapped
the boundary between tractable and intractable manipulation across a wide range
of voting rules, including positional, approval-based, and multiwinner systems
\ \cite{conitzer2006aaai,conitzer2007aaaiFewCandidates,faliszewski2010cacm,
zuckerman2009aij,betzler2009mfcs,davies2011aaaiBorda,
aziz2015aamas,bredereck2016largescale,erdelyi2015kapproval}.
These works typically cast manipulation as a search or decision problem—for
example, determining whether a designated candidate can be made to win or
whether a specific committee can be achieved—where the main obstacle is
worst-case computational complexity.

While this line of work provides important hardness guarantees, it offers
limited insight into the \emph{structure} of feasible score redistributions.
Complexity-theoretic reductions focus on adversarially constructed preference
profiles and do not characterize the geometry of coalition influence under
regular scoring rules such as Borda. Moreover, traditional formulations study
\emph{targeted} manipulation—optimizing toward a particular candidate or
committee—whereas many practical settings call for understanding the
\emph{extent} of manipulation achievable by a coalition, i.e., its maximum
achievable displacement $k^\ast$, independent of any specific outcome.

Our viewpoint differs fundamentally. Rather than optimizing toward a fixed
target, we characterize the \emph{entire feasible influence region}: the set of
all aggregate score vectors that a coalition of size $m$ can induce while
honest ballots remain fixed. Instead of asking whether some candidate can be
made to win, we determine precisely when a coalition can guarantee displacement
at level $k'$.
 This shift from target-specific search to
global structural characterization exposes the geometric and lattice structure
of feasible manipulations and yields sharp necessary and sufficient conditions
for guaranteed displacement. We are not aware of prior work that gives an explicit global description (in aggregate score space) of the full
feasible manipulation region for positional scoring rules in the multiwinner displacement setting studied here,
nor an exact characterization that supports computing $k^\ast$ via feasibility queries over $k'$.

\paragraph{Integer-programming formulations of manipulation.}
A parallel line of work develops \emph{integer programming} and
\emph{network-flow} formulations of manipulation, in which ballot-level
decisions are modeled explicitly through binary assignment variables. Early
approaches, such as those of Betzler and Dorn \cite{BetzlerDorn2009} and Xia and
Conitzer \cite{XiaConitzer2008}, reduce manipulation to the feasibility of a
structured integer program whose constraints encode candidate--position
assignments consistent with voter rankings. Subsequent refinements introduce
more sophisticated combinatorial constraints, flow-based formulations, or
optimization heuristics to improve practical performance on moderate-size
instances \cite{xiaConitzer2008PossibleNecessary}.  

While powerful as algorithmic tools, these IP-based methods fundamentally
operate at the level of \emph{ballot permutations}, requiring the solver to
encode or implicitly explore exponentially many candidate--position matchings.
In contrast, our approach bypasses ballot-level combinatorics entirely: the
Block--HLP decomposition and AP-lattice characterization describe the
coalition’s feasible influence region \emph{directly in aggregate score
space}. This yields sharp structural conditions and polynomial-time feasibility
tests without integer optimization. The geometric framework developed in this
paper is therefore orthogonal to IP-based approaches, capturing global
constraints on feasible score redistributions that permutation-level
formulations do not make explicit.

\paragraph{Parameterized complexity of manipulation.}
A complementary line of research studies whether the hardness of coalitional
manipulation can be alleviated by fixing structural parameters such as the
number of candidates, coalition size, or restrictions on preference profiles.
Early results show that manipulation under Borda and related positional rules
remains W[1]-hard when parameterized by coalition size
\cite{BetzlerSlinkoUhlmann2011}, while certain cases become fixed-parameter
tractable when parameterized by the number of candidates or score levels
\cite{XiaConitzer2008,FaliszewskiEtAl2011}. Subsequent work extends this
parameterized perspective to related problems such as bribery, control, and
multiwinner elections, identifying regimes that admit FPT algorithms and
others that remain intractable
\cite{bredereck2016largescale,dey2016frugal}.

Despite yielding a fine-grained complexity landscape, this body of work
continues to treat manipulation as a \emph{search problem over ballot
permutations}, with tractability achieved by bounding combinatorial parameters.
As a result, it offers limited insight into the global structure of feasible
score redistributions. In contrast, our geometric framework characterizes the
coalition’s influence \emph{without} parameter restriction: the Block--HLP
envelope and AP-lattice structure describe the entire feasible region of
aggregate score vectors directly. Tractability in our setting arises from
structural decomposition and majorization properties rather than from fixing
problem parameters.

\paragraph{Combinatorial interpretations of positional scoring transformations.}
A related direction studies the combinatorial structure of positional scoring
rules themselves, focusing on how properties of a scoring vector behave under
affine transformations, rescalings, or changes in score granularity. Classical
work by Fishburn \cite{Fishburn1978} and Young \cite{Young1977} examined
invariance and consistency properties of voting outcomes under affine
transformations of score vectors, clarifying when such transformations preserve
ordinal outcomes or social-choice axioms. This line of work sheds light on
equivalence classes of scoring rules and on how discrete score gaps influence
the behavior of positional mechanisms.

However, this literature focuses on \emph{transformations of scoring rules}
themselves, rather than on transformations of the \emph{feasible aggregate
score allocations} induced by a fixed rule. In contrast, our framework holds
the positional rule fixed and characterizes the coalition’s entire feasible
influence region. The Block--HLP envelope captures the continuous limits of
score redistribution, while the AP-lattice structure yields an exact discrete
description of feasible aggregate score vectors. Thus, while scoring-rule
analyses explain relationships between different positional mechanisms, our
geometric analysis explains the intrinsic structural constraints on coalition
power under a given rule.

\paragraph{Possible and necessary winners.}
The literature on possible and necessary winners under incomplete preferences studies which candidates can
win, or must win, across all linear extensions of a partially specified preference profile
\cite{xiaConitzer2011jaisPW}. Our guaranteed-displacement theorem has a related flavor, but the source of
uncertainty here is different: it comes from how the coalition completes the \emph{middle} positions of its
ballots and (potentially) from tie-breaking at score ties.

Once a coalition satisfies the appropriate boost/suppress feasibility constraints at a cutoff $B$, it can
guarantee the \emph{displacement level} $k'$: in every completion of the coalition's ballots (and under any
tie-breaking), the final Top-$k$ winner set contains at least $k'$ outsiders, and hence at least $k'$ honest
winners are excluded. When the scoring vector contains plateaus, the identities of the entering outsiders and
displaced winners may depend on tie-breaking; the robust guarantee is therefore about the displacement
\emph{count} rather than a fixed set of winners.

\paragraph{Positional scoring and structural decompositions.}
Prior work on manipulation for positional scoring rules typically relies on
explicit combinatorial formulations, including integer programs, network-flow
models, or matching-based encodings
(e.g.,~\cite{betzler2009mfcs,xia2008ec,xiaConitzer2008PossibleNecessary,dey2016frugal}).
These approaches reason directly about the placement of candidates at specific
positions on coalition ballots and often require substantial case analysis,
branching, or tailored dynamic programming to handle different scoring vectors
and constraints.

In contrast, our approach exposes a two-level geometric structure that governs
\emph{all} feasible manipulations under a fixed positional rule. At the
continuous level, the coalition’s aggregate score supply is the Minkowski sum
of $m$ permutahedra, whose convex hull admits a succinct description via the
Block--HLP family of blockwise prefix-sum inequalities. At the discrete level,
when per-ballot scores form an AP ladder, this convex
envelope sharpens to an \emph{exact} majorization lattice characterized by
prefix dominance together with a single congruence condition. As a result,
feasibility can be decided using only prefix capacities and modular arithmetic,
without enumerating ballot permutations or solving large combinatorial
subproblems.

\paragraph{Majorization and convex geometry.}
The relationship between permutahedra, convexity, and majorization is classical
\cite{hardy1934,marshall2011,rado1952,edmonds1970}. These foundations underlie a
broad range of results in inequalities, combinatorial optimization, and
polyhedral theory. However, existing applications of majorization do not
address coalitional manipulation in voting, the redistribution of score mass
across ballots, or feasibility questions in multiwinner settings under
positional rules.

Building on classical majorization and polymatroid geometry, we \emph{recast} the coalition’s continuous
feasible region---a Minkowski sum of ballot-wise permutahedra---into the specific blockwise prefix-capacity
form needed for displacement analysis (Theorem~\ref{thm:block-hlp}).
Our novelty is not the underlying permutahedral characterization itself, but how this envelope combines with
(i) the canonical boundary-target structure of displacement (boost vs.\ suppress), and (ii) an \emph{exact}
integer realizability refinement for common-step AP ladders via a single congruence condition
(Theorem~\ref{thm:ap-lattice-common}).

\paragraph{Why tractability emerges here.}
Classical NP-hardness results for manipulation typically arise from the need to
search over exponentially many subsets of candidates to promote or demote
\cite{bartholdi1989,conitzer2007aaaiFewCandidates,faliszewski2010cacm}.
In the displacement setting studied here, however, the “targets’’ are not chosen
by the algorithm; they are determined canonically by the order statistics of the
honest scores. The coalition always seeks to elevate the highest-scoring
outsiders and depress the lowest-scoring incumbents.

This canonical choice eliminates the combinatorial branching that underlies
hardness and collapses manipulation into a \emph{majorization-feasibility}
problem. For AP ladders with a common step size, the
resulting blockwise prefix inequalities are not only necessary but also
sufficient. This yields polynomial-time feasibility oracles and enables the
dual-envelope binary-search algorithm developed in this paper.

\paragraph{AP ladders and the tractability boundary.}
AP ladders form a broad and practically important class of positional scoring rules
for which score realizability exhibits strong arithmetic regularity.
When all colluding ballots share a \emph{common} step size~$g$, realizability is
governed by a clean prefix-and-congruence lattice structure.  
In contrast, prior hardness results for unrestricted positional rules
\cite{bartholdi1989,faliszewski2010cacm} indicate that when score gaps vary
across ballots, the feasible score vectors fragment into multiple incompatible
residue classes, preventing a unified majorization-based treatment.  
This contrast illustrates why common-step AP ladders yield a uniquely tractable
regime: they preserve the convex prefix structure while aligning all integer
points to a single modular congruence class.  
A full delineation of the tractability frontier—identifying which scoring
families admit such a prefix-and-congruence characterization—remains an
intriguing open problem.

%% =========================================================
\section{Model and Problem Setup}
\label{sec:model}
%% =========================================================

We study multiwinner elections conducted under positional scoring rules.
This section introduces the election model, notation, and the displacement
objective that underlies our structural and geometric analysis.

\paragraph{Candidates, voters, and positional scoring rules.}

The election contains a candidate set $\mathcal{C}$ with $|\mathcal{C}|=x$ candidates and a voter set $\mathcal{N}$ with $|\mathcal{N}|=n$ voters.
Each voter submits a complete ranking, and under a positional scoring rule specified by a nonincreasing vector
$p = (p_1,\ldots,p_x)$ where $p_1 \ge \cdots \ge p_x$, a candidate ranked in position $r$ receives $p_r$ points.
We assume $p$ is integer-valued (w.l.o.g.\ by scaling any rational scoring vector), so the one-unit separation
in the cutoff constraints (e.g., $B-1$) is well-defined.
The winners are the $k$ candidates with the largest total scores
(the standard Top-$k$ rule).

\paragraph{Tie-breaking for defining the honest baseline.}
If multiple candidates have equal total score, the Top-$k$ rule does not specify a unique committee.
To make the honest baseline committee and our boundary sets well-defined, % Revised
we fix an arbitrary but exogenous strict priority order $\tau$ over candidates (e.g., by candidate
index), and break score ties in favor of the candidate with smaller priority value $\tau(\cdot)$.

Formally, given the honest score vector $(S_c)_c$, we define the tie-broken honest order $\succ_H$ by
\[
a \succ_H b \quad \Longleftrightarrow \quad
\bigl(S_a > S_b\bigr)\ \ \text{or}\ \ \bigl(S_a = S_b \ \text{and}\ \tau(a) < \tau(b)\bigr).
\]
Throughout, whenever we refer to the ``strongest''/``weakest'' candidates under the honest profile, we mean
with respect to $\succ_H$. This convention is used only to define $\mathcal{T},\mathcal{O},\mathcal{O}^\ast,\mathcal{T}_w$; our displacement guarantees
are stated to hold under \emph{any} tie-breaking after manipulation because we require a strict one-unit separation.

\paragraph{Fixed committee size.}
Throughout, we assume that the committee size $k$ is fixed and known
\emph{a~priori}.
This assumption is essential: it yields a well-defined post-manipulation
threshold score and makes the displacement objective meaningful.
Mechanisms with variable committee size (e.g., approval-based quota rules)
require a different notion of displacement and fall outside the scope of this
work.

\paragraph{Honest scores, winners, and outsiders.}
Let $S_c$ denote the total score of candidate $c$ from all honest voters.
When honest scores tie, we use the tie-broken honest order $\succ_H$ defined above.

Let
\[
\mathcal{T}=\{t_1,\ldots,t_k\}
\]
denote the honest winner set (the top-$k$ candidates under $\succ_H$), indexed in increasing order under $\succ_H$ (from weakest to strongest), so that $t_1$ is the weakest incumbent and $t_k$ is the strongest.

The remaining candidates form the outsider set
\[
\mathcal{O}=\{o_1,\ldots,o_{x-k}\},
\]
indexed in decreasing order under $\succ_H$ so that $o_1$ is the strongest outsider.

\paragraph{Coalition and displacement objective.}
A fully coordinated coalition of $m$ voters seeks to replace $k'$ of the $k$
current winners.  Here $k'$ denotes a \emph{displacement level under consideration}
(i.e., a feasibility query), and the coalition’s \emph{maximum achievable displacement} is
\[
k^\ast \;:=\; \max\{\,k' : \text{displacement at level $k'$ is achievable by a coalition of size $m$}\,\}.
\]
The displacement level $k'$ must satisfy
\[
0 \le k' \le \min\{k,\, x-k\},
\]
since at most $k$ incumbents can be removed and at most $x-k$ outsiders are
available to replace them.  
Within this feasible range, a displacement attempt at level $k'$
concerns only two boundary subsets (defined with respect to the tie-broken honest order $\succ_H$):
\[
\mathcal{O}^\ast = \{o_1,\dots,o_{k'}\} \quad\text{(strongest outsiders)}, \qquad
\mathcal{T}_w = \{t_1,\dots,t_{k'}\} \quad\text{(weakest winners)}.
\]
These sets are canonical: if the coalition cannot separate $\mathcal{O}^\ast$ from
$\mathcal{T}_w$—that is, if it cannot raise all candidates in $\mathcal{O}^\ast$
above a common threshold while pushing all candidates in $\mathcal{T}_w$
strictly below it—then no other selection of $k'$ outsiders and $k'$ winners
can achieve displacement.  
The remainder of the paper characterizes exactly when such a separation is
possible.

\paragraph{Cutoff score and strict separation.}
Let $B$ be a \emph{separating cutoff score}.  
After the coalition’s ballots have been applied, a displacement at level~$k'$ is
\emph{successful} if
\[
S_o+\Delta_o \ge B \quad\text{for all } o\in\mathcal{O}^\ast,
\qquad
S_t+\Delta_t \le B-1 \quad\text{for all } t\in\mathcal{T}_w.
\]
These inequalities ensure that every targeted outsider lies strictly above every
targeted weak winner in the final score order.  Because of the one-unit gap, all
$o\in\mathcal{O}^\ast$ must appear \emph{ahead of} all $t\in\mathcal{T}_w$ in the
final ranking under \emph{any} tie-breaking rule. We require a strict score gap so the displacement guarantee holds under \emph{any} tie-breaking (even adversarial); we do not count manipulations that rely on favorable tie-breaking.

Importantly, candidates outside the boundary sets $\mathcal{O}^\ast$ and $\mathcal{T}_w$ may also end up on either side of the
cutoff $B$. When the scoring vector contains plateaus, such candidates may tie boundary candidates at the
Top-$k$ cutoff, so the \emph{identity} of which outsiders enter and which incumbents leave can depend on
tie-breaking. This does not affect the certified guarantee at level $k'$: once $\mathcal{O}^\ast$ is strictly separated from
$\mathcal{T}_w$ by a one-unit gap, every Top-$k$ winner set (under any tie-breaking) contains \emph{at least $k'$}
outsiders and therefore excludes \emph{at least $k'$} honest winners (Theorem~\ref{thm:guaranteed-topk} in Section~\ref{sec:guaranteed}). Thus strict
separation certifies the displacement \emph{level} $k'$ (count), not necessarily a unique displaced set when
plateaus create ties.

Studying how specific tie-breaking rules affect the \emph{identities} of entering and displaced candidates is
an interesting direction for future work.

\paragraph{Boost and suppress requirements.}
Fix a candidate cutoff value $B$.
To satisfy the separation inequalities above, each targeted outsider $o \in
\mathcal{O}^\ast$ must receive at least
\[
b_o = \max\{0,\, B - S_o\}
\]
additional points from the coalition.  
Similarly, each weak winner $t \in \mathcal{T}_w$ may accrue \emph{at most}
\[
u_t = \max\{0,\, B - 1 - S_t\}
\]
additional points without violating the separation constraint.
Let $b(B)$ and $u(B)$ denote the resulting boost and
suppress vectors with respect to cutoff $B$.  We also refer to $u(B)$ as the \emph{suppression slack vector}, as it captures the remaining score slack each weak winner can absorb while staying below the cutoff.
At a high level, the coalition must find a way to distribute its available
scores so that every outsider meets its required boost while every weak winner
respects its suppression allowance.

\paragraph{Coalition score budget.}
Each manipulative ballot contributes one copy of every positional score
$p_1,\dots,p_x$.  
Across $m$ coordinated voters, the coalition therefore controls $m$ copies of
each score in the vector~$p$, which it may distribute arbitrarily
through its chosen ballot rankings.
How these resources can be arranged to satisfy the boost and suppress demands
described above—and how these two forms of influence interact—is the focus of
Section~\ref{sec:decomposition}, where we develop the structural decomposition
that underpins our geometric analysis.

\begin{table}[t]
\centering
\small
\renewcommand{\arraystretch}{1.2}
\begin{tabular}{cl}
\toprule
\textbf{Symbol} & \textbf{Meaning} \\
\midrule
$x$ & Number of candidates \\
$n$ & Number of voters \\
$m$ & Number of colluding voters \\
$k$ & Size of winner set (Top-$k$ rule) \\
$k'$ & Queried displacement level (feasibility target) \\
$k^\ast$ & Maximum achievable displacement for coalition size $m$ \\
$p$ & Positional scoring vector \\
$S_c$ & Honest score of candidate $c$ \\
$\tau$ & Fixed exogenous tie-breaking priority over candidates used to define $\succ_H$ \\
$\mathcal{T},\ \mathcal{O}$ & Honest winners and outsiders \\
$\mathcal{T}_w, \mathcal{O}^\ast$ & Weakest $k'$ winners / strongest $k'$ outsiders under the tie-broken honest order $\succ_H$ \\
$\Delta_c$ & Coalition’s score contribution to candidate $c$ \\
$B$ & Post-manipulation cutoff score \\
$b_o$ & Boost requirement for outsider $o$ \\
$u_t$ & Suppression allowance for weak winner $t$ \\
\bottomrule
\end{tabular}
\caption{Summary of notation introduced in Section~\ref{sec:model}.}
\label{tab:notation}
\end{table}

\paragraph{Notation.}
Table~\ref{tab:notation} summarizes the main symbols introduced in this section.
Throughout the paper, when referring to ``top'' or ``bottom'' scores of a ballot,
we mean the \emph{score copies associated with specific ballot positions}
(e.g., top-$k'$ or bottom-$k'$ positions), not distinct numerical values.
This distinction matters when the scoring vector contains ties.

\paragraph{Looking ahead.}
The boost and suppress demand vectors $(b,u)$ form
the inputs to our feasibility analysis.  
Section~\ref{sec:decomposition} shows that, once coalition ballots are placed
in canonical form, feasibility of these two demand systems completely
characterizes whether displacement at level~$k'$ is achievable.  
Section~\ref{sec:geometry} then develops the geometric machinery—Block--HLP
prefix inequalities and their AP-lattice refinement—that determines exactly
when these demand systems can be met.

%% =========================================================
\section{Structural Decomposition}
\label{sec:decomposition}
%% =========================================================

This section develops the structural principles that make the maximum
displacement problem tractable.  
We show that, for any positional scoring rule, the coalition’s influence
splits cleanly into two independent components: the ability to \emph{boost}
outsiders using the top positional scores, and the ability to \emph{suppress}
incumbents using the bottom positional scores.  
This separation is a direct consequence of the ordered score structure of
positional rules and underlies both the geometric feasibility tests of
Section~\ref{sec:geometry} and the dual binary-search algorithm of
Section~\ref{sec:algorithm}.  
Throughout, we use the notation and score requirements introduced in
Section~\ref{sec:model}.

\subsection{Conceptual Overview}
\label{sec:conceptual}

Under any positional scoring rule, each coalition ballot contributes one copy
of every positional score $p_1 \ge p_2 \ge \cdots \ge p_x$.  
Crucially, the \emph{top $k'$} positions and the \emph{bottom $k'$} positions form
two completely disjoint pools of influence:
\[
\underbrace{p_1, \ldots, p_{k'}}_{\text{boost resource}}
\quad\text{and}\quad
\underbrace{p_{x-k'+1}, \ldots, p_x}_{\text{suppress resource}}.
\]
Only these extremal score segments can meaningfully affect displacement at level~$k'$:
the top-$k'$ scores determine how much the coalition can raise the strongest outsiders $\mathcal{O}^\ast$,
and the bottom-$k'$ scores determine the minimum total coalition contribution that must be distributed
among the weakest winners $\mathcal{T}_w$ in any maximally suppressive (canonical) manipulation.

The remaining $(x - 2k')$ positions, lying strictly between these two segments,
play no direct role in satisfying the score inequalities $S_o+\Delta_o \ge B$ for
outsiders and $S_t+\Delta_t \le B-1$ for weak winners (though they may affect non-boundary rankings).   
Their precise assignments can influence tie patterns or the rankings of
non-boundary candidates, but they do not provide additional boost to outsiders
or additional suppression to weak winners.  
Their effect on feasibility is therefore mediated entirely through the
extremal segments, whose structure we analyze in the remainder of this section.

This structural separation has two immediate consequences.

\begin{itemize}
\item \textbf{Canonical form.}  
Any successful manipulation can be transformed, without loss of generality, so
that all top-$k'$ positions are assigned to $\mathcal{O}^\ast$ and all 
bottom-$k'$ positions to $\mathcal{T}_w$.  
All middle positions are irrelevant to feasibility and may be filled 
arbitrarily.

\item \textbf{Independent feasibility.}  
Because the boost and suppress resources come from disjoint subsets of the
scoring vector, the feasibility of boosting $\mathcal{O}^\ast$ and restricting the
scores of $\mathcal{T}_w$ can be analyzed as two completely independent 
subproblems.  
No choice in one score pool can affect feasibility in the other.
\end{itemize}

In addition, whenever the coalition achieves a strict score separation at a cutoff $B$ between $\mathcal{O}^\ast$ and
$\mathcal{T}_w$, it guarantees the displacement level $k'$: in every completion of the coalition's ballots and under any
tie-breaking, the final Top-$k$ winner set contains \emph{at least $k'$} outsiders (and hence excludes at least
$k'$ honest winners). When the scoring vector contains plateaus, the identities of which outsiders enter and
which incumbents leave can depend on tie-breaking. Section~\ref{sec:guaranteed} formalizes the robust displacement
guarantee we use throughout.

The remainder of this section develops these principles formally through the
canonical-manipulation lemma, the independent-feasibility lemma, and the
guaranteed-displacement theorem.

\subsection{Canonical Manipulation Strategies}
\label{sec:canonical}

The coalition controls $m$ copies of each positional score
$p_1,\dots,p_x$.
For a fixed displacement level~$k'$, only the \emph{top $k'$} and
\emph{bottom $k'$} positions on each coalition ballot affect the feasibility of
separating $\mathcal{O}^\ast$ and $\mathcal{T}_w$:
the top positions determine how much boost can be delivered to the strongest
outsiders, while the bottom positions determine the minimum total coalition contribution that must be distributed
among the weakest winners in a maximally suppressive (canonical) manipulation.
As noted in Section~\ref{sec:model}, these two score segments are disjoint for
all positional scoring rules.

Our first structural result shows that every successful manipulation can be
transformed into a canonical form in which all influential positions are
allocated exclusively to the targeted outsiders and the targeted weak winners.
In other words, only the extremal positions of the coalition ballots matter.

\begin{lemma}[Canonical Manipulation]
\label{lem:canonical}
If displacement at level~$k'$ is feasible under a positional scoring rule,
then there exists a successful manipulation in which
\begin{enumerate}[label=(\roman*)]
    \item every top-$k'$ position on every coalition ballot is assigned to a
          candidate in $\mathcal{O}^\ast$, and
    \item  every bottom-$k'$ position on every coalition ballot is assigned to a candidate in $\mathcal{T}_w$.
\end{enumerate}
\end{lemma}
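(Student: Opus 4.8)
The plan is a ballot-by-ballot exchange argument. I would start from an arbitrary successful manipulation: an assignment of the $m$ coalition ballots (each a complete ranking of $\mathcal{C}$) at some cutoff $B$ with $S_o+\Delta_o\ge B$ for all $o\in\mathcal{O}^\ast$ and $S_t+\Delta_t\le B-1$ for all $t\in\mathcal{T}_w$. Two structural facts carry the argument. First, the boundary sets are disjoint, $\mathcal{O}^\ast\cap\mathcal{T}_w=\emptyset$ (one lies in $\mathcal{O}$, the other in $\mathcal{T}$), and the displacement-range bound $k'\le\min\{k,x-k\}$ gives $x\ge 2k'$, so on each ballot the top-$k'$ positions $\{1,\dots,k'\}$ and the bottom-$k'$ positions $\{x-k'+1,\dots,x\}$ are disjoint. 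Second, only candidates in $\mathcal{O}^\ast\cup\mathcal{T}_w$ appear in the separation inequalities, so the coalition's contribution to any ``middle'' candidate is entirely unconstrained. The only quantitative ingredient I need is the monotonicity ``moving a candidate up the nonincreasing vector $p$ weakly increases its ballot score (and moving it down weakly decreases it)'', summed over ballots to get a monotone change in each $\Delta_c$.

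Step 1 (top fix). On each ballot separately, let $A$ be the set of $\mathcal{O}^\ast$-candidates currently placed at a position $>k'$, and $D$ the set of non-$\mathcal{O}^\ast$-candidates currently placed in the top $k'$ positions. Since the top block holds exactly $k'$ candidates and $|\mathcal{O}^\ast|=k'$, one gets $|A|=|D|$; I would pair $A$ with $D$ arbitrarily and swap each pair's positions. Each swap moves an $o\in\mathcal{O}^\ast$ from a position $>k'$ up to a position $\le k'$, so $o$'s score on that ballot weakly increases, and moves a non-$\mathcal{O}^\ast$ candidate weakly down, so its score weakly decreases; untouched candidates (including $\mathcal{O}^\ast$-members already in the top block) keep their scores. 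After processing every ballot, all top-$k'$ positions are occupied by $\mathcal{O}^\ast$; moreover every $\Delta_o$ ($o\in\mathcal{O}^\ast$) has weakly increased and every $\Delta_c$ ($c\notin\mathcal{O}^\ast$) has weakly decreased, so the boost inequalities are preserved and the suppress inequalities only get slacker.

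Step 2 (bottom fix). Because all $\mathcal{O}^\ast$-candidates are now locked into the top block, every position $>k'$ is held by a non-$\mathcal{O}^\ast$ candidate; in particular, since $x\ge 2k'$ the bottom block sits strictly below position $k'$, so it contains no $\mathcal{O}^\ast$-candidate. I would repeat the symmetric argument there: on each ballot let $E$ be the $\mathcal{T}_w$-candidates currently at a position $\le x-k'$ and $F$ the non-$\mathcal{T}_w$-candidates currently in the bottom $k'$ positions (these are automatically non-$\mathcal{O}^\ast$), pair $E$ with $F$, and swap. Each swap moves a $t\in\mathcal{T}_w$ weakly down (its score weakly decreases) and a middle candidate weakly up (its score may increase, but no constraint involves it), and every such swap stays within positions $>k'$, so the top block is undisturbed. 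After all ballots are processed, the bottom-$k'$ positions are occupied by $\mathcal{T}_w$, every $\Delta_t$ ($t\in\mathcal{T}_w$) has weakly decreased relative to Step 1, and each $\Delta_o$ ($o\in\mathcal{O}^\ast$) is unchanged by Step 2. Hence both families of inequalities still hold, and the resulting manipulation is successful and in the claimed canonical form.

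I do not expect a deep obstacle here — it is an exchange argument — so the ``hard part'' is bookkeeping rather than ideas: verifying that each swap touches only candidates whose relevant inequality is improved or absent (this is exactly where the disjointness $\mathcal{O}^\ast\cap\mathcal{T}_w=\emptyset$ and the position-block disjointness $x\ge 2k'$ are used), and sequencing the two phases (top before bottom) so the second does not re-scramble the first. Edge cases are mild: $k'=0$ is vacuous and $2k'=x$ just makes the middle block empty, with the same argument applying. If one prefers an iterative phrasing over the batched swaps, the number of top-$k'$ positions held by $\mathcal{O}^\ast$ strictly increases at each step and is bounded by $k'$, giving termination in at most $k'$ swaps per ballot per phase.
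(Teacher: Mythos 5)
Your proof is correct and follows essentially the same route as the paper's: a two-phase exchange argument (fix the top-$k'$ block on every ballot first, then the bottom-$k'$ block) driven by the monotonicity of positional scores under upward/downward position moves, with the disjointness of $\mathcal{O}^\ast$ and $\mathcal{T}_w$ and of the two position blocks doing the same work in both arguments. Your batched direct transpositions are in fact a mild simplification of the paper's iterated adjacent-swap version, since a transposition touches only the two exchanged candidates and thereby sidesteps the paper's careful choice of which misplaced occupant to bubble (needed there to ensure no $\mathcal{T}_w$ candidate is lifted during the bubbling).
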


\begin{proof}[Proof sketch]
Let $\pi$ be any successful manipulation achieving displacement~$k'$.
Throughout, the honest profile and its induced score ordering are held fixed;
only the coalition’s ballots are transformed.

Suppose some top-$k'$ position on some ballot $\pi_v$ is filled by a candidate
$c \notin \mathcal{O}^\ast$.  Let $i\le k'$ be the \emph{largest} index such that the
occupant $c:=\pi_v[i]$ satisfies $c\notin \mathcal{O}^\ast$.  By maximality of $i$,
every position $j\in\{i+1,\dots,k'\}$ is occupied by a candidate in $\mathcal{O}^\ast$.
Since $|\mathcal{O}^\ast|=k'$, there exists some $o\in\mathcal{O}^\ast$ that appears
below position $k'$ on the same ballot.  Using adjacent swaps, move $c$ down to
position $k'$ and then move $o$ up to position $k'$, which places $o$ into the
top-$k'$ block and pushes $c$ to position $k'+1$.

This operation
\begin{itemize}
    \item increases (or preserves) the coalition’s contribution to $o$,
    \item decreases (or preserves) the coalition’s contribution to $c$
      (which cannot hurt feasibility, and only helps if $c\in \mathcal{T}_w$), and
    \item does not increase the contribution to any $t \in \mathcal{T}_w$
          (in Stage~1 only candidates from $\mathcal{O}^\ast$ move upward past $c$,
          and in Stage~2 all candidates passed by $o$ move downward).
\end{itemize}
Thus all inequalities $S_o + \Delta_o \ge B$ and $S_t + \Delta_t \le B-1$
remain satisfied, while the number of misallocated top positions decreases.
A symmetric argument then fixes misallocated bottom-$k'$ positions, starting from a profile where (i) already holds,
so iterating these local fixes yields a manipulation satisfying (i)--(ii).
A full proof is given in Appendix~\ref{sec:ProofLemma1}.
\end{proof}

By Lemma~\ref{lem:canonical}, we may assume without loss of generality that all
meaningful influence is concentrated in the top-$k'$ and bottom-$k'$ positions
of each coalition ballot.
The top-$k'$ positions form the coalition’s \emph{boost resource} for
$\mathcal{O}^\ast$; the bottom-$k'$ positions form its \emph{suppress resource} for
$\mathcal{T}_w$.
Consequently, score levels in the middle positions play no role in determining
feasibility and may be filled arbitrarily.

Crucially, the boost and suppress resources draw from disjoint sets of ballot
positions and affect disjoint sets of candidates.
As a result, the feasibility of boosting the strongest outsiders and the
feasibility of suppressing the weakest winners are independent under positional
scoring rules, interacting only through the common displacement parameter~$k'$.

\subsection{Independent Boost and Suppress Feasibility}
\label{sec:independent}

Given a queried  displacement level~$k'$ and a proposed cutoff~$B$, the
coalition’s requirements decompose according to the two disjoint score pools
identified in Section~\ref{sec:canonical}:  
the $mk'$ score copies occupying the top-$k'$ positions (the \emph{boost resources}) and the $mk'$
score copies occupying the bottom-$k'$ (the \emph{suppress resources}).  
For each outsider $o \in \mathcal{O}^\ast$, the coalition must contribute at least
\[
\Delta_o \;\ge\; b_o := \max\{0,\, B - S_o\},
\]
while for each weak winner $t \in \mathcal{T}_w$, it must ensure
\[
\Delta_t \;\le\; u_t := \max\{0,\, B - 1 - S_t\}.
\]

Under the canonical form guaranteed by Lemma~\ref{lem:canonical}, all boost
contributions must come from the top-$k'$ positions of the coalition ballots,
and all suppress contributions must come from the bottom-$k'$ positions.  
These two positional segments correspond to disjoint subsets of the scoring
vector~$p$.  
Consequently, satisfying the boost requirements and satisfying the suppress
requirements draw on completely separate score resources.  
The two subproblems therefore impose no mutual constraints.

\begin{lemma}[Independent Feasibility]
\label{lem:independent}
Fix $k'$ and a candidate cutoff~$B$, and let $b = (b_o)_{o\in\mathcal{O}^\ast},$ and $u = (u_t)_{t\in\mathcal{T}_w}$ denote the
boost and suppress requirement vectors induced by~$B$.  
Suppose that:
\begin{enumerate}[label=(\arabic*)]
    \item the coalition can allocate its $mk'$ highest scores so that each
    outsider $o \in \mathcal{O}^\ast$ receives at least $b_o$
          \textup{(boost feasibility)}, and
    \item the coalition can allocate its $mk'$ lowest scores so that each weak
    winner $t \in \mathcal{T}_w$ receives at most $u_t$
          \textup{(suppress feasibility)}.
\end{enumerate}
Then the coalition can simultaneously satisfy all requirements induced by the
cutoff~$B$.
\end{lemma}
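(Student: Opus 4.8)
The plan is to exploit the disjointness of the two score pools established by Lemma~\ref{lem:canonical}. First I would fix witnesses: let $\pi^{\mathrm{boost}}$ be a (partial) assignment of the $mk'$ top-$k'$ positional score copies to the candidates of $\mathcal{O}^\ast$ that delivers $\Delta_o \ge b_o$ for every $o$, which exists by hypothesis~(1); and let $\pi^{\mathrm{supp}}$ be an assignment of the $mk'$ bottom-$k'$ score copies to $\mathcal{T}_w$ delivering $\Delta_t \le u_t$ for every $t$, which exists by hypothesis~(2). The key observation is that the first uses only positions $1,\dots,k'$ of each ballot (scores $p_1,\dots,p_{k'}$) and the second uses only positions $x-k'+1,\dots,x$ (scores $p_{x-k'+1},\dots,p_x$); since $k' \le \min\{k,x-k\} \le x/2$, these position ranges are disjoint on every ballot.

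Next I would assemble a single coalition ballot profile. On each ballot $v$, place in the top-$k'$ positions exactly the candidates of $\mathcal{O}^\ast$ in the order dictated by $\pi^{\mathrm{boost}}$, place in the bottom-$k'$ positions exactly the candidates of $\mathcal{T}_w$ in the order dictated by $\pi^{\mathrm{supp}}$, and fill the remaining $x - 2k'$ middle positions arbitrarily with the $x-2k'$ remaining candidates (those in neither $\mathcal{O}^\ast$ nor $\mathcal{T}_w$). This is a legal permutation of $\mathcal{C}$ because the three blocks partition the candidate set and partition the position set. I would then verify that this merged profile realizes the required score increments: a candidate $o\in\mathcal{O}^\ast$ occupies only top-$k'$ positions across all ballots, so its total coalition contribution equals exactly what $\pi^{\mathrm{boost}}$ prescribes, hence $\ge b_o$; similarly each $t\in\mathcal{T}_w$ occupies only bottom-$k'$ positions, so its contribution is exactly what $\pi^{\mathrm{supp}}$ prescribes, hence $\le u_t$. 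Therefore $S_o + \Delta_o \ge S_o + b_o \ge B$ and $S_t + \Delta_t \le S_t + u_t \le B-1$ for all targeted candidates, which is precisely successful displacement at cutoff $B$.

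The only subtlety — and the closest thing to an obstacle — is a minor bookkeeping point rather than a genuine difficulty: one must confirm that the middle block has exactly the right cardinality to absorb precisely the candidates outside $\mathcal{O}^\ast \cup \mathcal{T}_w$ on every ballot, which follows from $|\mathcal{O}^\ast| = |\mathcal{T}_w| = k'$, $\mathcal{O}^\ast \cap \mathcal{T}_w = \emptyset$ (disjoint since $\mathcal{O}^\ast \subseteq \mathcal{O}$ and $\mathcal{T}_w \subseteq \mathcal{T}$), and $|\mathcal{C}| = x \ge 2k'$. I would also note explicitly that the middle positions contribute scores $p_{k'+1},\dots,p_{x-k'}$ only to non-boundary candidates, so they neither add boost to any $o\in\mathcal{O}^\ast$ nor add suppression-violating points to any $t\in\mathcal{T}_w$; this is exactly the ``no mutual constraints'' claim. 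Hence the two feasibility witnesses combine without interference, completing the proof.
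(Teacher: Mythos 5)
Your proof is correct and follows essentially the same route as the paper's: fix the two feasibility witnesses, merge them into a single coalition profile using the disjoint top-$k'$ and bottom-$k'$ position blocks, fill the middle positions arbitrarily, and verify that the two disjoint pools of score copies cannot interfere with each other. The only caveat (present in the paper's own write-up as well) is that your final inequality $S_t + u_t \le B-1$ silently assumes $S_t \le B-1$; when $S_t \ge B$ one has $u_t = 0$ and the strict separation at $B$ fails even though the requirement $\Delta_t \le u_t$ is met, but this does not affect the lemma as stated, whose conclusion is only that the requirement vectors $(b,u)$ can be satisfied simultaneously.
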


\begin{proof}[Proof sketch]
By Lemma~\ref{lem:canonical}, boost contributions to $\mathcal{O}^\ast$ must draw
exclusively from the $mk'$ score copies associated with the top-$k'$ ballot
positions, and suppress contributions to $\mathcal{T}_w$ must draw exclusively
from the $mk'$ score copies associated with the bottom-$k'$ ballot positions.  
Because these two pools correspond to disjoint ballot positions, any feasible
allocation for one pool cannot reduce or interfere with the resources available
to the other, even if some numerical score values coincide.  
Hence boost feasibility and suppress feasibility can always be achieved
together.  
A full proof appears in Appendix~\ref{sec:ProofLemma2}.
\end{proof}

Lemma~\ref{lem:independent} reduces feasibility at cutoff~$B$ to two independent
tests: whether the demand vector~$b$ is achievable using only the top-$k'$ score
pool, and whether the demand vector~$u$ is achievable using only the
bottom-$k'$ score pool.  
Section~\ref{sec:geometry} provides a geometric characterization of each
subproblem, showing that both reduce to checking whether a demand vector is
majorized by an appropriate prefix-capacity profile.

\subsection{Guaranteed Displacement}
\label{sec:guaranteed}

The previous lemmas establish that, for any proposed cutoff~$B$, the boost and
suppress requirements can be satisfied independently using the coalition’s two
disjoint score pools.  These results verify that the coalition can produce the
\emph{required} score changes, but they do not yet imply that \emph{every}
completion of the coalition’s ballots will yield the desired replacement of
$k'$ winners.  
The next theorem closes this gap: strict separation at the cutoff $B$ is sufficient to guarantee displacement
level $k'$---i.e., every completion of the coalition's ballots (and any tie-breaking) yields a final Top-$k$
winner set containing \emph{at least $k'$} outsiders, and hence excluding at least $k'$ honest winners.

\begin{theorem}[Guaranteed displacement at level $k'$]
\label{thm:guaranteed-topk}
Fix a displacement level $k'$ and a cutoff $B$. Let $\mathcal{O}^\ast$ be the $k'$ strongest outsiders and $\mathcal{T}_w$ the $k'$
weakest winners under the honest profile. Suppose the coalition can assign the top-$k'$ and bottom-$k'$
positions of its ballots so as to ensure final scores $F_c = S_c + \Delta_c$ satisfying
\[
F_o \ge B \quad \text{for all } o \in \mathcal{O}^\ast, \qquad
F_t \le B-1 \quad \text{for all } t \in \mathcal{T}_w.
\]
Then, for every completion of the remaining (middle) positions of the coalition's ballots and under any
tie-breaking rule, the final Top-$k$ winner set contains at least $k'$ outsiders. Equivalently, at least $k'$
members of the honest winner set $T$ are excluded from the final Top-$k$.
\end{theorem}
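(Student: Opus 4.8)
The plan is to prove the theorem by a direct pigeonhole argument that uses only the two \emph{boundary} inequalities in the hypothesis, together with the one-unit gap between $B$ and $B-1$. First I would record the key order fact: for every $o\in\mathcal{O}^\ast$ and every $t\in\mathcal{T}_w$ we have $F_o \ge B > B-1 \ge F_t$, hence $F_o > F_t$ \emph{strictly}. Consequently, in the final tally under \emph{any} tie-breaking rule, every candidate of $\mathcal{O}^\ast$ is ranked strictly ahead of every candidate of $\mathcal{T}_w$; the tie-break is never even consulted between these two groups because their scores are never equal. I would also note that $\mathcal{O}^\ast\subseteq\mathcal{O}$ and $\mathcal{T}_w\subseteq\mathcal{T}$ are disjoint, each of size $k'$, and that the final scores of all non-boundary candidates (the remaining honest winners, the remaining outsiders, and the $x-2k'$ middle-block candidates) never enter the argument — which is exactly why the conclusion holds for every completion of the middle positions.

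Next I would argue by contradiction. Let $W$ be the final Top-$k$ winner set (the $k$ candidates of highest final score, ties resolved by the fixed exogenous order), and suppose $|W\cap\mathcal{O}|\le k'-1$. I would then count on both sides of the cutoff. On the outsider side: since $|\mathcal{O}^\ast|=k'$ and $W$ contains at most $k'-1$ outsiders in total, at least one $o^\star\in\mathcal{O}^\ast$ is \emph{excluded} from $W$. On the winner side: $|W\cap\mathcal{T}| = k - |W\cap\mathcal{O}| \ge k-(k'-1) = k-k'+1$, so $W$ omits at most $k-(k-k'+1) = k'-1$ honest winners; since $|\mathcal{T}_w|=k'$, at least one $t^\star\in\mathcal{T}_w$ lies \emph{inside} $W$.

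Now I would derive the contradiction from the pair $(o^\star,t^\star)$. Since $o^\star\notin W$ we have $\mathrm{rank}(o^\star) > k$, while $t^\star\in W$ gives $\mathrm{rank}(t^\star)\le k$, hence $\mathrm{rank}(t^\star) < \mathrm{rank}(o^\star)$. But the order fact above gives $F_{o^\star} > F_{t^\star}$, so under any tie-breaking $o^\star$ is ranked strictly ahead of $t^\star$, i.e.\ $\mathrm{rank}(o^\star) < \mathrm{rank}(t^\star)$ — a contradiction. Therefore $|W\cap\mathcal{O}|\ge k'$, and since $|W|=k$ this forces $|W\cap\mathcal{T}|\le k-k'$, i.e.\ at least $k'$ honest winners are excluded from the final Top-$k$, for every completion of the middle positions and every tie-breaking rule. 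The only delicate point — and the reason the hypothesis demands $F_t\le B-1$ rather than $F_t\le B$ — is ensuring the $\mathcal{O}^\ast$-versus-$\mathcal{T}_w$ comparison is \emph{strict}, so no adversarial tie-break can slip a targeted weak winner above a targeted strong outsider; once that strictness is in place the rest is pure counting and is completely insensitive to how the middle positions are filled.
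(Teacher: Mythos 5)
Your proof is correct and follows essentially the same route as the paper's: the same contradiction setup, the same pigeonhole count showing some $t^\star\in\mathcal{T}_w$ must be selected while some member of $\mathcal{O}^\ast$ is not, and the same use of the strict one-unit gap to make the final comparison immune to tie-breaking. The only cosmetic difference is that you derive the contradiction from a single pair $(o^\star,t^\star)$, whereas the paper concludes that all of $\mathcal{O}^\ast$ must be selected; these are interchangeable.
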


Consequently, the coalition deterministically displaces at least $k'$ honest winners in every completion.
When the scoring vector contains plateaus, the identities of the displaced incumbents (and the entering
outsiders) may depend on tie-breaking; our guarantee is therefore about the displacement \emph{level} $k'$,
not necessarily the exact displaced set.

\begin{proof}[Proof sketch]
Let $\mathcal{W}$ be the final Top-$k$ winner set under an arbitrary completion and tie-breaking. Suppose for
contradiction that $\mathcal{W}$ contains at most $k'-1$ outsiders. Then $\mathcal{W}$ contains at least $k-k'+1$ members of the
honest winner set $\mathcal{T}$. Since $|\mathcal{T} \setminus \mathcal{T}_w| = k-k'$, this forces $\mathcal{W}$ to contain some
$t \in \mathcal{T}_w$.

But every $o \in \mathcal{O}^\ast$ satisfies $F_o \ge B > B-1 \ge F_t$, hence $F_o > F_t$. Under the Top-$k$ rule, a
candidate with strictly higher score than a selected candidate must also be selected (tie-breaking only applies
among equal scores). Therefore all $k'$ candidates in $\mathcal{O}^\ast$ must belong to $\mathcal{W}$, contradicting that
$\mathcal{W}$ had at most $k'-1$ outsiders. Hence $\mathcal{W}$ contains at least $k'$ outsiders, and at least $k'$ honest winners are
displaced. A full proof appears in Appendix~\ref{sec:ProofGuranteed}.
\end{proof}

The next theorem establishes the converse direction: if either the boost or
suppress requirements fails at cutoff~$B$, then displacement at level~$k'$
cannot be achieved.

\begin{theorem}[Impossibility when independent feasibility fails]
\label{thm:impossibility}
Fix a displacement level $k'$ and a cutoff~$B$.
If at least one of the following fails:
\begin{enumerate}[label=(\alph*),leftmargin=1.6em]
    \item the boost requirements for $\mathcal{O}^\ast$ are feasible using the
          coalition’s top-$k'$ score positions, or
    \item the suppress requirements for $\mathcal{T}_w$ are feasible using the
          coalition’s bottom-$k'$ score positions,
\end{enumerate}
then no manipulation by the coalition can achieve displacement level~$k'$ at
cutoff~$B$.
\end{theorem}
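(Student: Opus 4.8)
The plan is to prove the contrapositive: show that whenever \emph{some} manipulation achieves displacement at level $k'$ at cutoff $B$, \emph{both} the boost subproblem (a) and the suppress subproblem (b) are feasible. So suppose $\pi$ is such a manipulation, inducing coalition contributions $\Delta_c$ and final scores $F_c = S_c + \Delta_c$ with $F_o \ge B$ for every $o \in \mathcal{O}^\ast$ and $F_t \le B-1$ for every $t \in \mathcal{T}_w$. The goal is to read off from $\pi$ an explicit boost-feasible allocation of the coalition's top-$k'$ score pool and an explicit suppress-feasible allocation of its bottom-$k'$ pool; once that is done, the contrapositive is immediate.

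First I would put $\pi$ into canonical form. Lemma~\ref{lem:canonical} is stated for feasibility in general, but its proof only rearranges coalition ballots by adjacent swaps that preserve every separation inequality $S_o + \Delta_o \ge B$ and $S_t + \Delta_t \le B-1$ at a \emph{fixed} cutoff; applied to $\pi$ it therefore yields a canonical successful manipulation at the \emph{same} cutoff $B$. Hence we may assume every top-$k'$ position on every coalition ballot is occupied by a member of $\mathcal{O}^\ast$ and every bottom-$k'$ position by a member of $\mathcal{T}_w$. Since $|\mathcal{O}^\ast| = |\mathcal{T}_w| = k'$, on each of the $m$ coalition ballots the top-$k'$ positions form a permutation of $\mathcal{O}^\ast$ and the bottom-$k'$ positions a permutation of $\mathcal{T}_w$; in particular no member of $\mathcal{O}^\ast$ ever sits in a middle or bottom position, and no member of $\mathcal{T}_w$ ever sits in a middle or top position.

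Next I would extract the two witnesses. For $o \in \mathcal{O}^\ast$, the contribution $\Delta_o$ is the sum over the $m$ ballots of the positional score at $o$'s (top-$k'$) slot, so $(\Delta_o)_{o\in\mathcal{O}^\ast}$ is an allocation of the coalition's $mk'$ highest scores---namely $m$ copies each of $p_1,\dots,p_{k'}$---among the outsiders; and $F_o \ge B$ together with $\Delta_o \ge 0$ gives $\Delta_o \ge \max\{0,\,B-S_o\} = b_o$, which is exactly condition (a). Symmetrically, $(\Delta_t)_{t\in\mathcal{T}_w}$ is an allocation of the coalition's $mk'$ lowest scores among the weak winners, and $F_t \le B-1$ gives $\Delta_t \le B-1-S_t$; combined with $\Delta_t \ge 0$ this forces $B-1-S_t \ge 0$, hence $\Delta_t \le \max\{0,\,B-1-S_t\} = u_t$, which is exactly condition (b). Because the top-$k'$ and bottom-$k'$ ballot positions are disjoint segments, these allocations draw on disjoint score pools and are each valid on their own---the same disjointness exploited in Lemma~\ref{lem:independent}. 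Thus both (a) and (b) hold, completing the contrapositive and hence the theorem.

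I expect the main obstacle to be the bookkeeping in the canonicalization step rather than anything deep: one must confirm that after the swaps \emph{all} coalition score mass reaching $\mathcal{O}^\ast$ genuinely flows through the top-$k'$ pool (and likewise $\mathcal{T}_w$ through the bottom-$k'$ pool), so the extracted families are honest witnesses for the two subproblems \emph{as defined} in Lemma~\ref{lem:independent}; and one must handle the $\max\{0,\cdot\}$ truncations in $b_o$ and $u_t$, in particular noting that a successful manipulation already forces $S_t \le B-1$ for every weak winner, so the suppress witness is well defined even when $u_t = 0$. All of this is a routine formalization of the argument above, parallel in structure to the proofs of Lemmas~\ref{lem:canonical} and~\ref{lem:independent}.
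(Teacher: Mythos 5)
Your proposal is correct and follows essentially the same route as the paper's proof: both reduce to canonical form via Lemma~\ref{lem:canonical} (noting that the swaps preserve the separation inequalities at the fixed cutoff $B$) and then observe that the canonical manipulation's contributions from the disjoint top-$k'$ and bottom-$k'$ pools are exactly witnesses for the boost and suppress subproblems; the paper merely phrases this as a contradiction ("infeasible $\Rightarrow$ no manipulation") rather than as the contrapositive. Your extra care about the $\max\{0,\cdot\}$ truncations and the fact that success forces $S_t \le B-1$ matches the paper's own handling in the appendix.
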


\begin{proof}[Proof sketch]
By Lemma~\ref{lem:canonical}, we may without loss of generality assume that 
all boost contributions to $\mathcal{O}^\ast$ come from the top-$k'$ positional 
scores, and all suppress contributions to $\mathcal{T}_w$ come from the 
bottom-$k'$ scores; these score pools are disjoint and cannot substitute for 
each other.

If the boost requirements are not feasible using the top-$k'$ scores, then some
$o \in \mathcal{O}^\ast$ must satisfy $F_o < B$, making separation at~$B$
impossible.  
If the suppress requirements are not feasible using the bottom-$k'$ scores,
then some $t \in \mathcal{T}_w$ must satisfy $F_t \ge B$, again preventing
strict separation at~$B$.

Thus no manipulation can achieve displacement at level~$k'$ for this cutoff.
A full proof appears in Appendix~\ref{sec:ProofImpossibility}.
\end{proof}

\begin{corollary}[Feasibility characterization]
Displacement at level~$k'$ is achievable \textbf{if and only if} there exists a cutoff
$B$ such that both the boost and suppress requirements are feasible at $(k',B)$.
\end{corollary}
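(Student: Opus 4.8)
The plan is to obtain this corollary as a direct packaging of the three preceding results, treating the two implications separately. For the ``if'' direction, suppose there is a cutoff $B$ at which both the boost requirements for $\mathcal{O}^\ast$ and the suppress requirements for $\mathcal{T}_w$ are feasible. I would first invoke Lemma~\ref{lem:independent} to conclude that the coalition can, in a single manipulation, simultaneously meet all of these requirements—i.e., realize final scores $F_c = S_c + \Delta_c$ with $F_o \ge B$ for every $o \in \mathcal{O}^\ast$ and $F_t \le B-1$ for every $t \in \mathcal{T}_w$. Lemma~\ref{lem:independent} applies precisely because, by Lemma~\ref{lem:canonical}, the two requirement systems draw on the disjoint top-$k'$ and bottom-$k'$ score pools and therefore cannot interfere. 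I would then feed this manipulation into Theorem~\ref{thm:guaranteed-topk}, which certifies that in every completion of the middle positions and under any tie-breaking the final Top-$k$ set contains at least $k'$ outsiders, so at least $k'$ honest winners are displaced. Hence displacement at level $k'$ is achievable.

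For the ``only if'' direction I would argue by contraposition using Theorem~\ref{thm:impossibility}. Fix a manipulation that achieves displacement at level $k'$. Unpacking the separation formulation of Section~\ref{sec:model}, achieving displacement means there is a separating cutoff score $B$ (e.g., the post-manipulation Top-$k$ threshold, adjusted by one unit to produce the strict gap) at which the targeted outsiders finish at or above $B$ and the targeted weak winners finish at or below $B-1$; here I would invoke the canonicity observation from Section~\ref{sec:model}—since $\mathcal{O}^\ast$ collects the $k'$ outsiders with the smallest boost requirements and $\mathcal{T}_w$ the $k'$ winners with the largest suppression slack, an exchange argument lets us take the separated sets to be exactly $\mathcal{O}^\ast$ and $\mathcal{T}_w$. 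Given such a $B$, Theorem~\ref{thm:impossibility} forces both the boost subproblem for $\mathcal{O}^\ast$ and the suppress subproblem for $\mathcal{T}_w$ to be feasible at $(k',B)$: if either failed, no manipulation—in particular not this one—could have separated the two sets at $B$. So the desired cutoff exists.

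The hard part is entirely in the ``only if'' direction: extracting a clean integer witness cutoff $B$ from an arbitrary achieving manipulation and justifying the reduction to the canonical boundary sets. Concretely, one must verify that when $k'$ outsiders land in the final Top-$k$ and $k'$ honest winners land outside it, the final scores admit an integer $B$ realizing the strict one-unit gap \emph{even when candidates tie at the Top-$k$ boundary}—this is where integrality of $p$ and the convention of not counting tie-break-dependent manipulations are used—and that replacing whatever sets were actually separated by the minimal-demand sets $\mathcal{O}^\ast$ and $\mathcal{T}_w$ preserves separability at the same $B$. Once this bookkeeping is settled, both directions reduce to citing Lemmas~\ref{lem:canonical}--\ref{lem:independent} and Theorems~\ref{thm:guaranteed-topk}--\ref{thm:impossibility} essentially verbatim.
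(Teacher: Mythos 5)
Your proposal is correct and follows essentially the same route the paper intends: the ``if'' direction is Lemma~\ref{lem:independent} (combining the two disjoint-pool allocations) followed by Theorem~\ref{thm:guaranteed-topk}, and the ``only if'' direction is the contrapositive of Theorem~\ref{thm:impossibility}. The ``hard part'' you flag in the converse is largely dissolved by the paper's definitional setup: Section~\ref{sec:model} \emph{defines} achievable displacement at level $k'$ via the existence of an integer cutoff $B$ strictly separating the canonical sets $\mathcal{O}^\ast$ and $\mathcal{T}_w$ (with the exchange-argument reduction to these boundary sets asserted there as part of the model), so no additional extraction of $B$ or re-canonicalization is needed beyond citing Theorem~\ref{thm:impossibility}.
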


Consequently, feasibility of the boost and suppress systems completely
determines the coalition’s ability to achieve displacement at level~$k'$:
the coalition can successfully and robustly guarantee the replacement of at
least $k'$ winners at cutoff~$B$ if and only if both subproblems are feasible
at that cutoff.

\subsection{Feasible Cutoff Interval and Monotonicity}
\label{sec:interval}

The final structural ingredient is the monotone behavior of boost and suppress
feasibility with respect to the cutoff~$B$.  
As $B$ increases, outsiders require more additional score while weak winners
are permitted more score; as $B$ decreases, the reverse holds.  
This monotonic relationship ensures that feasible cutoffs form a single
contiguous interval.

\begin{lemma}[Monotonicity of boost and suppress feasibility]
\label{lem:monotonicity}
Fix a displacement level~$k'$. Then:
\begin{itemize}
    \item \textbf{Boost feasibility is monotone downward:}  
    if the boost requirements are feasible at some cutoff~$B$, then they remain
    feasible at every cutoff $B' \le B$.

    \item \textbf{Suppress feasibility is monotone upward:}  
    if the suppress requirements are feasible at some cutoff~$B$, then they 
    remain feasible at every cutoff $B' \ge B$.
\end{itemize}
\end{lemma}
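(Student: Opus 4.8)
The plan is to prove each monotonicity direction directly from the definitions of the boost and suppress requirement vectors $b(B)$ and $u(B)$, exploiting that lowering $B$ only relaxes the boost demands while raising $B$ only relaxes the suppress allowances, and that the coalition's score pools do not depend on $B$ at all. Fix $k'$ throughout, so the top-$k'$ and bottom-$k'$ score pools (the $mk'$ copies of $p_1,\dots,p_{k'}$ and the $mk'$ copies of $p_{x-k'+1},\dots,p_x$) are fixed objects, and recall $b_o(B)=\max\{0,B-S_o\}$ and $u_t(B)=\max\{0,B-1-S_t\}$.

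For the boost direction, I would first observe that $B\mapsto b_o(B)$ is nondecreasing in $B$ for every outsider $o\in\mathcal{O}^\ast$, since it is the pointwise maximum of the constant $0$ and the increasing affine function $B-S_o$. Hence for $B'\le B$ we have $b_o(B')\le b_o(B)$ coordinatewise. Now suppose the boost requirements are feasible at $B$: there is an assignment of the top-$k'$ score pool to $\mathcal{O}^\ast$ with each $o$ receiving total contribution $\Delta_o\ge b_o(B)\ge b_o(B')$. The very same assignment witnesses boost feasibility at $B'$, since the required lower bounds have only weakened and the available resource is identical. This is essentially a one-line monotonicity-of-demands argument.

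For the suppress direction, symmetrically, $B\mapsto u_t(B)=\max\{0,B-1-S_t\}$ is nondecreasing in $B$ for every $t\in\mathcal{T}_w$, so for $B'\ge B$ we have $u_t(B')\ge u_t(B)$. If the suppress requirements are feasible at $B$ — i.e.\ there is an assignment of the bottom-$k'$ score pool to $\mathcal{T}_w$ with each $t$ receiving $\Delta_t\le u_t(B)\le u_t(B')$ — then the same assignment certifies suppress feasibility at $B'$, since the upper bounds have only loosened. One subtlety worth stating explicitly is that suppress feasibility genuinely requires \emph{distributing} all $mk'$ bottom scores among the $k'$ candidates of $\mathcal{T}_w$ (each bottom-$k'$ position on each coalition ballot is occupied, in the canonical form, by some $t\in\mathcal{T}_w$), so the constraint is on an exact sum per ballot, not a free choice to contribute less; but since raising $B$ only raises the per-candidate caps $u_t$ while the total mass $m\sum_{r=x-k'+1}^{x}p_r$ stays fixed, any valid distribution at $B$ is still valid at $B'\ge B$.

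I do not expect a genuine obstacle here; the result is a direct consequence of the monotonicity of $\max\{0,\cdot\}$ composed with affine functions of $B$, together with the $B$-independence of the coalition's resource pools established in Section~\ref{sec:canonical}. The only point requiring a little care is to phrase ``feasibility'' consistently with the canonical-form allocation from Lemma~\ref{lem:canonical} — namely that boost feasibility means ``$\exists$ an allocation of the fixed top-$k'$ pool meeting the $b_o(\cdot)$ lower bounds'' and suppress feasibility means ``$\exists$ an allocation of the fixed bottom-$k'$ pool respecting the $u_t(\cdot)$ upper bounds'' — so that ``the same witness still works'' is literally true. Combining the two halves immediately yields that the set of cutoffs at which \emph{both} subproblems are feasible, if nonempty, is the intersection of a downward-closed ray of boost-feasible cutoffs with an upward-closed ray of suppress-feasible cutoffs, hence a single contiguous interval, which is the structural fact invoked for the binary search in Section~\ref{sec:algorithm}.
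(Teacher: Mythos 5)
Your proof is correct and follows essentially the same route as the paper's: both arguments observe that $b_o(\cdot)$ and $u_t(\cdot)$ are nondecreasing compositions of $\max\{0,\cdot\}$ with affine functions of $B$, so the same witness allocation carries over when the demands weaken. Your extra remark that the suppress side distributes a fixed total mass against only-loosening caps is a sensible clarification but does not change the argument.
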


\begin{proof}[Proof sketch]
Raising the cutoff~$B$ requires strictly more score for every outsider, but
permits strictly more score for every weak winner.  
Thus boost feasibility can only become harder as $B$ increases, while suppress
feasibility can only become easier.  
Lowering $B$ has the symmetric effect.  
A detailed proof appears in Appendix~\ref{sec:ProofMonotonicity}.
\end{proof}

Combining these two monotonicity directions yields a simple and useful
interval structure.

\begin{corollary}[Feasible cutoff interval]
\label{cor:interval}
For any displacement level $k'$, the set of cutoffs~$B$ that satisfy \emph{both}
boost and suppress feasibility forms a (possibly empty) integer interval
\[
[B_{\min}(k'),\, B_{\max}(k')].
\]
The displacement level $k'$ is feasible if and only if the integer interval
$[B_{\min}(k'),\,B_{\max}(k')]$ is nonempty.
\end{corollary}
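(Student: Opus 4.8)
The plan is to assemble the corollary directly from Lemma~\ref{lem:monotonicity} together with the preceding Feasibility characterization corollary, treating the cutoff $B$ as ranging over the integers. First I would record the two one‑sided structures implied by monotonicity: the set $\mathcal{B}_{\mathrm{b}}(k') := \{B \in \mathbb{Z} : \text{boost feasible at } (k',B)\}$ is downward closed, and $\mathcal{B}_{\mathrm{s}}(k') := \{B \in \mathbb{Z} : \text{suppress feasible at } (k',B)\}$ is upward closed. Both are nonempty: when $B$ is small enough every boost requirement $b_o = \max\{0, B-S_o\}$ vanishes, so the trivial (no‑boost) allocation is boost‑feasible; symmetrically, when $B$ is large enough every suppression allowance $u_t = \max\{0, B-1-S_t\}$ is at least the finite maximum contribution the bottom‑$k'$ positions could possibly force onto a weak winner, so suppression is feasible.

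Next I would argue both one‑sided sets are bounded on their ``open'' ends, so the endpoints are finite. The coalition's top‑$k'$ positions supply a fixed finite total score mass $m\sum_{j=1}^{k'} p_j$, so once $\sum_{o\in\mathcal{O}^\ast}(B-S_o)$ exceeds this capacity no allocation can meet all boost demands; hence $\mathcal{B}_{\mathrm{b}}(k')$ is bounded above and we may set $B_{\max}(k') := \max \mathcal{B}_{\mathrm{b}}(k')$, obtaining $\mathcal{B}_{\mathrm{b}}(k') = \{B \in \mathbb{Z} : B \le B_{\max}(k')\}$ by downward closure. The mirror argument on the bottom‑$k'$ positions shows $\mathcal{B}_{\mathrm{s}}(k')$ is bounded below, and with $B_{\min}(k') := \min \mathcal{B}_{\mathrm{s}}(k')$ we get $\mathcal{B}_{\mathrm{s}}(k') = \{B \in \mathbb{Z} : B \ge B_{\min}(k')\}$ by upward closure. (In the degenerate case where the relevant bottom scores are all zero, the forced contribution is $0$, suppression is feasible for every $B$, and one takes $B_{\min}(k') = -\infty$; the argument below goes through verbatim with a half‑line in place of a bounded interval.)

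Then the conclusion is immediate: the set of cutoffs feasible for \emph{both} subproblems is $\mathcal{B}_{\mathrm{b}}(k') \cap \mathcal{B}_{\mathrm{s}}(k') = \{B \in \mathbb{Z} : B_{\min}(k') \le B \le B_{\max}(k')\}$, which is exactly the integer interval $[B_{\min}(k'),\, B_{\max}(k')]$ and is nonempty precisely when $B_{\min}(k') \le B_{\max}(k')$. Finally, by the Feasibility characterization corollary, displacement at level $k'$ is achievable if and only if some cutoff $B$ makes both subproblems feasible, i.e.\ if and only if this intersection is nonempty, i.e.\ if and only if the interval $[B_{\min}(k'),\, B_{\max}(k')]$ is nonempty.

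The only step requiring real care — and the mild obstacle I anticipate — is the finiteness and nonemptiness bookkeeping at the two ``open'' ends: pinning down the exact score‑supply bounds that cap $B_{\max}(k')$ from above and $B_{\min}(k')$ from below, and gracefully handling degenerate scoring vectors (e.g.\ trailing zeros) where one endpoint escapes to $\pm\infty$. Everything substantive — the directionality that makes each one‑sided set a half‑line, and the equivalence with achievability — is inherited directly from Lemma~\ref{lem:monotonicity} and the preceding corollary, so no new combinatorial argument is needed.
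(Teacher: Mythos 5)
Your proposal is correct and follows essentially the same route as the paper, which obtains the corollary directly from Lemma~\ref{lem:monotonicity} (boost feasibility downward closed in $B$, suppress feasibility upward closed) intersected into a single integer interval, combined with the feasibility characterization; your extra bookkeeping on endpoint finiteness is a sound elaboration of what the paper leaves implicit. The degenerate case you flag ($B_{\min}=-\infty$ when the bottom scores are all zero) does not actually arise under the paper's reading of suppression feasibility, which requires $S_t+\Delta_t\le B-1$ and hence forces $B\ge S_t+1$ even when $\Delta_t=0$, so $B_{\min}(k')$ is always finite.
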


This interval structure is a key component of the search procedures in
Section~\ref{sec:algorithm}.  
For each fixed $k'$, an inner binary search over~$B$ identifies the feasible
cutoff interval, and an outer search over~$k'$ determines the maximum achievable
displacement $k^\ast$ for which the feasible interval is nonempty.

\paragraph{Summary.}
At a fixed displacement level~$k'$, feasibility reduces to two independent
tasks—boosting the strongest outsiders and suppressing the weakest winners—
each of which is tested via ordered prefix-majorization inequalities
(Section~\ref{sec:geometry}).  
The monotone dependence on the cutoff~$B$ ensures that feasible cutoffs form a
single contiguous interval, enabling the efficient dual binary-search algorithm
developed in Section~\ref{sec:algorithm}.

%% =========================================================
\section{Geometric and Lattice Characterization}
\label{sec:geometry}
%% =========================================================

\subsection{Overview}
\label{sec:geometry-overview}

This section develops the geometric machinery that underlies the
boost and suppress feasibility tests from Section~\ref{sec:decomposition}.
Although boosting and suppressing are posed as combinatorial score-allocation
problems, the coalition’s aggregate capabilities have a clean geometric
structure: each ballot contributes a permutahedron, and the coalition’s total
feasible region is the Minkowski sum of these permutahedra.  Understanding this
region is the key to determining when, for a fixed queried displacement level $k'$, a $k'$-dimensional demand vector
$d$---standing for either $b(B)$ or
$u(B)$---can be realized.

Our analysis proceeds in two layers.

\paragraph{(1) Continuous layer: the Block--HLP envelope.}
We first derive the convex hull of the coalition’s feasible region by expressing
each ballot’s score-allocation possibilities as a permutahedron and taking their
Minkowski sum (Section~\ref{sec:geometry-convex}).  The Block--HLP theorem (Section~\ref{sec:block-hlp}) shows that
this continuous envelope is characterized exactly by a family of blockwise
Hardy--Littlewood--Pólya prefix inequalities.  These inequalities hold for
\emph{every} positional scoring rule, providing a universal necessary condition
for boosting and suppressing.  This continuous envelope serves as the geometric
backbone for the discrete characterization developed later.

\paragraph{(2) Discrete layer: exact AP-lattice structure.}
When all colluding ballots follow AP ladders with a
\emph{common step size}~$g$, the integer-feasible points inside the Block--HLP
envelope admit an exact description.  In this common-step regime, an aggregate
vector is realizable if and only if it satisfies (i) the
Block--HLP prefix inequalities and (ii) a single coordinatewise congruence
condition modulo~$g$.  Thus the realizable
region is a \emph{majorization-and-congruence lattice}, giving a clean prefix-and-lattice
structure for integer score redistributions.  A constructive packing argument
then yields a simple polynomial-time oracle for realizability.  By contrast,
when step sizes differ across ballots, integer-feasible points must lie in a
more intricate AP semigroup, and the prefix-and-congruence characterization is
no longer sufficient—a distinction discussed later in Section~\ref{sec:geometry-ap}.

\paragraph{Using the geometric characterization for boosting and suppressing.}
After characterizing the realizable region itself, we apply the geometry to the
boost and suppress subproblems. These are not simple membership tests.
Boosting requires checking whether there exists a realizable aggregate vector
that dominates the boost demand vector $b(B)$
componentwise—that is, whether the coalition has enough high-score capacity to
meet all required increases.  
Suppression is the exact symmetric condition: we must check whether there
exists a realizable aggregate vector that is componentwise no larger than the
suppression allowance vector $u(B)$—i.e., whether the
coalition can keep each weak winner’s score below its permitted threshold.
Although the dominance directions differ, both checks ultimately rely on the
same geometric description of the realizable region.

\medskip
The remainder of this section proceeds as follows.
We first characterize the coalition’s convex envelope via the
Minkowski sum of ballot-wise permutahedra
(Section~\ref{sec:geometry-convex}), and then derive its
facet description through the Block--HLP prefix inequalities
(Section~\ref{sec:block-hlp}).  
We next refine this continuous region to obtain an exact discrete
characterization for AP scoring rules
(Section~\ref{sec:geometry-ap}), and finally use these structural
tools to develop the boost and suppress feasibility oracles
(Section~\ref{subsec:AP-feasible-demands}).

\subsection{The Minkowski-Sum Convex Envelope}
\label{sec:geometry-convex}

To determine whether boosting or suppressing is feasible at a fixed cutoff~$B$,
we must understand the set of all $k'$-dimensional score vectors that the $m$
colluding voters can collectively assign to the $k'$ designated targets.  While
each ballot contributes scores in a discrete and rank-constrained way, the
continuous relaxation of these constraints has a clean geometric structure:
each ballot induces a permutahedron—the convex hull of all permutations of its
score vector—and the coalition’s continuous feasible region is the Minkowski
sum of these permutahedra.  This subsection characterizes that convex region
and shows that it is governed entirely by Hardy–Littlewood–Pólya (HLP)
prefix-sum inequalities.

\paragraph{Ballot-level structure.}
Fix a displacement level $k'$ and its corresponding set of $k'$ targets (the
strongest outsiders for boosting or the weakest winners for suppression).
For each colluding ballot $v \in [m]$, let
$r^{(v)} = (r^{(v)}_1, \dots, r^{(v)}_{k'})$
denote the multiset of positional scores that ballot~$v$ assigns to these
targets, written in nonincreasing order
$r^{(v)}_1 \ge \cdots \ge r^{(v)}_{k'}$
(each $r^{(v)}_i \in \{p_1,\dots,p_x\}$).
Any feasible redistribution of these scores corresponds to applying a 
permutation $\sigma \in S_{k'}$ to the coordinates of $r^{(v)}$, where 
$S_{k'}$ is the set of all permutations of $\{1,\dots,k'\}$.  
Each such $\sigma$ acts by
\[
\sigma(r^{(v)}) := (r^{(v)}_{\sigma(1)},\dots,r^{(v)}_{\sigma(k')}).
\]
For clarity, we distinguish between the discrete set of score permutations and
its convex hull.  Define
\[
  \mathcal{P}(r^{(v)}) := \{\, \sigma(r^{(v)}) : \sigma \in S_{k'} \,\}
  \qquad\text{and}\qquad
  \Pi(r^{(v)}) := \mathrm{conv}(\mathcal{P}(r^{(v)})),
\]
so $\Pi(r^{(v)})$ is the permutahedron generated by $r^{(v)}$.

This polytope $\Pi(r^{(v)})$ captures exactly the continuous 
score-allocation possibilities of ballot~$v$ to the $k'$ targets.

\paragraph{A running example ($k'=3$, $m=2$).}
To anchor the geometry, consider two colluding ballots.
Suppose the highest three scores on each ballot are
\[
\text{Ballot 1: } (9,5,1),
\qquad
\text{Ballot 2: } (10,6,2).
\]
Thus $r^{(1)} = (9,5,1)$ and $r^{(2)} = (10,6,2)$.

\emph{Permutahedron for Ballot~1.}
The six permutations of $(9,5,1)$,
\[
(9,5,1),\ (9,1,5),\ (5,9,1),\ (5,1,9),\ (1,9,5),\ (1,5,9),
\]
describe every way ballot~1 can distribute its three scores across the three
targets.  Their convex hull, $\Pi(r^{(1)})$, is a 2-dimensional permutahedron
(a hexagon) embedded in $\mathbb{R}^3$; the same holds for $\Pi(r^{(2)})$.

\medskip
The coalition’s continuous feasible region is the Minkowski sum
\[
\Pi(r^{(1)}) + \Pi(r^{(2)}),
\]
a centrally symmetric 2-dimensional polytope in $\mathbb{R}^3$ whose shape is
fully determined by the prefix sums of the sorted vectors $(9,5,1)$ and
$(10,6,2)$.  
The next subsection shows that this Minkowski sum is exactly captured by a
family of Block--HLP prefix-sum inequalities applied to these sorted vectors.

\paragraph{Block structure induced by the scoring vector.}
Under any positional scoring rule, each ballot contributes exactly one copy of
each score $p_1,\dots,p_x$.  For the $k'$ designated targets, ballot~$v$ can
assign them any $k'$ scores drawn from this multiset, subject only to the
ordering constraint imposed by the ranking.  When these $k'$ scores are written
in nonincreasing order as
$r^{(v)}_1 \ge \cdots \ge r^{(v)}_{k'}$,
the quantity
\[
R^{(v)}_t := \sum_{i=1}^t r^{(v)}_i, \qquad t = 1,\dots,k',
\]
captures the maximum total score ballot~$v$ can give to any chosen set of $t$
targets.  We refer to $R^{(v)}$ as the \emph{block of prefix capacities} for
ballot~$v$.

\paragraph{The coalition’s convex envelope.}
Across the $m$ colluding ballots, the coalition’s continuous feasible region is
the Minkowski sum
\[
\mathcal{P}_{\mathrm{conv}}
:= \sum_{v=1}^m \Pi(r^{(v)})
= \bigl\{\, y = \sum_{v=1}^m y^{(v)} : y^{(v)} \in \Pi(r^{(v)}) \,\bigr\}.
\]
Every vector in $\mathcal{P}_{\mathrm{conv}}$ is a continuous aggregate of
score-changes on the $k'$ targets; conversely, every continuous score-change
vector feasible for the coalition lies in this Minkowski sum.

\paragraph{Running example (continued).}
Consider again the two score ladders
\[
r^{(1)} = (9,5,1), \qquad r^{(2)} = (10,6,2),
\]
with prefix capacities
\[
R^{(1)} = (9,14,15), \qquad R^{(2)} = (10,16,18).
\]
The coalition’s aggregate prefix capacities are therefore
\[
R = R^{(1)} + R^{(2)} = (19,30,33),
\]
so the total available mass is
\[
T = R_3 = 33.
\]

For any ballot $v$, the continuous feasible region $\Pi(r^{(v)})$ is the
permutahedron
\[
  \Pi(r^{(v)}) = \operatorname{conv}(\mathcal{P}(r^{(v)})).
\]
This permutahedron admits the following majorization description:
\[
\Pi(r^{(v)}) =
\bigl\{
   y^{(v)} \in \mathbb{R}^3 :
   y^{(v)}_1 \ge y^{(v)}_2 \ge y^{(v)}_3,\;
   \sum_{i=1}^t y^{(v)}_i \le R^{(v)}_t\ (t=1,2),\;
   \sum_{i=1}^3 y^{(v)}_i = R^{(v)}_3
\bigr\}.
\]
In particular, every point in $\Pi(r^{(1)})$ has total sum $15$ and every point
in $\Pi(r^{(2)})$ has total sum $18$.

\medskip
The coalition’s continuous feasible region is the Minkowski sum
\[
\mathcal{P}_{\mathrm{conv}}
   = \Pi(r^{(1)}) + \Pi(r^{(2)})
   = \bigl\{\, y^{(1)} + y^{(2)} :
               y^{(1)} \in \Pi(r^{(1)}),\
               y^{(2)} \in \Pi(r^{(2)}) \,\bigr\}.
\]
Because Minkowski sums preserve majorization inequalities, every
$y \in \mathcal{P}_{\mathrm{conv}}$ is sorted and satisfies
\[
\sum_{i=1}^t y_i \le R_t \quad (t=1,2),
\qquad
\sum_{i=1}^3 y_i = R_3 = 33.
\]
Thus $\mathcal{P}_{\mathrm{conv}}$ is a two-dimensional polygon embedded in the
hyperplane
\[
\{\, y \in \mathbb{R}^3 : y_1 \ge y_2 \ge y_3,\ \sum_i y_i = 33 \,\},
\]
cut out by the prefix constraints
$\sum_{i=1}^t y_i \le R_t$ for $t=1,2$.

Geometrically, the coalition’s feasible region is therefore a nondegenerate
two-dimensional polytope, not a single line segment.  In general, the
Block--HLP conditions define a higher-dimensional polygonal envelope, and only
when additional structural constraints (such as common-step AP scoring,
introduced later) are imposed do the \emph{integer-feasible points} collapse
onto a thin lattice embedded within this envelope.

\paragraph{Prefix characterization.}
A classical fact in majorization theory is that permutahedra—and therefore any
Minkowski sum of permutahedra—are fully characterized by inequalities on the
prefix sums of their sorted coordinates.  In our setting, the prefix-capacity
vectors $R^{(v)}$ play the role of the fundamental building blocks: their sum
determines the supporting hyperplanes of $\mathcal{P}_{\mathrm{conv}}$.
The next theorem \emph{restates} this classical majorization characterization
in the blockwise prefix-capacity notation used by our displacement oracle.

% =========================================================
\subsection{Block--HLP Prefix Inequalities}
\label{sec:block-hlp}
% =========================================================

We now give a complete description of the coalition’s continuous feasible
region $\mathcal{P}_{\mathrm{conv}} = \sum_{v=1}^m \Pi(r^{(v)})$, the Minkowski
sum of the ballot-wise permutahedra introduced in
Section~\ref{sec:geometry-convex}.  
In the positional-scoring setting, each ballot’s permutahedron has a simple
block structure determined by its prefix capacities, and summing these blocks
yields the exact supporting hyperplanes of the coalition’s convex envelope.
The resulting \emph{Block--HLP prefix inequalities} give a sharp and complete
characterization of $\mathcal{P}_{\mathrm{conv}}$.

The coalition’s achievable aggregate score vectors form a Minkowski sum of
permutahedra, yielding a symmetric polymatroid base polytope; because this
region is invariant under permutation of coordinates, all subset constraints
reduce to cardinality-based prefix constraints.

\paragraph{Notation.}
For any vector $y \in \mathbb{R}^{k'}$, let $y^\downarrow$ denote its
nonincreasing rearrangement, and let
\[
R_t := \sum_{v=1}^m R^{(v)}_t
\qquad (t = 1,\dots,k')
\]
denote the coalition’s total prefix capacities, where
$R^{(v)}_t = r^{(v)}_1 + \cdots + r^{(v)}_t$ was defined in
Section~\ref{sec:geometry-convex}.

\begin{theorem}[Block--HLP Convex Envelope (classical majorization form)]
\label{thm:block-hlp}
Fix $m,k' \in \mathbb{N}$ and let 
$R_t$ $(t=1,\dots,k')$ denote the coalition’s aggregate prefix capacities,
with total capacity $R_{k'}$.

A vector $y \in \mathbb{R}^{k'}$ lies in the coalition’s convex envelope
$\mathcal{P}_{\mathrm{conv}}$ if and only if it satisfies:
\begin{enumerate}[label=(\Alph*)]
\item \textbf{Prefix-sum bounds:}  
      \[\sum_{i=1}^t y_i^\downarrow \le R_t 
         \quad \text{for all } t=1,\dots,k'-1.\]

\item \textbf{Total sum:}  
      \[
\sum_{i=1}^{k'} y_i \;=\; R_{k'}.
\]
\end{enumerate}
\end{theorem}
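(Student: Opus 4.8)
The plan is to prove the two inclusions of the claimed equality separately, taking the classical majorization description of a \emph{single} permutahedron (already recalled in the running example of Section~\ref{sec:geometry-convex}) as the base case. Concretely, for each ballot $v$, Rado's theorem states that $y^{(v)} \in \Pi(r^{(v)})$ if and only if $\sum_{i=1}^{t} (y^{(v)})^\downarrow_i \le R^{(v)}_t$ for $t=1,\dots,k'-1$ and $\sum_{i=1}^{k'} y^{(v)}_i = R^{(v)}_{k'}$; equivalently, $\Pi(r^{(v)})$ is the base polytope of the symmetric polymatroid on ground set $[k']$ whose rank function is the cardinality-based map $\rho^{(v)}(S) := R^{(v)}_{|S|}$, with submodularity of $\rho^{(v)}$ equivalent to the nonincreasing-step property $r^{(v)}_1 \ge r^{(v)}_2 \ge \cdots \ge r^{(v)}_{k'}$ guaranteed by the positional-scoring ordering.

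For the forward inclusion $\mathcal{P}_{\mathrm{conv}} \subseteq \{\text{(A)}\wedge\text{(B)}\}$, I would use subadditivity of the ``sum of the $t$ largest coordinates'' functional $f_t(y) := \sum_{i=1}^{t} y^\downarrow_i = \max_{|I| = t}\sum_{i \in I} y_i$. Being a pointwise maximum of linear functionals, $f_t$ is convex and positively homogeneous, hence subadditive, so $f_t\!\bigl(\sum_{v=1}^{m} y^{(v)}\bigr) \le \sum_{v=1}^{m} f_t(y^{(v)})$. Applying this to any decomposition with $y^{(v)} \in \Pi(r^{(v)})$ and invoking the single-ballot bound of Step~1 gives $f_t(y) \le \sum_{v} R^{(v)}_t = R_t$ for every $t \le k'-1$, while the total-sum coordinate is exactly additive: $\sum_i y_i = \sum_v R^{(v)}_{k'} = R_{k'}$. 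Thus every point of the Minkowski sum satisfies (A) and (B).

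The substantive direction is $\{\text{(A)}\wedge\text{(B)}\} \subseteq \mathcal{P}_{\mathrm{conv}}$: given $y$ satisfying the prefix bounds and the total-sum equality, I must exhibit a decomposition $y = \sum_{v=1}^{m} y^{(v)}$ with $y^{(v)} \in \Pi(r^{(v)})$. Since both the feasible set in the statement and $\mathcal{P}_{\mathrm{conv}}$ are invariant under coordinate permutations, it suffices to treat sorted $y$. The cleanest route is Edmonds' theorem that the Minkowski sum of polymatroid base polytopes is the base polytope of the polymatroid whose rank function is the sum of the individual rank functions; since each $\Pi(r^{(v)})$ is the base polytope of $\rho^{(v)}$, the sum has rank function $\rho(S) = \sum_v R^{(v)}_{|S|} = R_{|S|}$, still symmetric and submodular, whose base polytope is exactly $\{y : \sum_{i=1}^{t} y^\downarrow_i \le R_t\ (t<k'),\ \sum_i y_i = R_{k'}\}$. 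As a self-contained alternative I would induct on $m$: for sorted $y$, greedily peel off a sorted single-ballot contribution $y^{(m)}$ majorized by $r^{(m)}$ so that the residual $y - y^{(m)}$ still satisfies all prefix bounds with respect to the reduced capacities $R_t - R^{(m)}_t$, then apply the induction hypothesis to the first $m-1$ ballots.

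The main obstacle is precisely this sufficiency step: turning the prefix inequalities from \emph{necessary} into \emph{realizable} via an explicit decomposition. Invoking the polymatroid Minkowski-sum identity makes it immediate, but an elementary proof requires verifying that the greedy single-ballot peel keeps the residual feasible --- i.e., that after subtracting $y^{(m)}$ the slacks $R_t - R^{(m)}_t - \sum_{i\le t}(y - y^{(m)})^\downarrow_i$ remain nonnegative for all $t$ --- and this is exactly where the nonincreasing-increment structure of each $r^{(v)}$ is used. I would present the polymatroid argument as the primary proof and defer the elementary inductive verification (a small transportation/exchange argument) to an appendix.
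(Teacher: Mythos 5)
Your proposal is correct and follows essentially the same route as the paper's proof: necessity via the single-ballot Rado/HLP characterization combined with subadditivity of the sorted top-$t$ sum (the paper's steps (A) and (B)), and sufficiency via the Edmonds/Fujishige identity that the Minkowski sum of base polytopes of cardinality-based (hence symmetric, submodular) polymatroids is the base polytope of the summed rank function $R_{|S|}$. The optional greedy-peeling induction you sketch is an extra elementary alternative the paper does not pursue, but since you present the polymatroid argument as primary and it is complete, no gap remains.
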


\paragraph{Provenance and role in this paper.}
Theorem~\ref{thm:block-hlp} is a direct corollary of classical results on
majorization and permutahedra (e.g., Hardy--Littlewood--P\'olya and Rado),
together with standard polymatroid-base descriptions of Minkowski sums of
permutahedra (e.g., Edmonds).
We include it in this explicit ``blockwise prefix capacity'' form because it
yields an $O(k'\log k')$ membership test for $\mathcal{P}_{\mathrm{conv}}$
(after sorting), which is the continuous feasibility oracle used throughout our
boost/suppress envelope framework.
Our main new contributions begin after this envelope: we show how displacement
feasibility reduces to two independent blockwise subproblems and we give an
\emph{exact} integer realizability characterization for common-step AP ladders
via a single congruence condition (Theorem~\ref{thm:ap-lattice-common}).

\paragraph{Interpretation.}
Condition~(A) expresses that no $t$ targets can
collectively receive more score than the top $t$ scores available across all
ballots, regardless of how those scores are permuted among the targets.
Condition~(B) enforces conservation of the total score
mass contributed by the coalition.  
Together, these inequalities cut out a
$(k'-1)$-dimensional polytope whose extreme points are exactly the aggregates
obtainable by choosing specific permutations on each ballot.

\begin{proof}[Proof sketch]
For each ballot $v$, its permutahedron
$\Pi(r^{(v)}) = \operatorname{conv}(\mathcal{P}(r^{(v)}))$
is exactly the majorization polytope associated with $r^{(v)}$.    
By the Hardy--Littlewood--Pólya theorem~\cite[Thm.~A.1]{hardy1934}, a vector $y^{(v)}$ lies in 
$\Pi(r^{(v)})$ if and only if it satisfies
\[
\sum_{i=1}^t (y^{(v)})_i^\downarrow \le R^{(v)}_t
\quad\text{for all } t=1,\dots,k'-1,
\qquad\text{and}\qquad
\sum_{i=1}^{k'} (y^{(v)})_i = R^{(v)}_{k'}.
\]
Summing these inequalities over all ballots $v=1,\dots,m$ yields the necessity
of Conditions~(A) and~(B) for any aggregate 
$y = \sum_v y^{(v)}$.

For sufficiency, assume $y$ satisfies Conditions~(A) and~(B).
Greene--Kleitman rank inequalities (equivalently, Rado’s theorem) imply that a
vector belongs to the Minkowski sum $\sum_v \Pi(r^{(v)})$ precisely when its
sorted prefix sums are dominated by the aggregated prefix bounds $R_t$ and its
total mass matches $R_{k'}$.  
Since these two requirements are exactly Conditions~(A) and~(B), any such $y$
can be decomposed into a sum of vectors $y^{(v)} \in \Pi(r^{(v)})$, establishing
the claim.
\end{proof}

\paragraph{Running example (continued).}
Recall the two ballots with
\[
r^{(1)} = (9,5,1), \qquad r^{(2)} = (10,6,2),
\]
whose prefix capacities are
\[
R^{(1)} = (9,14,15), 
\qquad
R^{(2)} = (10,16,18),
\]
so the coalition’s aggregate capacities are
\[
R = (19,\; 30,\; 33).
\]

Consider the aggregate vector
\[
y = (18, 9, 6).
\]
Its sorted form is $y^\downarrow = (18,9,6)$, and its prefix sums are
\[
(18,\; 27,\; 33).
\]
Comparing with $R$,
\[
18 \le 19, \qquad 27 \le 30, \qquad 33 = 33,
\]
we see that $y$ satisfies all Block--HLP inequalities and therefore lies within
the continuous feasible envelope $\mathcal{P}_{\mathrm{conv}}$.

By contrast, the vector
\[
y' = (20, 8, 5)
\]
has prefix sums $(20,28,33)$, violating the first inequality
\[
20 \le 19.
\]
Hence $y' \notin \mathcal{P}_{\mathrm{conv}}$.
This illustrates how the Block--HLP prefix inequalities precisely determine
which continuous aggregate allocations are feasible for the coalition.

\medskip
The Block--HLP theorem gives a tight and efficiently checkable description of
the coalition's \emph{continuous} capabilities: it characterizes the entire
Minkowski-sum convex envelope via a minimal set of prefix inequalities.

\paragraph{Remark.}
The convex envelope $\mathcal{P}_{\mathrm{conv}}$ is a cardinality-based
polymatroid: its facets depend only on prefix sizes, and the Block--HLP
inequalities give an irredundant, efficiently checkable description.  
Feasibility in the continuous relaxation therefore reduces to a single pass over
the sorted vector $y^\downarrow$.

\paragraph{Integrality considerations.}
While Theorem~\ref{thm:block-hlp} fully captures the \emph{continuous} region,
actual ballot contributions are integer-valued and must respect the ladder
structure of the scoring rule.  As a result, not every integer point in
$\mathcal{P}_{\mathrm{conv}}$ is realizable: discrete score gaps induce residue
classes that the convex relaxation cannot express.

\medskip
We now refine this continuous picture for AP scoring
ladders.  For these rules, realizability is characterized exactly by the
Block--HLP prefix inequalities together with a single global congruence
constraint, yielding a precise lattice description of feasible integer
aggregates.

\subsection{Discrete Geometry for Arithmetic–Progression Ladders}
\label{sec:geometry-ap}

The convex Block--HLP envelope described above characterizes all aggregate score
vectors achievable under a \emph{continuous relaxation} of manipulation, in
which each ballot’s score vector may be fractionally redistributed among
positions. Equivalently, we replace each discrete permutation set 
$\mathcal{P}(r^{(v)})$ with its convex hull 
$\Pi(r^{(v)}) = \operatorname{conv}(\mathcal{P}(r^{(v)}))$,
treating each ballot as a divisible resource.  The Minkowski sum of these convexified ladders
is a polymatroid base polytope whose integer points contain---but generally
strictly overapproximate---the truly realizable aggregates.

In practice, ballot scores come from fixed integer ladders; the coalition can only choose elements of $\mathcal{P}(r^{(v)})$ rather than any
point of $\Pi(r^{(v)})$, so only a subset of the integer points in the
continuous Block--HLP polytope are realizable.  To obtain an \emph{exact} integer-feasibility
test for positional scoring rules, we specialize to the broad and practically
important class of \emph{arithmetic–progression (AP) ladders}.  This family includes many common positional rules such as Borda, truncated Borda, plurality, and $3$--$2$--$1$
scoring (among others with a small number of score levels and regular gaps).

Under AP structure, the continuous Block--HLP envelope refines to a discrete
feasibility region governed by two ingredients:
\begin{itemize}
\item the usual \emph{prefix-sum inequalities}, and
\item a finite set of \emph{allowable coordinate values} induced by the ladder
      step sizes.
\end{itemize}
When all colluding ballots share the same step size, these discrete restrictions
collapse to a particularly clean \emph{prefix–and–congruence lattice} that
exactly characterizes realizability.  For heterogeneous step sizes, the
situation is more intricate and depends delicately on both the step sizes and
the ladder length~$k'$; we provide the appropriate necessary condition below.

\paragraph{Notation.}
We use $F(t)$ to denote the coalition’s aggregate prefix capacities for
$t=1,\dots,k'$ throughout.
(Sections~\ref{sec:geometry-convex}--\ref{sec:block-hlp} previously used the
notation $R_t$ for the same quantity; thus $F(t)\equiv R_t$.)

\paragraph{Discrete realizability vs.\ convex envelope.}
Recall from Section~\ref{sec:geometry-convex} that
$P(r^{(v)}) := \{\sigma(r^{(v)}) : \sigma \in S_{k'}\}$ is the \emph{discrete} set of ballot-wise permutations,
and $\Pi(r^{(v)}) :=\operatorname{conv}(P(r^{(v)}))$ is its permutahedron.
We will use the shorthand
\[
\mathcal{P}^{\mathrm{disc}} := \sum_{v=1}^m P(r^{(v)}),
\qquad
\mathcal{P}^{\mathrm{conv}} := \sum_{v=1}^m \Pi(r^{(v)}),
\]
where $\sum$ denotes Minkowski sum.
An integer vector $y \in \mathbb{Z}^{k'}$ is \emph{(ballot-)realizable} if $y \in \mathcal{P}^{\mathrm{disc}}$, i.e.,
$y=\sum_{v=1}^m \sigma_v(r^{(v)})$ for some permutations $\sigma_v \in S_{k'}$.
By construction, $\mathcal{P}^{\mathrm{disc}} \subseteq \mathcal{P}^{\mathrm{conv}}$.

\paragraph{AP ladders.}
An AP ladder is defined by sampling scores from a single arithmetic progression.

\begin{definition}[Arithmetic–Progression (AP) Ladder]
\label{def:ap-ladder}
An AP ladder of length $k'$ is a nonincreasing integer vector
$r\in\mathbb{Z}^{k'}$ for which there exist integers $L\in\mathbb{Z}$, $g\ge1$,
and integers $\ell_1\ge\ell_2\ge\cdots\ge\ell_{k'}\ge0$ such that
\[
  r_j = L + g\,\ell_j, \qquad j=1,\dots,k'.
\]
Equivalently, the distinct values of $r$ form the set
$\{L,\,L+g,\,L+2g,\dots,L+Hg\}$ for some $H\le k'-1$, with arbitrary multiplicity
subject to the nonincreasing order.
\end{definition}

For a coalition of $m$ voters, the $v$th ballot has baseline $L_v$, step
size $g_v$, and level multiset
$\Lambda_v := \{\ell^{(v)}_1,\dots,\ell^{(v)}_{k'}\}$.
Each feasible permutation of that ballot is drawn from $\mathcal{P}(r^{(v)})$.

\paragraph{Example.}
Almost all standard positional rules satisfy Definition~\ref{def:ap-ladder}.
For example, with $k'=10$:
\begin{itemize}
\item Borda $(9,8,\ldots,0)$: baseline $L=0$, step $g=1$.
\item Truncated Borda $(4,3,2,1,0,0,\ldots)$: repeated levels allowed.
\item The 3--2--1 rule $(3,3,3,2,2,2,1,1,1,0)$: step size $1$.
\item Plurality $(1,0,\ldots,0)$: two levels, still AP.
\item A large-gap rule $(8,2,2,\ldots,2)$: baseline $L=2$, step $g=6$.
\end{itemize}
A vector such as $(8,5,4,\ldots)$ is \emph{not} AP, as no single step size
generates all pairwise differences.

\paragraph{Continuous envelope (always exact).}
For \emph{any} ladders---AP or otherwise—the Minkowski sum of ballot-wise
permutahedra is characterized by the Block--HLP inequalities:
\[
y\in\sum_v \Pi(r^{(v)}) 
\iff
\begin{cases}
 \sum_{i=1}^t y^\downarrow_i \le F(t), & t=1,\dots,k',\\[4pt]
 \sum_{i=1}^{k'} y_i = F(k').
\end{cases}
\tag{$\star$}
\label{eq:block-hlp-discrete}
\]
The remaining question is: \emph{which integer points of this polytope are
actually realizable?}

\paragraph{Discrete constraints for heterogeneous step sizes.}
For AP ladders, each coordinate of a realizable aggregate is a sum of one
selected score from each ballot.  
Define the per-ballot score sets
\[
S_v := \{\,L_v + g_v\ell : \ell \in \Lambda_v\,\},
\qquad
S := S_1 + \cdots + S_m
     := \{\,s_1+\cdots+s_m : s_v \in S_v\,\}.
\]
Thus $S$ is the (finite) set of all allowable coordinate values for realizable
aggregates.

Let
\[
g_* := \gcd(g_1,\dots,g_m)
\]
denote the common period of the step sizes.  
Every realizable coordinate must lie in the residue class
$L_{\mathrm{tot}} \pmod{g_*}$, but the attainable values form only the strict
subset $S \subseteq \{L_{\mathrm{tot}} + g_* z : z \in \mathbb{Z}\}$.  
Hence the coarse modular obstruction is governed by $g_*$, while the true
arithmetic granularity of the mixed-step setting is determined by the set $S$.

\begin{proposition}[Discrete necessary condition for mixed-step AP ladders]
If $y \in \mathcal{P}^{\mathrm{disc}}=\sum_v P(r^{(v)})$, then $y$ satisfies $(\star)$ and $y_i \in S$ for all
$i=1,\ldots,k'$.
\end{proposition}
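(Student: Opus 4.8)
The plan is to split the claim into its two independent parts—the prefix system $(\star)$ and the coordinatewise membership $y_i \in S$—and verify each directly from the definition of $\mathcal{P}^{\mathrm{disc}}$, since this is the easy (necessity) half of the AP characterization.

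For $(\star)$, I would simply invoke the inclusion $\mathcal{P}^{\mathrm{disc}} \subseteq \mathcal{P}^{\mathrm{conv}}$, which holds by construction: for each ballot $v$ we have $P(r^{(v)}) \subseteq \operatorname{conv}(P(r^{(v)})) = \Pi(r^{(v)})$, and Minkowski summation is monotone under inclusion, so $\sum_v P(r^{(v)}) \subseteq \sum_v \Pi(r^{(v)})$. Theorem~\ref{thm:block-hlp}, restated as $(\star)$, characterizes $\mathcal{P}^{\mathrm{conv}}$ exactly by the sorted prefix-sum bounds $\sum_{i=1}^t y_i^\downarrow \le F(t)$ and the total-sum equality $\sum_{i=1}^{k'} y_i = F(k')$. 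Hence every $y \in \mathcal{P}^{\mathrm{disc}}$ satisfies $(\star)$; no AP structure is used in this step.

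For the coordinatewise assertion, write $y = \sum_{v=1}^m \sigma_v(r^{(v)})$ for some permutations $\sigma_v \in S_{k'}$, which is exactly what $y \in \mathcal{P}^{\mathrm{disc}}$ means. Fix a coordinate $i \in \{1,\dots,k'\}$. Then
\[
y_i \;=\; \sum_{v=1}^m \bigl(\sigma_v(r^{(v)})\bigr)_i \;=\; \sum_{v=1}^m r^{(v)}_{\sigma_v(i)}.
\]
By Definition~\ref{def:ap-ladder}, ballot $v$'s ladder satisfies $r^{(v)}_j = L_v + g_v\,\ell^{(v)}_j$ with $\ell^{(v)}_j \in \Lambda_v$ for every position $j$; in particular $r^{(v)}_{\sigma_v(i)} = L_v + g_v\,\ell^{(v)}_{\sigma_v(i)} \in S_v$. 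Summing over $v$ gives $y_i \in S_1 + \cdots + S_m = S$, and since $i$ was arbitrary this holds for all $i=1,\dots,k'$.

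Both steps are essentially bookkeeping, so I do not expect a genuine obstacle; the only point where a reader might pause is matching notation—confirming that the score ballot $v$ contributes to the target occupying coordinate $i$ is precisely $r^{(v)}_{\sigma_v(i)}$, and that this value is by definition an element of the per-ballot value set $S_v$. The substantive work is deferred to the sufficiency direction in the common-step regime (Theorem~\ref{thm:ap-lattice-common}), where one must constructively pack permutations to realize a prescribed aggregate vector; the present proposition only records the cheap necessary conditions that the packing argument must then match.
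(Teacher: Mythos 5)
Your proof is correct and matches the argument the paper intends (the proposition is stated without an explicit proof, but your two steps are exactly the necessity template used in the proof of Theorem~\ref{thm:ap-lattice-common}: the inclusion $\mathcal{P}^{\mathrm{disc}}\subseteq\mathcal{P}^{\mathrm{conv}}$ combined with Theorem~\ref{thm:block-hlp} for $(\star)$, and the coordinatewise expansion $y_i=\sum_v r^{(v)}_{\sigma_v(i)}\in S_1+\cdots+S_m$ for the membership claim). No gaps.
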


The set $S$ captures the true arithmetic granularity induced by the step sizes
$g_1,\dots,g_m$ and the ladder length~$k'$.  In general $S$ may contain only a
strict subset of the residues modulo $\gcd(g_1,\dots,g_m)$, and characterizing
the full integer hull in this mixed-step regime is subtle; a complete treatment
remains open.

\paragraph{Exact lattice structure for common step size.}
A complete solution emerges when all colluding ballots share a \emph{common}
step size. Intuitively, when all ladders share a common step size $g$, the discrete feasibility obstruction collapses to a single residue class constraint: after factoring out 
$g$, the remaining structure behaves as a unit-step lattice, for which prefix dominance is sufficient

\begin{theorem}[AP--Ladder Lattice Theorem (Common Step Size)]
\label{thm:ap-lattice-common}
Suppose each ballot has step size $g\ge1$, i.e.,
$r^{(v)}_j = L_v + g\,\ell^{(v)}_j$. Let $L_{\mathrm{tot}} := \sum_v L_v$, and
let $F(t)$ denote the aggregate prefix capacities.  
Then an integer vector $y \in \mathbb{Z}^{k'}$ is (ballot-)realizable, i.e.\ $y \in
\mathcal{P}^{\mathrm{disc}}=\sum_v P(r^{(v)})$, if and only if:
\begin{enumerate}[label=(\Alph*),itemsep=2pt]
  \item \textbf{Prefix-sum bounds:} 
        $\displaystyle \sum_{i=1}^t y^\downarrow_i \le F(t)$  \text{for all } t=1,\dots,k'-1;
  \item \textbf{Total sum:} 
        $\displaystyle \sum_{i=1}^{k'} y_i = F(k')$;
  \item \textbf{Coordinatewise congruence:}
        $\displaystyle y_i \equiv L_{\mathrm{tot}} \pmod g$ for all $i=1,\dots,k'$.
\end{enumerate}
These conditions describe exactly the integer points of a polymatroid base
polytope intersected with a single residue class modulo~$g$.
\end{theorem}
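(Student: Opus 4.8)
The plan is to reduce the common-step case to the pure unit-step case ($g=1$) by a change of variables, and then prove the unit-step case directly by a greedy/exchange packing argument. Necessity of (A), (B), (C) is immediate: (A) and (B) follow from Theorem~\ref{thm:block-hlp} since $\mathcal{P}^{\mathrm{disc}}\subseteq\mathcal{P}^{\mathrm{conv}}$, and (C) holds because every coordinate of a realizable $y$ is $\sum_v r^{(v)}_{\sigma_v(i)} = L_{\mathrm{tot}} + g\sum_v \ell^{(v)}_{\sigma_v(i)} \equiv L_{\mathrm{tot}}\pmod g$. The substantive direction is sufficiency. Given $y$ satisfying (A)--(C), write $y_i = L_{\mathrm{tot}} + g\,z_i$ with $z_i\in\mathbb{Z}$; by (C) this is well-defined and $z_i\ge 0$ can be arranged (after possibly translating, the level multisets $\Lambda_v$ have nonnegative entries, so $y_i\ge L_{\mathrm{tot}}$ whenever $y$ is realizable, and more to the point the prefix bounds will force the correct range). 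Subtracting the fixed constant $L_{\mathrm{tot}}$ from each coordinate and dividing by $g$ transforms the problem into: realize $z = (z_1,\dots,z_{k'})$ as $z = \sum_v \sigma_v(\ell^{(v)})$ where $\ell^{(v)} = (\ell^{(v)}_1\ge\cdots\ge\ell^{(v)}_{k'})$ is the level multiset of ballot $v$. The prefix capacities transform correspondingly: $\sum_{i=1}^t z_i^\downarrow \le \sum_v (\ell^{(v)}_1+\cdots+\ell^{(v)}_t) =: G(t)$ for $t<k'$, with equality at $t=k'$. So it suffices to prove the $g=1$ statement: an integer $z$ with sorted prefix sums dominated by $G(t)$ and total equal to $G(k')$ is a Minkowski sum of permutations of the $\ell^{(v)}$'s.

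For the $g=1$ core, I would argue by induction on $m$ (the number of ballots), or equivalently give a direct greedy decomposition. Induction base $m=1$: the claim is exactly the Hardy--Littlewood--P\'olya theorem in its integer form — an integer vector majorized by $\ell^{(1)}$ (same total, dominated prefix sums) is a permutation-average, but we need it to be an actual permutation; here we use that integer points majorized by an integer vector $\ell$ and summing to the same total are exactly the integer points of the permutahedron $\Pi(\ell)$, and since $\Pi(\ell)$ is an integral polytope whose vertices are the permutations of $\ell$... but that only gives membership in the convex hull, not that $z$ itself is a permutation of $\ell$. The correct statement for $m=1$ is weaker: $z$ need not equal $\sigma(\ell^{(1)})$, so the single-ballot claim as literally stated is false unless $z$ happens to be a rearrangement. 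The resolution is that for $m\ge 2$ we have genuine freedom: the key lemma is a \emph{splitting/exchange} step. Given $z$ satisfying the $G(t)$-majorization with $m$ ballots, I peel off ballot $m$: I want to choose a permutation $\sigma_m$ of $\ell^{(m)}$ so that $z - \sigma_m(\ell^{(m)})$ still satisfies the majorization bounds $G'(t) := G(t) - (\ell^{(m)}_1+\cdots+\ell^{(m)}_t)$ for the first $m-1$ ballots and is coordinatewise $\ge$ some valid lower envelope. This is where the real work lies: one shows that among all $k'!$ assignments of $\ell^{(m)}$'s values to coordinates, at least one keeps the residual within the polymatroid $\sum_{v<m}\Pi(\ell^{(v)})$.

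The cleanest route to that splitting lemma is via the \emph{polymatroid intersection / base-polytope} viewpoint that the paper already invokes (Edmonds, Rado, Greene--Kleitman): the set of integer points of $\mathcal{P}^{\mathrm{conv}}=\sum_v\Pi(\ell^{(v)})$ is a \emph{generalized permutohedron}, and a Minkowski sum of integral generalized permutohedra has the property that \emph{every} lattice point of the sum decomposes as a sum of lattice points of the summands (this is a known integral-decomposition / "box-totally-dual-integral" property of polymatroid base polytopes — the sum of base polytopes of polymatroids is again a base polytope of a polymatroid, here a symmetric one, and its lattice points are sums of lattice points of the summands). Invoking this, every integer $z\in\mathcal{P}^{\mathrm{conv}}$ decomposes as $z=\sum_v z^{(v)}$ with $z^{(v)}\in\Pi(\ell^{(v)})\cap\mathbb{Z}^{k'}$; then for each $v$, $z^{(v)}$ is an integer point majorized by $\ell^{(v)}$, so $z^{(v)}$ is a \emph{permutation} of $\ell^{(v)}$ \emph{provided} $\ell^{(v)}$ has all distinct entries — and if it has repeated entries, the integer points of $\Pi(\ell^{(v)})$ that are \emph{also} coordinatewise sorted are exactly rearrangements, and in general the vertices of $\Pi(\ell^{(v)})$ are the $\mathcal{P}(\ell^{(v)})$, but interior integer points exist. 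The fix for repeated levels: when $\ell^{(v)}$ has plateaus, $\Pi(\ell^{(v)})$ is lower-dimensional and its \emph{only} integer points are the permutations $\mathcal{P}(\ell^{(v)})$ — this is because majorization by a vector with a plateau, combined with integrality, pins each coordinate to a ladder value. So the decomposition $z=\sum_v z^{(v)}$ automatically lands in $\sum_v\mathcal{P}(\ell^{(v)})=\mathcal{P}^{\mathrm{disc}}$, as required. Finally, translate back: $y = L_{\mathrm{tot}}\mathbf{1} + g z \in \sum_v \mathcal{P}(r^{(v)})$. The last sentence of the theorem (the polymatroid-base-polytope-intersect-residue-class description) is then just a restatement of conditions (A)--(C).

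The main obstacle I anticipate is making the \emph{integral Minkowski decomposition} step fully rigorous: proving that every lattice point of $\sum_v\Pi(\ell^{(v)})$ splits as a sum of lattice points of the $\Pi(\ell^{(v)})$. For general integral polytopes this is \emph{false} (integer decomposition fails for Minkowski sums in general), so one genuinely needs the polymatroid structure — either cite the box-TDI property of polymatroid base polytopes and the fact that a Minkowski sum of polymatroid base polytopes is again one (with rank function the sum of the rank functions), or give a direct constructive greedy: sort $z$ decreasingly, and fill coordinates one ladder-step at a time, maintaining the invariant that every prefix sum of the partially-built vector is $\le G(t)$ and that the remaining "demand" is still feasible for the un-placed ballot-units — a discrete water-filling argument. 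I expect the constructive greedy to be the more self-contained option and would write it as the core lemma, with the polymatroid citation as a remark; the plateau/repeated-level bookkeeping (ensuring each per-ballot piece is a genuine permutation of $\ell^{(v)}$, not just a majorized integer vector) is the fiddly part that must be handled with care.
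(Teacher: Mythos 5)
Your necessity argument and the rescaling to unit-step level vectors coincide with Step~1 of the paper's proof (Appendix~\ref{sec:ProofAP}), but your sufficiency route has a genuine gap at its final step. Even granting the integral-decomposition property of Minkowski sums of polymatroid base polytopes, you only obtain $z=\sum_v z^{(v)}$ with each $z^{(v)}$ an \emph{integer point of} $\Pi(\ell^{(v)})$ --- and the bridge from there to $z^{(v)}\in\mathcal{P}(\ell^{(v)})$ is false. The integer points of a single permutahedron are not the permutations of its generator once the levels span more than one unit step: for $\ell=(2,1,0)$ (all entries distinct) the point $(1,1,1)$ lies in $\Pi(\ell)\cap\mathbb{Z}^3$ but is no rearrangement, and for $\ell=(2,2,0,0)$ (a plateau) the point $(2,1,1,0)$ does the same. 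So both clauses of your proposed fix --- ``distinct entries'' and ``plateaus pin each coordinate to a ladder value'' --- fail. Your own (correct) observation that the $m=1$ case of the statement is false as literally written is exactly this phenomenon; it does not disappear for $m\ge 2$ under your decomposition, because nothing forces the per-ballot pieces to be vertices. Showing that the decomposition \emph{can} be pushed to vertices is the actual content of the theorem, and the proposal leaves it unproved: the peeling/exchange ``splitting lemma'' is gestured at but never established.

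The paper avoids this trap by never splitting $y$ into per-ballot integer vectors at all. After rescaling, it encodes each ballot by its level-threshold counts $r^{(v)}(\ell)=\bigl|\{j:\ell^{(v)}_j\ge\ell\}\bigr|$ and constructs a nested family of $0$--$1$ matrices $X^{(1)}\supseteq X^{(2)}\supseteq\cdots$ (a multilevel Gale--Ryser construction) whose row $v$ at level $\ell$ has exactly $r^{(v)}(\ell)$ ones and whose column sums aggregate to $\hat d$; the nesting forces each ballot's induced height vector $\tau_v(\cdot)$ to have the same threshold counts as $\ell^{(v)}$ at every level, hence to be a genuine permutation of the level multiset --- precisely the property your per-ballot lattice points lack. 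To repair your proposal you would need either this layered construction or a fully proved exchange lemma that extracts an actual permutation of $\ell^{(m)}$ (not merely a majorized integer vector) from $y$ while keeping the residual inside $\sum_{v<m}\Pi(\ell^{(v)})$. As written, sufficiency is not established.
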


\noindent\textbf{Intuition.}
Conditions (A)--(B) describe the continuous Block--HLP envelope; condition (C) selects the single lattice coset
compatible with the common AP step $g$. The proof reduces realizability to a unit-step decomposition on the AP levels.

\begin{proof}[Proof sketch]
\emph{Necessity.}
Conditions (A)--(B) are exactly the Block--HLP prefix constraints and total-mass equality
for the Minkowski-sum envelope (Theorem~\ref{thm:block-hlp}).
For (C), note that every ballot has the form
$r^{(v)}_j = L_v + g\,\ell^{(v)}_j$ with $\ell^{(v)}_j \in \mathbb{Z}$, so every coordinate contributed by ballot $v$
is congruent to $L_v \pmod g$. Summing over $v=1,\dots,m$ forces
$y_i \equiv L_{\mathrm{tot}} := \sum_{v=1}^m L_v \pmod g$ for every coordinate $i$.

\emph{Sufficiency.}
Assume $y \in \mathbb{Z}^{k'}$ satisfies (A)--(C). By (C) we can write
\[
y \;=\; L_{\mathrm{tot}}\mathbf{1} + g z
\qquad\text{for some } z \in \mathbb{Z}^{k'}.
\]
Let $\ell^{(v)} = (\ell^{(v)}_1,\dots,\ell^{(v)}_{k'})$ be ballot $v$'s (sorted) level vector, so that
$r^{(v)} = L_v\mathbf{1} + g\,\ell^{(v)}$.
Define the induced \emph{level} prefix capacities
\[
F^{\mathrm{lev}}(t)
\;:=\;
\frac{F(t)-tL_{\mathrm{tot}}}{g}
\;=\;
\sum_{v=1}^m \sum_{i=1}^t \ell^{(v)}_i,
\qquad t=1,\dots,k'.
\]
Dividing (A)--(B) by $g$ and subtracting baselines yields the unit-step Block--HLP system
\[
\sum_{i=1}^t z_i^\downarrow \le F^{\mathrm{lev}}(t)\quad (t=1,\dots,k'-1),
\qquad
\sum_{i=1}^{k'} z_i = F^{\mathrm{lev}}(k').
\]
At this point we are in a purely \emph{integer level-allocation} problem: each ballot $v$ contributes a
permutation of $\ell^{(v)}$, and we want their coordinatewise sum to equal $z$.
A standard multilevel majorization theorem (equivalently, the Brualdi--Hwang / multilevel Gale--Ryser
construction via nested 0--1 ``layers'') says that these prefix inequalities are not only necessary but also
\emph{sufficient} for such an integer decomposition: one can assign levels to targets across ballots so that
each ballot uses exactly the multiset $\ell^{(v)}$ and the total received by target $i$ equals $z_i$.

Finally, mapping levels back to scores by $L_v\mathbf{1} + g(\cdot)$ gives permutations of each $r^{(v)}$ whose
Minkowski sum equals $y$. Appendix~\ref{sec:ProofAP}. provides the full constructive decomposition.
\end{proof}

When step sizes are equal, the semigroup $S$ collapses to a single congruence
class modulo~$g$, and the Block--HLP inequalities fully determine the discrete
feasible region.  This yields a clean, efficiently testable lattice structure
that we exploit in our algorithmic framework.  
With heterogeneous step sizes, the integer-feasible region is instead governed
by the finer AP--semigroup condition $y_i\in S$, which is strictly stronger than
mere congruence modulo $\gcd(g_v)$ and reflects the true arithmetic
granularity of the mixed-step setting.

\begin{figure}[t]
    \centering
    % --- Left Subfigure: Dense Case ---
    \begin{subfigure}[b]{0.48\textwidth}
        \centering
        \includegraphics[width=\textwidth]{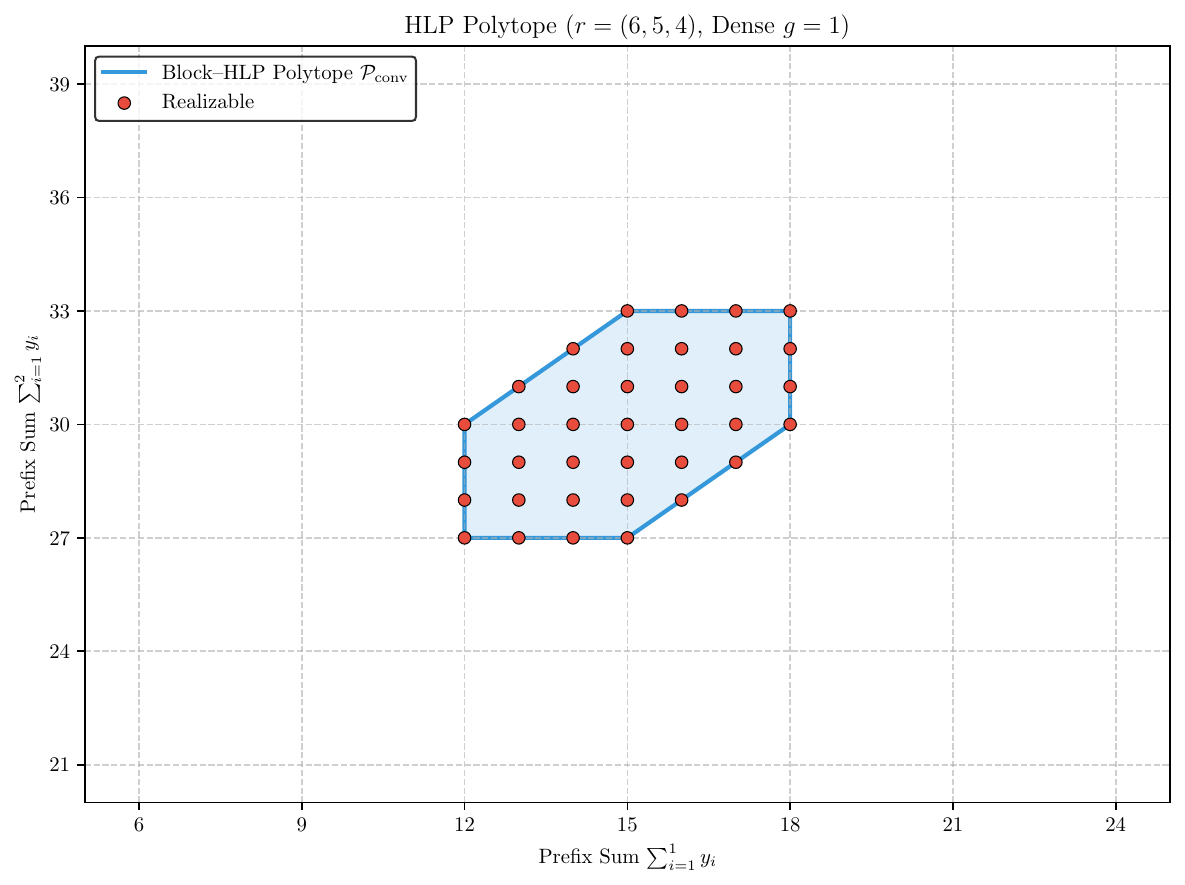}
        \caption{Dense Lattice ($g=1$)}
        \label{fig:dense_lattice}
    \end{subfigure}
    \hfill
    % --- Right Subfigure: Sparse Case ---
    \begin{subfigure}[b]{0.48\textwidth}
        \centering
        \includegraphics[width=\textwidth]{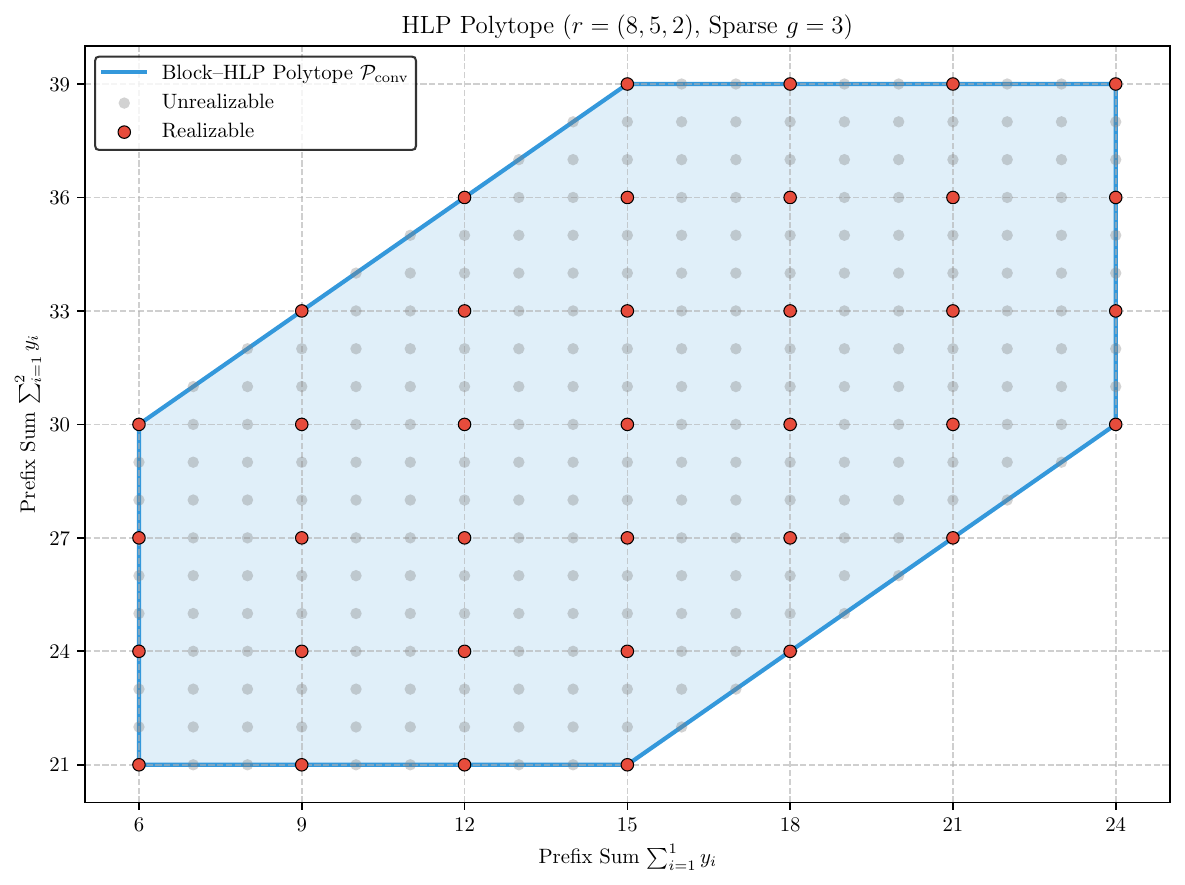}
        \caption{Sparse Lattice ($g=3$)}
        \label{fig:sparse_lattice}
    \end{subfigure}
    
   \caption{
\textbf{Majorization vs.\ Lattice Realizability.}
Each panel shows a two-dimensional projection of the Block--HLP polytope
$\mathcal{P}_{\mathrm{conv}}$ (blue) for $m=3$ and $k'=3$ onto the prefix-sum
coordinates $(y_1,\; y_1+y_2)$, together with integer points in the projected
region.
Both figures are plotted using the same coordinate range to facilitate direct
visual comparison.
(a) \emph{Unit-step AP ladder ($g=1$).}
For the unit-step ladder $r=(6,5,4)$, the congruence constraint of the
AP--Ladder Lattice Theorem is vacuous.
Consequently, every integer point satisfying the Block--HLP prefix inequalities
is realizable, and the discrete feasible set (red) fills the convex envelope.
(b) \emph{Sparse-step AP ladder ($g=3$).}
For the common-step ladder $r=(8,5,2)$ with step size $g=3$, realizable aggregates
are restricted to a single residue class modulo~3.
Accordingly, only a strict sublattice of integer points (red) is realizable,
while other interior points (grey) satisfy all Block--HLP prefix inequalities
yet violate the congruence condition.
Together, the panels illustrate that Block--HLP majorization determines the
continuous feasibility envelope, while arithmetic step size imposes an
additional modular constraint on integer realizability.
}
    \label{fig:hlp_lattice_comparison}
\end{figure}

\paragraph{Geometric validation.}
Figure~\ref{fig:hlp_lattice_comparison} illustrates how 
AP structure refines the continuous Block--HLP feasibility envelope into a
discrete prefix-and-congruence lattice.
Each panel shows a two-dimensional projection of the Block--HLP polytope
$\mathcal{P}_{\mathrm{conv}}$ onto the plane of the first two prefix sums
$(y_1,\; y_1+y_2)$, together with all integer points in the projected region.
To facilitate visual comparison, both panels are plotted using the same
coordinate range.

In the \emph{unit-step} regime (Figure~\ref{fig:dense_lattice}), we use the
unit-step AP ladder $r=(6,5,4)$ with step size $g=1$.
By the unit-step case of Theorem~\ref{thm:ap-lattice-common}, the congruence
constraint is vacuous, and every integer point satisfying the Block--HLP prefix
inequalities is realizable.
Accordingly, the discrete feasible region (red points) fills the entire convex
envelope predicted by continuous majorization.

In the \emph{sparser-step} regime (Figure~\ref{fig:sparse_lattice}), we use the
common-step AP ladder $r=(8,5,2)$ with step size $g=3$.
In this case, realizable aggregates are restricted to a single residue class
modulo~3, as prescribed by the AP--Ladder Lattice Theorem.
As a result, only a strict sublattice of integer points (red) is realizable,
while other interior points (grey) satisfy all Block--HLP prefix inequalities yet
violate the congruence condition and cannot arise from ballot permutations.

Together, these panels confirm that Block--HLP majorization determines the
continuous feasibility envelope, while AP step size imposes an additional
modular constraint on integer realizability, producing a strict refinement of
the continuous envelope whenever $g>1$.

\paragraph{Running example (continued).}
In our running example, both score ladders have the same step size $g = 4$ and
baselines $L_1 = 1$ and $L_2 = 2$, so the total baseline is
$L_{\mathrm{tot}} = 3$.  
Hence every realizable integer aggregate $y$ must satisfy the coordinatewise
congruence condition
\[
y_i \equiv 3 \pmod 4
\qquad\text{for all } i = 1,2,3.
\]

The continuous Block--HLP envelope is determined by the aggregate prefix
capacities
\[
F(1)=19,\qquad F(2)=30,\qquad F(3)=33.
\]
Thus the realizable integer points are exactly those
$y \in \mathbb{Z}^3$ that satisfy these prefix inequalities and whose
coordinates all lie in the residue class $3 \bmod 4$.

For example,
\[
y = (19, 11, 3)
\]
satisfies both the prefix bounds
\[
19 \le 19,\qquad 19 + 11 = 30 \le 30,\qquad 33 = 33,
\]
and the congruence condition ($19 \equiv 11 \equiv 3 \equiv 3 \pmod 4$),
so it is realizable by some pair of permutations of $(9,5,1)$ and $(10,6,2)$.

In contrast,
\[
y' = (18, 11, 4)
\]
lies strictly inside the Block--HLP polytope (prefix sums $18 \le 19$,
$29 \le 30$, total $33$), but violates the congruence requirement because
neither $18$ nor $4$ is congruent to $3 \pmod 4$.  
Thus $y'$ cannot arise from any pair of ballot permutations.

Geometrically, the realizable region is the intersection of the continuous
Block--HLP polytope with the lattice slice
\[
\{\, y \in \mathbb{Z}^3 : y_i \equiv 3 \pmod 4 \text{ for all } i = 1,2,3 \,\},
\]
producing a discrete pattern of feasible integer points inside the convex
envelope.

\begin{corollary}[Unit-step ladders: no congruence restriction]
\label{cor:unit-step-structural}
If all AP ladders have a common step size $g = 1$, then the
congruence condition~\textup{(C)} in
Theorem~\ref{thm:ap-lattice-common} is vacuous.
An integer vector $y$ is therefore realizable as
$y \in \sum_v \Pi(r^{(v)})$ if and only if it satisfies the
Block--HLP prefix inequalities~\textup{(A)} and the total-sum
condition~\textup{(B)}.
\end{corollary}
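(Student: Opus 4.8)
The plan is to obtain this as the immediate $g=1$ specialization of Theorem~\ref{thm:ap-lattice-common}. First I would instantiate that theorem with common step size $g=1$: each ballot then has the form $r^{(v)}_j = L_v + \ell^{(v)}_j$ with $\ell^{(v)}_j \in \mathbb{Z}_{\ge 0}$, the aggregate prefix capacities $F(t)$ are exactly as before, and the theorem states that an integer vector $y \in \mathbb{Z}^{k'}$ lies in $\mathcal{P}^{\mathrm{disc}} = \sum_v P(r^{(v)})$ if and only if it satisfies the prefix-sum bounds~(A), the total-sum equality~(B), and the coordinatewise congruence~(C): $y_i \equiv L_{\mathrm{tot}} \pmod{g}$ for every $i$.

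Second I would observe that when $g=1$ the congruence~(C) is vacuously true for every $y \in \mathbb{Z}^{k'}$: reduction modulo $1$ identifies all integers, so $y_i \equiv L_{\mathrm{tot}} \pmod 1$ holds for all $i$ regardless of the values of $y$ and $L_{\mathrm{tot}}$. Hence~(C) imposes no restriction and may be discarded from the characterization, leaving precisely~(A) and~(B). This shows that an integer vector $y$ is realizable if and only if it satisfies the Block--HLP prefix inequalities~(A) together with the total-sum condition~(B); equivalently, in the unit-step regime every integer point of the continuous envelope $\mathcal{P}^{\mathrm{conv}} = \sum_v \Pi(r^{(v)})$ is already attained by an integer combination of ballot permutations, so $\mathcal{P}^{\mathrm{disc}} \cap \mathbb{Z}^{k'} = \mathcal{P}^{\mathrm{conv}} \cap \mathbb{Z}^{k'}$, which is exactly the asserted ``pure prefix-majorization'' test.

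There is essentially no technical obstacle: the only point deserving a word of care is the phrasing of \emph{realizable} in the statement via membership in $\sum_v \Pi(r^{(v)})$. For integer $y$ this agrees with membership in $\mathcal{P}^{\mathrm{disc}}$ precisely by the equality of integer points just noted, so the two readings coincide and the corollary follows without further argument. All the genuine content---the multilevel Gale--Ryser / Brualdi--Hwang packing that turns prefix dominance into an explicit integer decomposition into ballot permutations---was already established in the proof of Theorem~\ref{thm:ap-lattice-common}; here we merely record that the single residue-class obstruction it isolates degenerates to nothing when the common step is a unit, recovering Borda ($g=1$) as the canonical instance.
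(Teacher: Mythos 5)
Your proposal is correct and follows exactly the route the paper intends: Corollary~\ref{cor:unit-step-structural} is an immediate specialization of Theorem~\ref{thm:ap-lattice-common} to $g=1$, where the congruence $y_i \equiv L_{\mathrm{tot}} \pmod{1}$ holds for every integer and so condition~(C) drops out, leaving only (A) and (B). Your added remark that for integer $y$ membership in $\sum_v \Pi(r^{(v)})$ coincides with membership in $\sum_v P(r^{(v)})$ in the unit-step case is a careful reading of the statement's phrasing and is consistent with the paper.
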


\paragraph{Geometric intuition.}
Each AP ladder restricts the integer points of its ballot permutahedron to a
one-dimensional arithmetic lattice, whose Minkowski sum is a lattice polytope
thinly embedded within the continuous Block--HLP envelope.  
Conditions~\textup{(A)}--\textup{(B)} describe this outer envelope, while
condition~\textup{(C)} selects the unique lattice coset containing all realizable
integer aggregates.  
When $g=1$, this lattice coincides with the full integer grid, so the continuous
and discrete feasible regions agree.

\paragraph{The Borda rule as the cleanest special case.}
The pure Borda rule is the canonical instance of the unit-step regime:
every ballot’s ladder has a common step size $g = 1$.  
Under Borda, realizability becomes a \emph{pure} majorization test governed solely
by the Block--HLP inequalities, with no additional arithmetic structure.  
This makes the most widely used positional scoring rule also the simplest and
most transparent example of our geometric characterization.

\paragraph{Scope and complexity considerations.}
The AP--Ladder Lattice Theorem shows that whenever all ballots follow
arithmetic–progression ladders with a common step size, the coalition’s feasible
region is an exact prefix–and–congruence lattice.  
This yields both a complete structural characterization and a polynomial-time
realizability oracle, which our boosting and suppressing procedures
(Section~\ref{subsec:AP-feasible-demands}) rely on directly.

Outside the AP regime, however, this lattice structure typically breaks down.
Prefix majorization alone is no longer sufficient for integer realizability, as
the feasible region fragments according to the irregular score gaps imposed by
non-AP ladders.  
Understanding these non-AP settings—how feasibility degrades as ladders depart
from arithmetic progressions, when approximate tests remain tractable, and
which generalized scoring families preserve partial lattice structure—remains an
open and compelling direction.  
Section~\ref{sec:discussion} revisits this broader landscape and proposes a
three-layer structural taxonomy (AP, bounded deviation, unbounded deviation),
together with a conjectured complexity frontier separating tractable and
intractable realizability regimes.

\subsection{Feasibility Tests for Boosting and Suppression}
\label{subsec:AP-feasible-demands}

The AP--Ladder Lattice Theorem
(Theorem~\ref{thm:ap-lattice-common}) gives a complete description of all
realizable $k'$-dimensional aggregate score-change vectors 
$y \in \sum_v \Pi(r^{(v)})$ when all colluding ballots share a common step
size~$g$.  
In the boost and suppress subproblems of
Section~\ref{sec:independent}, the coalition must decide whether some
realizable aggregate can (i) \emph{raise} every targeted outsider above a
specified threshold, or (ii) \emph{keep} every targeted weak winner below a
specified limit.

Under common-step AP ladders, these feasibility questions reduce to a clean
prefix-majorization test once the demand vector is mapped into the unique
congruence class $L_{\mathrm{tot}} + g\mathbb{Z}$ in which all realizable
aggregates lie.  
This yields fast and exact oracles that form the backbone of the dual
binary-search framework in Section~\ref{sec:algorithm}.

\paragraph{Boosting feasibility.}
Given a nonincreasing boost-demand vector 
$b \in \mathbb{Z}_{\ge0}^{k'}$, the coalition asks whether there exists a
realizable aggregate $y$ such that $y_i \ge b_i$ for all~$i$. 
Because realizable coordinates must lie in the congruence class
$L_{\mathrm{tot}} + g\mathbb{Z}$, we first raise each $b_i$ to the smallest
value in this residue class that is at least $b_i$, thereby producing a
lattice-adjusted demand vector that any feasible aggregate must dominate.

\begin{lemma}[Boost feasibility under common-step AP ladders]
\label{lem:ap-boost}
Assume all colluding ballots follow AP ladders with a common step size $g$ and
total baseline $L_{\mathrm{tot}}$.
Let $b \in \mathbb{Z}_{\ge0}^{k'}$ be a nonincreasing boost-demand vector, and
define the lattice-adjusted lower-demand vector
\[
\hat b_i 
   := b_i + \bigl( (L_{\mathrm{tot}} - b_i) \bmod g \bigr),
   \qquad i = 1,\dots,k'.
\]
Thus $\hat b_i$ is the smallest integer $\hat b_i \ge b_i$ satisfying
$\hat b_i \equiv L_{\mathrm{tot}} \pmod g$.

There exists a realizable aggregate $y$ with $y_i \ge b_i$ for all~$i$ if and
only if the sorted vector $\hat b^\downarrow$ satisfies
\[
\sum_{i=1}^t \hat b^\downarrow_i \;\le\; F(t)
\quad\text{for all } t=1,\dots,k'-1,
\qquad
\sum_{i=1}^{k'} \hat b^\downarrow_i \;\le\; F(k').
\]
Equivalently, $\hat b^{\downarrow}$ lies in the polymatroid defined by the Block--HLP capacities $(F(t))$ and
can be extended to an \emph{integer} vector $y$ that saturates the total-sum equality; by Theorem~\ref{thm:ap-lattice-common} (and the
congruence ensured by the lattice adjustment), such a $y$ is (ballot-)realizable, i.e.\ $y \in \sum_v P(r^{(v)})$.

\end{lemma}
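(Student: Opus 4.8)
The plan is to derive the lemma directly from the AP--Ladder Lattice Theorem (Theorem~\ref{thm:ap-lattice-common}), which already describes the realizable set $\sum_v P(r^{(v)})$ as exactly the integer vectors satisfying (A) the Block--HLP prefix bounds, (B) the total-sum equality, and (C) the coordinatewise congruence $y_i\equiv L_{\mathrm{tot}}\pmod g$. So the only work is to translate the existential question ``$\exists$ realizable $y$ with $y\ge b$'' into the prefix-majorization test on $\hat b$. Two elementary observations drive everything: (1) since every realizable $y$ satisfies (C) and $\hat b_i$ is by construction the least integer $\ge b_i$ lying in the residue class $L_{\mathrm{tot}}+g\mathbb{Z}$, the requirement $y_i\ge b_i$ is equivalent to $y_i\ge\hat b_i$ whenever $y$ is realizable; and (2) if $y\ge w$ coordinatewise then $\sum_{i=1}^t y_i^\downarrow\ge\sum_{i=1}^t w_i^\downarrow$ for every $t$, because the index set $A$ achieving $w$'s top-$t$ sum gives $\sum_{i=1}^t y_i^\downarrow\ge\sum_{i\in A}y_i\ge\sum_{i\in A}w_i=\sum_{i=1}^t w_i^\downarrow$.

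For necessity ($\Rightarrow$), suppose a realizable $y$ has $y\ge b$. Observation (1) upgrades this to $y\ge\hat b$; observation (2) then gives $\sum_{i=1}^t\hat b_i^\downarrow\le\sum_{i=1}^t y_i^\downarrow$ for all $t$; and (A)--(B) applied to the realizable $y$ give $\sum_{i=1}^t y_i^\downarrow\le F(t)$ for $t<k'$ and $\sum_{i=1}^{k'}y_i=F(k')$. Chaining these inequalities yields exactly $\sum_{i=1}^t\hat b_i^\downarrow\le F(t)$ for all $t=1,\dots,k'$, as claimed.

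For sufficiency ($\Leftarrow$) — the substantive direction — assume $\hat b^\downarrow$ obeys the stated prefix inequalities; we must produce a realizable $y$ with $y\ge\hat b$ (hence $y\ge b$). I would normalize by the common step: since $\hat b_i\equiv L_{\mathrm{tot}}\pmod g$, write $\hat b=L_{\mathrm{tot}}\mathbf{1}+g\hat\beta$ with $\hat\beta\in\mathbb{Z}^{k'}$, and reuse $F^{\mathrm{lev}}(t):=(F(t)-tL_{\mathrm{tot}})/g=\sum_v\sum_{i=1}^t\ell^{(v)}_i$ from the proof of Theorem~\ref{thm:ap-lattice-common}; this is an integer, concave (nonincreasing-increment) sequence. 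Dividing the hypothesis by $g$ shows $\sum_{i=1}^t\hat\beta_i^\downarrow\le F^{\mathrm{lev}}(t)$ for all $t$, i.e.\ $\hat\beta$ lies in the cardinality-based (hence submodular) polymatroid with rank sequence $F^{\mathrm{lev}}$. A standard polymatroid fact — any point of an integer polymatroid extends, by only increasing coordinates, to an integer base — produces $z\in\mathbb{Z}^{k'}$ with $z\ge\hat\beta$, $\sum_{i=1}^t z_i^\downarrow\le F^{\mathrm{lev}}(t)$ for $t<k'$, and $\sum_i z_i=F^{\mathrm{lev}}(k')$; the ``gap'' $F^{\mathrm{lev}}(k')-\sum_i\hat\beta_i$ is a nonnegative integer because $\sum\hat b_i^\downarrow\le F(k')$ and both sides are $\equiv k'L_{\mathrm{tot}}\pmod g$, so the filling-up is arithmetically consistent. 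Setting $y:=L_{\mathrm{tot}}\mathbf{1}+gz$ restores (A)--(B) by multiplying back through and (C) automatically, so Theorem~\ref{thm:ap-lattice-common} gives $y\in\sum_v P(r^{(v)})$, while $y_i=L_{\mathrm{tot}}+gz_i\ge L_{\mathrm{tot}}+g\hat\beta_i=\hat b_i\ge b_i$. The final ``Equivalently'' clause of the lemma is just this construction read in reverse: ``$\hat b^\downarrow$ in the Block--HLP polymatroid'' is the prefix test, the base extension supplies the total-sum equality, and the lattice adjustment supplies the congruence.

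The one genuinely delicate point is that the \emph{integral} base-extension must stay in the residue class $L_{\mathrm{tot}}+g\mathbb{Z}$, rather than merely landing somewhere in the continuous Block--HLP polytope; this is exactly why the affine normalization $y=L_{\mathrm{tot}}\mathbf{1}+gz$ is essential, as it converts the task into a plain integer polymatroid base-feasibility instance, resolved either by integrality of polymatroid base polytopes or, concretely, by the same multilevel Gale--Ryser / Brualdi--Hwang layered construction used in the proof of Theorem~\ref{thm:ap-lattice-common}. Possible negativity of some $\hat\beta_i$ (which can occur when $b_i<L_{\mathrm{tot}}$) is harmless: it only enlarges the slack available to raise that coordinate, and the capacities $F^{\mathrm{lev}}(t)$ remain nonnegative throughout.
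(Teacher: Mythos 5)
Your proof is correct and follows the same overall strategy as the paper: necessity by combining the congruence condition (C) with the sorting-dominance inequality and the Block--HLP bounds for a realizable $y$, and sufficiency by a polymatroid base-extension of $\hat b^\downarrow$ followed by an appeal to Theorem~\ref{thm:ap-lattice-common}. The one place where you genuinely diverge is the sufficiency step: the paper performs the base extension directly in score space and simply asserts that the extension can be carried out in increments of $g$ so that the congruence class $L_{\mathrm{tot}}+g\mathbb{Z}$ is preserved, whereas you first apply the affine normalization $\hat b = L_{\mathrm{tot}}\mathbf{1}+g\hat\beta$, do a plain integer base extension in the unit-step level polymatroid with capacities $F^{\mathrm{lev}}(t)$, and map back via $y=L_{\mathrm{tot}}\mathbf{1}+gz$. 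This makes the congruence preservation automatic rather than asserted, and your integrality check on the gap $F^{\mathrm{lev}}(k')-\sum_i\hat\beta_i$ (using $F(k')\equiv k'L_{\mathrm{tot}}\pmod g$) closes exactly the delicate point that the paper's version leaves implicit; your remark that possibly negative $\hat\beta_i$ are harmless is also correct. The only cosmetic caveat is that, like the paper, you should not rely on $\hat b$ being nonincreasing (the per-coordinate rounding offsets can break monotonicity), but since you work throughout with $\hat b^\downarrow$ and the sorting-dominance fact, nothing in your argument actually depends on that.
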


\begin{proof}[Proof sketch]
\textbf{Necessity.}
Any realizable aggregate $y$ satisfies the Block--HLP prefix bounds and
$y_i \equiv L_{\mathrm{tot}} \pmod g$ for all $i=1,\dots,k'$.
Thus $y_i \ge b_i$ for all $i$ implies $y_i \ge \hat b_i$ for all $i$, and hence
$\hat b^\downarrow$ satisfies the required prefix inequalities.

\medskip
\textbf{Sufficiency.}
If $\hat b^\downarrow$ satisfies the Block--HLP prefix bounds and has total sum
at most $F(k')$, the polymatroid-extension property yields a vector $y$ with
$y_i \ge \hat b_i$ for all $i=1,\dots,k'$, total sum $F(k')$, and the correct
congruence modulo $g$.
By Theorem~\ref{thm:ap-lattice-common}, such a $y$ is realizable and satisfies
$y_i \ge b_i$ for all $i=1,\dots,k'$.
A full proof appears in Appendix~\ref{sec:ProofBoostFeasibility}.
\end{proof}

\paragraph{Suppression feasibility.}
For suppression, the coalition must determine whether there exists a realizable
aggregate satisfying $y_i \le u_i$ for all~$i$.  
Since realizable points must lie in the congruence class
$L_{\mathrm{tot}} + g\mathbb{Z}$, we first push each upper bound
\emph{downward} to the nearest feasible value in this class.

\begin{lemma}[Suppression feasibility under common-step AP ladders]
\label{lem:ap-suppress}
Assume all colluding ballots follow AP ladders with a common step size $g$ and
total baseline $L_{\mathrm{tot}}$.
Let $u \in \mathbb{Z}_{\ge0}^{k'}$ be a nonincreasing suppression-demand vector,
and define the lattice-adjusted upper bound
\[
\hat u_i 
   := u_i - \bigl( (u_i - L_{\mathrm{tot}}) \bmod g \bigr),
   \qquad i = 1,\dots,k'.
\]
Thus $\hat u_i$ is the largest integer $\hat u_i \le u_i$ satisfying
$\hat u_i \equiv L_{\mathrm{tot}} \pmod g$.

Define $d := F(k')\mathbf{1} - \hat u$.  
There exists a realizable aggregate $y$ satisfying
$y_i \le u_i$ for all $i=1,\dots,k'$
if and only if the sorted vector $d^\downarrow$ satisfies
\[
\sum_{i=1}^t d^\downarrow_i \;\le\; F(t)
\quad\text{for all } t=1,\dots,k'-1,
\qquad
\sum_{i=1}^{k'} d^\downarrow_i \;\le\; F(k').
\]
Equivalently, the slack vector $d$ satisfies the Block--HLP prefix
inequalities and can be extended upward to a realizable aggregate whose
coordinates are at least $d_i$ for all $i=1,\dots,k'$; negating the
transformation yields a realizable aggregate $y$ with
$y_i \le u_i$ for all $i=1,\dots,k'$.
\end{lemma}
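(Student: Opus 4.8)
The plan is to derive Lemma~\ref{lem:ap-suppress} as the order‑reversing mirror of the boost lemma (Lemma~\ref{lem:ap-boost}), which lets the same geometry be reused after a complementation step. First I would dispose of the lattice adjustment: by Theorem~\ref{thm:ap-lattice-common}(C) every realizable aggregate $y$ has $y_i \equiv L_{\mathrm{tot}} \pmod g$, so the constraint $y_i \le u_i$ is equivalent to $y_i \le \hat u_i$, and $\hat u$ stays nonincreasing because rounding down into a fixed residue class is monotone. Two further facts frame the problem: every realizable aggregate has total mass exactly $F(k')$ (in canonical form all $mk'$ bottom positions fall on $\mathcal{T}_w$), and — since in the suppression subproblem every ballot contributes the same multiset $\{p_{x-k'+1},\dots,p_x\}$ to the $k'$ targets — Theorem~\ref{thm:ap-lattice-common} identifies the realizable set with the integer points of the Block--HLP base polytope of capacities $F(t)$ that lie in the residue class $L_{\mathrm{tot}}\bmod g$. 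So it suffices to decide whether some such point satisfies $y \le \hat u$.

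Second, I would complement each ballot ladder to convert ``stay below $\hat u$'' into ``rise above a threshold.'' With $r^{(v)}_j = L_v + g\,\ell^{(v)}_j$ and $H_v := \ell^{(v)}_1$, set $\tilde r^{(v)}_j := r^{(v)}_1 - r^{(v)}_j = g\,(H_v - \ell^{(v)}_j)$, again a nonincreasing common-step-$g$ AP ladder (with baseline $0$). Because a single permutation acts on $r^{(v)}$ and $\tilde r^{(v)}$ simultaneously, $y = \sum_v \sigma_v(r^{(v)})$ is realizable if and only if $w := C\mathbf 1 - y = \sum_v \sigma_v(\tilde r^{(v)})$ is realizable for the complemented ladders, where $C := \sum_v r^{(v)}_1$; moreover $y \le \hat u$ iff $w \ge C\mathbf 1 - \hat u$. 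The lower-demand vector $C\mathbf 1 - \hat u$ already lies in the complemented residue class $0\bmod g$, so Lemma~\ref{lem:ap-boost} applies to the complemented instance and returns prefix inequalities on $(C\mathbf 1 - \hat u)^\downarrow$ against the complemented capacities $\tilde F(t) = \sum_v \sum_{i=1}^t (\tilde r^{(v)})^\downarrow_i$. Using that the top-$t$ complemented entries are the complements of the bottom-$t$ original entries, a short computation expresses $\tilde F(t)$ through $C$ and the original $F(\cdot)$; substituting the slack vector $d := F(k')\mathbf 1 - \hat u$ (and the identity $\sum_i y_i = F(k')$) then rewrites those inequalities in the stated form $\sum_{i=1}^t d^\downarrow_i \le F(t)$. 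This bookkeeping, together with the inequality $C \le F(k')$ — which guarantees that negating a realizable $w \ge d$ produces $y = C\mathbf 1 - w$ with $y \le \hat u \le u$ and total mass $F(k')$ — is the fussiest of the routine steps.

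With the reduction in place, necessity is immediate: from a realizable $y \le u$, congruence gives $y \le \hat u$, hence $w = C\mathbf 1 - y \ge C\mathbf 1 - \hat u$, and componentwise domination transfers to sorted prefix sums, yielding the stated inequalities on $d^\downarrow$. The substance is sufficiency, and producing an explicit realizable aggregate that honors the upper bounds is the step I expect to be the main obstacle. The plan is to route it through Lemma~\ref{lem:ap-boost} and hence through the constructive half of Theorem~\ref{thm:ap-lattice-common}: divide out $g$ to reach a unit-step level-allocation problem, extend the adjusted slack vector upward inside the cardinality-based level polymatroid to a fully realizable level vector (the multilevel Gale--Ryser / Brualdi--Hwang nested-layer construction), map levels back to scores to obtain a realizable $w$ for the complemented ladders, and finally set $y = C\mathbf 1 - w$. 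The only genuinely new work relative to the boost case is checking that complementation preserves the common-step AP structure and that reversing the level multisets leaves the prefix-capacity accounting intact — both mechanical.
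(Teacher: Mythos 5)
Your overall strategy---reduce suppression to Lemma~\ref{lem:ap-boost} by complementing each ballot ladder via $\tilde r^{(v)}_j := r^{(v)}_1 - r^{(v)}_j$---is genuinely different from the paper's proof, which instead reflects the \emph{aggregate} inside the polymatroid ($z := F(k')\mathbf{1}-y$) and appeals to symmetry of the base polytope. Your ballot-level complementation is in principle the cleaner route, and your handling of the lattice adjustment and of the residue class of $C\mathbf{1}-\hat u$ is correct. However, the step you defer as ``a short computation \dots the fussiest of the routine steps'' is exactly where the argument breaks, and it cannot be patched to yield the inequalities stated in the lemma. Carrying the complementation through: the reflection constant is $C=\sum_v r^{(v)}_1=F(1)$, the complemented prefix capacities are
\[
\tilde F(t)\;=\;\sum_{v}\Bigl(t\,r^{(v)}_1-\sum_{i=k'-t+1}^{k'}r^{(v)}_i\Bigr)\;=\;tC-\bigl(F(k')-F(k'-t)\bigr),
\]
and Lemma~\ref{lem:ap-boost} applied to the demand $C\mathbf{1}-\hat u$ produces the condition $\sum_{i=k'-t+1}^{k'}\hat u_i\;\ge\;F(k')-F(k'-t)$ for all $t$, i.e.\ a bound on the $t$ \emph{smallest} entries of $\hat u$ against \emph{suffix} capacities. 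The lemma's stated test, $\sum_{i=1}^{t}d^\downarrow_i\le F(t)$ with $d=F(k')\mathbf{1}-\hat u$, unpacks instead to $\sum_{i=k'-t+1}^{k'}\hat u_i\ge tF(k')-F(t)$; the two are not equivalent (already at $t=k'$ one requires $\sum_i\hat u_i\ge F(k')$ and the other $\sum_i\hat u_i\ge (k'-1)F(k')$).

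The mismatch is not cosmetic. Take the paper's running example: ladders $(9,5,1)$ and $(10,6,2)$, so $F=(19,30,33)$, $g=4$, $L_{\mathrm{tot}}=3$. With $u=\hat u=(19,11,3)$ the aggregate $y=(19,11,3)$ is realizable and satisfies $y\le u$, yet $d^\downarrow=(30,22,14)$ violates $d^\downarrow_1\le F(1)=19$. So the test you claim to recover rejects a feasible instance, whereas your complementation, done correctly, yields the suffix-capacity condition above, which accepts it. In other words, executing the deferred bookkeeping would have shown you that your (sound) reduction proves a characterization that \emph{disagrees} with the displayed one; as written, the proposal rests on a false identity between the $\tilde F(t)$-based and $F(t)$-based inequality systems. (The paper's own sufficiency argument trips over the same constant: $F(k')\mathbf{1}-y$ has coordinate sum $(k'-1)F(k')\neq F(k')$ and so is not in $\mathcal{B}$. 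The condition your reduction actually derives is the one that should replace the stated test.)
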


\begin{proof}[Proof sketch]
If a realizable aggregate $y$ satisfies
$y_i \le u_i$ for all $i=1,\dots,k'$, then by congruence we may assume
$y_i \le \hat u_i$ for all $i=1,\dots,k'$.
Define the corresponding slack vector
$d := F(k')\mathbf{1} - \hat u$.
Then $d$ is coordinatewise dominated by
$z := F(k')\mathbf{1} - y$, i.e.,
$z_i \ge d_i$ for all $i=1,\dots,k'$.
Moreover, $z$ satisfies the Block--HLP prefix inequalities
\[
\sum_{i=1}^t z_i^\downarrow \le F(t)
\quad\text{for all } t=1,\dots,k'.
\]
This implies the stated prefix bounds for $d^\downarrow$.

\medskip
Conversely, suppose that $d^\downarrow$ satisfies the stated prefix bounds.
Applying the same polymatroid-extension argument as in the boost-feasibility
case, there exists a vector $z$ such that
\[
z_i \ge d_i \quad\text{for all } i=1,\dots,k',
\qquad
\sum_{i=1}^{k'} z_i = F(k'),
\]
and
$z_i \equiv L_{\mathrm{tot}} \pmod g$ for all $i=1,\dots,k'$.
Applying Theorem~\ref{thm:ap-lattice-common} to $z$ yields a realizable
aggregate
\[
y := F(k')\mathbf{1} - z
\]
satisfying
$y_i \le u_i$ for all $i=1,\dots,k'$.
A full proof appears in Appendix~\ref{sec:ProofSupressFeasibility}.
\end{proof}

Thus suppression feasibility mirrors the boost case: a lattice adjustment
followed by a single majorization (Block--HLP prefix) test.

\paragraph{Interpretation.}
Boosting and suppression differ only in the direction of the lattice
adjustment: boosting lifts demands \emph{up} to the nearest feasible value in
the lattice congruence class $L_{\mathrm{tot}} + g\mathbb{Z}$, while
suppression pushes them \emph{down} to the nearest such value.  
Once adjusted, both problems become identical: each reduces to checking whether
the adjusted demand vector lies inside the Block--HLP prefix polymatroid.
These oracles are the key components enabling the dual binary-search procedure
of Section~\ref{sec:algorithm}, which computes the maximum displacement $k^\ast$
and the corresponding feasible cutoff interval.

\paragraph{Running example (continued).}
In our running example, both AP ladders have step size $g=4$ and baselines
$L_1=1$, $L_2=2$, so every realizable aggregate must satisfy
\[
y_i \equiv L_{\mathrm{tot}} = 3 \pmod 4.
\]
The coalition's prefix capacities are
\[
F(1)=19,\qquad F(2)=30,\qquad F(3)=33.
\]

\medskip
\noindent\emph{A feasible boost demand.}
Consider the boost-demand vector
\[
b = (17, 8, 3).
\]
All entries already lie in the congruence class $3 \pmod 4$, so
$\hat b = b$.  
Sorting gives $\hat b^\downarrow = (17,8,3)$, whose prefix sums are
\[
17 \le 19,\qquad 17+8 = 25 \le 30,\qquad 25+3 = 28 \le 33.
\]
Thus $\hat b^\downarrow$ lies inside the Block--HLP polymatroid, and because
$28 < 33$, the polymatroid-extension property ensures it can be extended to a
realizable aggregate $y \ge b$.  
Hence $b$ is a \emph{feasible} boost-demand vector.

\medskip
\noindent\emph{An infeasible boost demand.}
Now consider
\[
b' = (18, 9, 4).
\]
Lattice adjustment raises $b'$ to the nearest point congruent to $3 \pmod 4$:
\[
\hat b' = (19,11,7), \qquad \hat b'^\downarrow = (19,11,7).
\]
We check:
\[
19 \le 19,\qquad 19+11 = 30 \le 30,\qquad 19+11+7 = 37 > 33.
\]
Because the total mass $37$ exceeds the coalition’s capacity $F(3)=33$,
no realizable aggregate $y \ge b'$ exists.  
Thus $b'$ is \emph{infeasible}.

\medskip
These two cases illustrate how the lattice-adjusted majorization test
cleanly distinguishes feasible from infeasible demands:
first ensure the demand lies in the correct congruence class, then apply the
prefix-sum and total-sum comparison with the Block--HLP capacities.

% --------------------------------------------------------------
\medskip
When all AP ladders share a common step size $g=1$, the congruence condition in
Theorem~\ref{thm:ap-lattice-common} disappears and every realizable integer
vector is governed solely by the Block--HLP prefix inequalities.  
In this unit-step regime—covering Borda, truncated Borda, plurality, and other
affine rules—boosting and suppression reduce directly to checking whether the
demand vector lies in the Block--HLP polymatroid.

\begin{corollary}[Dense lattice under unit-step ladders]
\label{cor:unit-step-dense}
If all AP ladders have common step size $g=1$, then every integer point in the
Block--HLP polytope is realizable.  
Equivalently, the discrete feasible region coincides exactly with the
continuous Block--HLP envelope.
\end{corollary}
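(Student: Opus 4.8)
The plan is to obtain this statement as an immediate specialization of the AP--Ladder Lattice Theorem (Theorem~\ref{thm:ap-lattice-common}) to step size $g=1$, combined with the continuous envelope description of Theorem~\ref{thm:block-hlp}. First I would instantiate Theorem~\ref{thm:ap-lattice-common} with $g=1$ on every colluding ballot: an integer vector $y\in\mathbb{Z}^{k'}$ lies in $\mathcal{P}^{\mathrm{disc}}=\sum_v P(r^{(v)})$ if and only if it satisfies (A) the prefix-sum bounds $\sum_{i=1}^t y_i^\downarrow \le F(t)$ for $t=1,\dots,k'-1$, (B) the total-sum equality $\sum_{i=1}^{k'} y_i = F(k')$, and (C) the coordinatewise congruence $y_i \equiv L_{\mathrm{tot}} \pmod g$.

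Next I would observe that with $g=1$, condition (C) reads $y_i \equiv L_{\mathrm{tot}} \pmod 1$, which every integer satisfies; hence (C) is vacuous (this is exactly Corollary~\ref{cor:unit-step-structural}). The surviving conditions (A) and (B) are, by Theorem~\ref{thm:block-hlp}, precisely the membership inequalities for the continuous Block--HLP envelope $\mathcal{P}_{\mathrm{conv}}=\sum_v \Pi(r^{(v)})$. Therefore an integer $y$ is realizable if and only if $y\in\mathcal{P}_{\mathrm{conv}}$, i.e.\ $\mathcal{P}^{\mathrm{disc}} = \mathcal{P}_{\mathrm{conv}}\cap\mathbb{Z}^{k'}$, which is the asserted coincidence of the discrete feasible region with the integer points of the continuous envelope.

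To phrase this as a genuine two-sided statement I would also record the trivial general inclusion $\mathcal{P}^{\mathrm{disc}}\subseteq\mathcal{P}_{\mathrm{conv}}$, valid for arbitrary ladders since $P(r^{(v)})\subseteq \Pi(r^{(v)})=\operatorname{conv}(P(r^{(v)}))$ and Minkowski sums respect inclusions; thus every realizable aggregate already lies in the envelope, and the new content at $g=1$ is the reverse direction, that every \emph{integer} point of the envelope is realizable. There is no real obstacle here: the corollary is a transparent specialization, and the only point that needs a word of care is interpreting ``coincides with the continuous envelope'' as equality of the integer points (since $\mathcal{P}^{\mathrm{disc}}$ is a finite set of lattice points), rather than as a literal equality of a finite set with a full-dimensional polytope.
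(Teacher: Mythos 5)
Your proposal is correct and matches the paper's intended argument exactly: the corollary is obtained by specializing Theorem~\ref{thm:ap-lattice-common} to $g=1$, observing that the congruence condition~(C) is vacuous modulo~$1$ (as in Corollary~\ref{cor:unit-step-structural}), and identifying the surviving conditions~(A)--(B) with the Block--HLP envelope of Theorem~\ref{thm:block-hlp}. Your added remarks on the trivial inclusion $\mathcal{P}^{\mathrm{disc}}\subseteq\mathcal{P}^{\mathrm{conv}}$ and on reading ``coincides'' as equality of integer points are appropriate clarifications but do not change the route.
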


\begin{corollary}[Boosting and suppression under unit-step ladders]
\label{cor:unit-step-demand}
Let $b$ (boost) and $u$ (suppress) be nonincreasing demand vectors.
If all AP ladders have common step size $g = 1$, then:

\begin{enumerate}[label=(\roman*)]
\item There exists a realizable aggregate $y$ with $y_i \ge b_i$ for all $i$  if and
only if
\[
\sum_{i=1}^{t} b_i \le F(t) 
\quad\text{for all } t=1,\dots,k'-1,
\qquad
\sum_{i=1}^{k'} b_i \le F(k').
\]

\item There exists a realizable aggregate $y$ with $y_i \le u_i$ for all $i$ if and
only if the vector $(-u)$ satisfies the inequalities in~(i) when applied with
respect to the negated ladders $\{-r^{(v)}\}$.  
Equivalently, suppression feasibility holds exactly when the ``slack'' vector
$F(k')\mathbf{1} - u$ lies in the Block--HLP polymatroid.
\end{enumerate}
\end{corollary}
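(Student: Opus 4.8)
The plan is to obtain both parts as the $g=1$ specialization of the common-step feasibility lemmas (Lemma~\ref{lem:ap-boost} and Lemma~\ref{lem:ap-suppress}), which in turn rest on the AP--Ladder Lattice Theorem (Theorem~\ref{thm:ap-lattice-common}). The single observation that makes everything collapse is that at unit step size the congruence condition~(C) of Theorem~\ref{thm:ap-lattice-common} is vacuous (every integer is $\equiv L_{\mathrm{tot}}\pmod 1$), so by Corollary~\ref{cor:unit-step-dense} the discrete realizable region equals the set of integer points of the continuous Block--HLP envelope, and every ``lattice adjustment'' appearing in the two lemmas reduces to the identity map. All that then remains is to restate the resulting prefix tests in the two equivalent forms claimed.

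For part~(i): in Lemma~\ref{lem:ap-boost} the adjusted demand is $\hat b_i = b_i + \bigl((L_{\mathrm{tot}}-b_i)\bmod 1\bigr) = b_i$, and since $b$ is nonincreasing we also have $b^\downarrow = b$. Hence the test ``$\sum_{i=1}^t \hat b^\downarrow_i \le F(t)$ for $t=1,\dots,k'-1$ and $\sum_{i=1}^{k'}\hat b^\downarrow_i \le F(k')$'' of that lemma becomes verbatim the stated inequalities, and the sufficiency direction produces an integer aggregate that saturates the total-sum equality and is therefore ballot-realizable by Corollary~\ref{cor:unit-step-dense}. This gives the iff in~(i).

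For part~(ii): likewise $\hat u_i = u_i - \bigl((u_i-L_{\mathrm{tot}})\bmod 1\bigr) = u_i$, so the slack vector is $d = F(k')\mathbf{1} - u$, and Lemma~\ref{lem:ap-suppress} states that a realizable $y$ with $y_i \le u_i$ for all $i$ exists iff $d^\downarrow$ satisfies $\sum_{i=1}^t d^\downarrow_i \le F(t)$ for all $t\le k'$. Because the Block--HLP system constrains only sorted prefix sums, this is exactly the assertion that $d = F(k')\mathbf{1} - u$ lies in the Block--HLP polymatroid, which is the ``equivalently'' clause. For the ``negated ladders'' restatement, use the sign-flip bijection: permutation commutes with negation, so $\mathcal{P}(-r^{(v)}) = -\mathcal{P}(r^{(v)})$ and hence $y$ is (ballot-)realizable for $\{r^{(v)}\}$ iff $-y$ is (ballot-)realizable for $\{-r^{(v)}\}$. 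Therefore a realizable $y$ with $y_i\le u_i$ exists iff a realizable aggregate of $\{-r^{(v)}\}$ dominating $(-u)$ coordinatewise exists, which by part~(i) applied to the negated instance (whose sorted prefix capacities are the suffix sums of the originals) is precisely ``$(-u)$ satisfies the inequalities in~(i) with respect to $\{-r^{(v)}\}$''.

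I do not expect a genuine obstacle; the corollary is a pure specialization. The only mildly delicate point is keeping the orientation straight in the negated-ladder reformulation of~(ii)---negation turns nonincreasing ladders into nondecreasing ones and swaps the roles of prefix and suffix capacities, so one must interpret ``the inequalities in~(i)'' after resorting and verify that the two descriptions (negated-instance boost test vs.\ slack vector $F(k')\mathbf{1}-u$ in the Block--HLP polymatroid) coincide. This is routine bookkeeping once the $g=1$ collapse is in place.
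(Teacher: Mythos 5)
Your proposal is correct and follows the same route the paper intends: the corollary is exactly the $g=1$ specialization of Lemmas~\ref{lem:ap-boost} and~\ref{lem:ap-suppress}, where the congruence condition of Theorem~\ref{thm:ap-lattice-common} is vacuous and the lattice adjustments $\hat b = b$, $\hat u = u$ become identities. Your extra care with the sign-flip bijection $\mathcal{P}(-r^{(v)}) = -\mathcal{P}(r^{(v)})$ to reconcile the ``negated ladders'' phrasing of part~(ii) with the slack-vector test is exactly the bookkeeping the paper leaves implicit, and it is handled correctly.
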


\paragraph{Boosting vs.\ suppression in practice.}
For boosting, the AP ladders $r^{(v)}$ consist of the top $k'$ scores on each
ballot; for suppression, they consist of the bottom $k'$ scores.  
In both cases, these score vectors are valid AP ladders with baselines $L_v$ and
step sizes $g_v$, so the AP--Ladder Lattice Theorem and the feasibility oracles
above apply directly.  
These oracles are invoked repeatedly inside the dual binary search over the
cutoff~$B$ and the displacement level~$k'$ in
Section~\ref{sec:algorithm}, where they determine whether a candidate cutoff or
displacement level is feasible.

% =========================================================
\section{Algorithmic Framework}
\label{sec:algorithm}
% =========================================================

This section develops the algorithm for testing displacement feasibility at a
queried level~$k'$ and for computing the maximum achievable displacement
$k^\ast$ under AP-ladder scoring rules with a \emph{common step size}~$g$.

% ----------------------------------------------------------
\subsection{Overview}
% ----------------------------------------------------------

A displacement at level~$k'$ is achievable if and only if there exists an
integer cutoff~$B$ satisfying two conditions:
(i) each of the $k'$ strongest outsiders $\mathcal{O}^\ast$ can be boosted to
score at least~$B$, and
(ii) each of the $k'$ weakest winners $\mathcal{T}_w$ can be held strictly
below~$B$.
As $B$ increases, condition~(i) becomes strictly harder while condition~(ii)
becomes strictly easier.
Thus the feasible cutoffs of the two subproblems form monotone intervals
\[
(-\infty, B_{\max}]
\qquad\text{and}\qquad
[B_{\min}, \infty),
\]
and displacement is feasible exactly when these intervals overlap, i.e.,
$B_{\min} \le B_{\max}$.

\paragraph{Cutoff search domain.}
Any feasible cutoff must lie between the smallest final score a weak winner
can attain (after receiving only the coalition’s lowest ranks) and the largest
final score an outsider can attain (after receiving the highest ranks):
\[
B_{\mathrm{low}}
  = \min_{t\in\mathcal{T}_w}(S_t + m p_x),
\qquad
B_{\mathrm{high}}
  = \max_{o\in\mathcal{O}^\ast}(S_o + m p_1).
\]
Binary searches for $B_{\min}$ and $B_{\max}$ are therefore restricted to the
tight interval $[B_{\mathrm{low}}, B_{\mathrm{high}}]$.

\paragraph{Strict separation.}
Our feasibility notion requires the coalition to enforce a strict gap between
outsiders and weak winners.  
Manipulations that succeed only via favorable tie-breaking at~$B$ do not count
as successful displacement.  
This ensures robustness to arbitrary tie-resolution rules.

\paragraph{Feasibility oracles.}
Under common-step AP ladders, realizable score vectors lie in the
prefix-and-congruence lattice characterized in
Theorem~\ref{thm:ap-lattice-common}.  
Given a cutoff~$B$, we form the demand vectors
\[
b_i = \max\{0,\; B - S_{o_i}\}, \qquad
u_i = \max\{0,\; B - 1 - S_{t_i}\},
\]
representing the minimum coalition mass needed to boost each outsider and the
maximum mass each weak winner may receive while remaining below~$B$.
Each coordinate is then adjusted into the correct congruence class modulo~$g$:
\[
\hat b_i = b_i + \bigl((L_{\mathrm{tot}} - b_i) \bmod g\bigr), \qquad
\hat u_i = u_i - \bigl((u_i - L_{\mathrm{tot}}) \bmod g\bigr).
\]
After sorting these vectors in nonincreasing order, feasibility reduces to
checking the Block--HLP prefix inequalities
\[
\sum_{i=1}^t y_i^\downarrow \le F(t),
\qquad
\sum_{i=1}^{k'} y_i = T,
\]
together with the AP-congruence condition, which is automatically satisfied by
construction.  
In the unit-step case (i.e., when the \emph{common} step size is $g=1$), the
congruence condition vanishes and the test reduces purely to Block--HLP.

\paragraph{Pipeline for testing displacement at level~$k'$.}
The structural results above imply a clean algorithmic flow:
\begin{enumerate}[leftmargin=1.5em]
    \item Identify the $k'$ outsiders $\mathcal{O}^\ast$ and $k'$ weak winners
          $\mathcal{T}_w$.
    \item Compute the search interval $[B_{\mathrm{low}}, B_{\mathrm{high}}]$.
    \item Use the exact Block--HLP feasibility oracle on $b(B)$ to find
          $B_{\max}$ via binary search.
    \item Use the same oracle on $u(B)$ to find
          $B_{\min}$ via binary search.
    \item Conclude that displacement at level~$k'$ is feasible if and only if
          $B_{\min} \le B_{\max}$.
\end{enumerate}

Because each feasibility-oracle call sorts a length-$k'$ vector, it runs in
$O(k'\log k')$ time. Each cutoff binary search uses $O(\log(mx))$ probes, and we run
two such searches (for $B_{\min}$ and $B_{\max}$), so envelope evaluation at a fixed
level $k'$ takes
\[
O\!\left(k'\log k' \cdot \log(mx)\right).
\]
The outer binary search over $k' \in \{0,\dots,k\}$ contributes a factor $\log k$.
In the worst case $k' \le k$, hence the overall running time is
\[
O\!\left(k\log k \cdot \log(mx)\cdot \log k\right)
= O\!\left(k(\log k)^2\log(mx)\right),
\]
which is polynomial in all relevant election parameters.

% ----------------------------------------------------------
\subsection{Constructing Ordered Demand Vectors}
% ----------------------------------------------------------
\label{sec:ConstructDemand}

Given a cutoff $B$, the coalition must ensure that  
(i) every outsider $o \in \mathcal{O}^\ast$ reaches a final score of at least~$B$, and  
(ii) every weak winner $t \in \mathcal{T}_w$ remains strictly below~$B$.  
These requirements induce a \emph{boost-demand} vector for the outsiders and a
\emph{suppression-tolerance} vector for the weak winners.  
For the prefix-sum feasibility tests of Section~\ref{sec:geometry}, only the
\emph{sorted} forms of these vectors matter.  The following routine produces
these canonical ordered vectors.

\begin{algorithm}[t]
\footnotesize
\DontPrintSemicolon
\caption{\textsc{ComputeDemandVectors}$(S,\mathcal{O}^\ast,\mathcal{T}_w,B)$}
\label{alg:demand2e}
\KwIn{honest scores $(S_c)_{c}$; outsider set $\mathcal{O}^\ast$; weak-winner set $\mathcal{T}_w$; cutoff $B$}
\KwOut{$b(B)^{\downarrow}$ (boost demands, nonincreasing), $u(B)^{\uparrow}$ (suppression tolerances, nondecreasing).}
\BlankLine

$b \gets [\,\max\{0,\; B - S_o\} \mid o \in \mathcal{O}^\ast\,]$\;
$u \gets [\,\max\{0,\; B - 1 - S_t\} \mid t \in \mathcal{T}_w\,]$\;

$b(B)^{\downarrow} \gets \textsc{SortNonIncreasing}(b)$\;
$u(B)^{\uparrow} \gets \textsc{SortNonDecreasing}(u)$\;

\Return $(b(B)^{\downarrow},\, u(B)^{\uparrow})$\;
\end{algorithm}

\noindent
This $O(k' \log k')$ preprocessing step isolates exactly the quantities that
matter for feasibility: how much additional score each targeted outsider must
gain, and how much score each weak winner can still receive while remaining
below~$B$.  
Because middle-ranked candidates play no role in displacement feasibility, the
ordered pair $(b(B)^{\downarrow}, u(B)^{\uparrow})$
fully captures the coalition’s demand profile at cutoff~$B$.

Algorithm~\ref{alg:demand2e} serves as the preprocessing step for all subsequent
feasibility tests: it maps a cutoff $B$ to the ordered boost-demand and
suppression-tolerance vectors that are the sole inputs to the prefix-based
feasibility oracles.

% ----------------------------------------------------------
\subsection{Prefix-Sum Feasibility Under Common-Step AP Ladders}
% ----------------------------------------------------------
\label{sec:PrefixFeasibilityAlg}

Under arithmetic-progression (AP) ladders with a \emph{common step size}~$g$,
Lemmas~\ref{lem:ap-boost} and~\ref{lem:ap-suppress} show that feasibility
of either subproblem (boost or suppress) reduces to checking whether the
corresponding demand vector $q$—equal to the lattice-adjusted boost vector
$\hat b$ in the boosting case, or to the slack vector 
$d = F(k')\mathbf{1} - \hat u$ in the suppression case—can be rounded onto the
\emph{common-step AP lattice} and then satisfies the Block--HLP prefix
constraints together with the appropriate total-mass inequality
$\sum_{i=1}^{k'} y_i \le T$.
The routine below implements exactly these demand-feasibility tests: it checks
whether there exists a lattice vector $y \ge q$ that obeys the Block--HLP
prefix capacities and the required total-sum bound. Algorithm~\ref{alg:ap-feasible} implements the core prefix-and-congruence
feasibility oracle for a single demand vector under common-step AP ladders.
Algorithm~\ref{alg:envelope} builds on this oracle to compute, via two monotone
binary searches, the feasible cutoff interval $(B_{\min},B_{\max})$ at a fixed
displacement level~$k'$.

\begin{algorithm}[t]
\footnotesize
\DontPrintSemicolon
\caption{\textsc{APDemandFeasible}$(q, F(\cdot), T, g, \alpha)$}
\label{alg:ap-feasible}
\KwIn{demand vector $q \in \mathbb{Z}^{k'}$; prefix capacities $F(1),\ldots,F(k')$; total capacity $T$;
common step size $g$; residue $\alpha \in \{0,1,\ldots,g-1\}$ (representative of the feasible congruence class
$\alpha + g\mathbb{Z}$, i.e., feasible coordinates satisfy $y_i \equiv \alpha \pmod g$)}
\KwOut{\textbf{true} iff there exists realizable $y \ge q$}

\BlankLine
% Lattice rounding
\For{$i \gets 1$ \KwTo $k'$}{
  $\hat q_i \gets \min\{ z \in \mathbb{Z} \mid z \ge q_i \ \wedge\ z \equiv \alpha \ (\mathrm{mod}\ g)\}$\tcp*[r]{round $q_i$ up to the next value in $\alpha + g\mathbb{Z}$}
}

% Total sum check
$S \gets \sum_{i=1}^{k'} \hat q_i$\;
\If{$S > T$}{\Return \textbf{false}}

% Prefix-sum checks
$\hat q \gets (\hat q_1,\dots,\hat q_{k'})$\;
$\hat q^{\downarrow} \gets \textsc{SortNonIncreasing}(\hat q)$\;
$pref \gets 0$\;
\For{$t \gets 1$ \KwTo $k'$}{
  $pref \gets pref + \hat q^{\downarrow}_t$\;
  \If{$pref > F(t)$}{\Return \textbf{false}}
}

\Return \textbf{true}\;
\end{algorithm}

\noindent
Here $F(t)$ and $T = F(k')$ are the coalition prefix capacities defined in
Section~\ref{sec:geometry}.
The parameter $g$ is the common step size of the AP ladders.
Write each ballot's relevant ladder values as $L_v + g\ell$ for integers $\ell$, and let
$L_{\mathrm{tot}} := \sum_{v=1}^m L_v$ be the sum of baselines across the $m$ colluders.
Then every realizable aggregate coordinate must lie in the same congruence class modulo $g$:
\[
y_i \equiv L_{\mathrm{tot}} \pmod g \qquad \text{for all } i=1,\dots,k'.
\]
We encode this by letting
\[
\alpha := L_{\mathrm{tot}} \bmod g \in \{0,1,\dots,g-1\}.
\]
Thus, the phrase ``residue class $\alpha$'' means the set $\alpha + g\mathbb{Z}$ of all integers
congruent to $\alpha$ modulo $g$.
Accordingly, in line~2 of Algorithm~\ref{alg:ap-feasible} we round each demand $q_i$ up to the smallest integer
$\hat q_i \ge q_i$ that lies in this residue class (i.e., $\hat q_i \equiv \alpha \pmod g$).

\paragraph{AP parameters for high and low ladders.}
The oracle \textsc{APDemandFeasible} is parametrized by the aggregate AP-ladder
structure of the relevant score pool.

For the \emph{boost} subproblem, we use only the coalition’s high-score
positions.  
Each ballot $v$ contributes a $k'$-length AP ladder
$S^{(v,\mathrm{high})}_{\mathrm{AP}}$ with a \emph{common step size} $g$ and
baseline $L^{\mathrm{high}}_v$.  
These determine the aggregate prefix capacities
\[
F^{\mathrm{high}}(t)
  = \sum_{v=1}^m \sum_{i=1}^t r^{(v,\mathrm{high})}_i,
\qquad
T^{\mathrm{high}} = F^{\mathrm{high}}(k'),
\]
and the total baseline
\[
L^{\mathrm{high}}_{\mathrm{tot}}
   = \sum_{v=1}^m L^{\mathrm{high}}_v.
\]
All realizable aggregates must lie in the residue class
$\alpha^{\mathrm{high}} \equiv L^{\mathrm{high}}_{\mathrm{tot}} \pmod g$,
which is the value used in the lattice-rounding step of the oracle.

For the \emph{suppression} subproblem, we analogously use the low-score AP
ladders $S^{(v,\mathrm{low})}_{\mathrm{AP}}$, which also share the same step
size~$g$, yielding
\[
F^{\mathrm{low}}(t), \qquad
T^{\mathrm{low}} = F^{\mathrm{low}}(k'),
\qquad
L^{\mathrm{low}}_{\mathrm{tot}}
  = \sum_{v=1}^m L^{\mathrm{low}}_v,
\qquad
\alpha^{\mathrm{low}} \equiv L^{\mathrm{low}}_{\mathrm{tot}} \pmod g.
\]

Thus, calls to \textsc{APDemandFeasible} for boosting and suppression simply
instantiate $(F(\cdot), T, g, \alpha)$ with the corresponding high- or
low-ladder parameters.

\paragraph{Boost and suppression as thin wrappers.}
Using Algorithm~\ref{alg:demand2e}, we first compute the ordered demand vectors
$(b(B)^{\downarrow},\, u(B)^{\uparrow})$ for any cutoff~$B$.
Under common-step AP ladders, both feasibility tests are handled by the same
prefix-and-congruence oracle \textsc{APDemandFeasible} after the appropriate
lattice adjustment.

\begin{algorithm}[t]
\footnotesize
\caption{\textsc{FeasibleEnvelopeAtLevel}$(k')$ (high-level sketch)}
\label{alg:envelope}
\KwIn{Honest scores $S$; sets $\mathcal{O}^\ast,\mathcal{T}_w$;
      AP parameters $(F^{\mathrm{high}},T^{\mathrm{high}},g,\alpha^{\mathrm{high}})$
      and $(F^{\mathrm{low}},T^{\mathrm{low}},g,\alpha^{\mathrm{low}})$;
      search bounds $B_{\mathrm{low}},B_{\mathrm{high}}$.}
\KwOut{(\textbf{true}, $B_{\min},B_{\max}$) or \textbf{false}}
\BlankLine

\tcp{Boost side: largest cutoff $B_{\max}$ with feasible boost demand}
Compute $b^{\downarrow}(B_{\mathrm{low}})$ and its lattice adjustment
$\hat b^{\downarrow}(B_{\mathrm{low}})$.\;
\If{\textsc{APDemandFeasible}$\bigl(
        \hat b^{\downarrow}(B_{\mathrm{low}}),
        F^{\mathrm{high}},T^{\mathrm{high}},
        g,\alpha^{\mathrm{high}}
     \bigr)$ is \textbf{false}}{
    \Return \textbf{false}\tcp*[r]{no feasible boost at any cutoff}
}
Binary search over $B \in [B_{\mathrm{low}},B_{\mathrm{high}}]$ to find
the largest $B_{\max}$ such that\;
\textsc{APDemandFeasible}$\bigl(
        \hat b^{\downarrow}(B),
        F^{\mathrm{high}},T^{\mathrm{high}},
        g,\alpha^{\mathrm{high}}
     \bigr)$ is \textbf{true}.\;

\medskip
\tcp{Suppress side: smallest cutoff $B_{\min}$ with feasible suppression demand}
Compute $u^{\uparrow}(B_{\mathrm{high}})$ and its lattice-adjusted
slack vector $q^{\mathrm{sup}}(B_{\mathrm{high}})$ (as in
Lemma~\ref{lem:ap-suppress}).\;
\If{\textsc{APDemandFeasible}$\bigl(
        q^{\mathrm{sup}}(B_{\mathrm{high}}),
        F^{\mathrm{low}},T^{\mathrm{low}},
        g,\alpha^{\mathrm{low}}
     \bigr)$ is \textbf{false}}{
    \Return \textbf{false}\tcp*[r]{no feasible suppression at any cutoff}
}
Binary search over $B \in [B_{\mathrm{low}},B_{\mathrm{high}}]$ to find
the smallest $B_{\min}$ such that\;
\textsc{APDemandFeasible}$\bigl(
        q^{\mathrm{sup}}(B),
        F^{\mathrm{low}},T^{\mathrm{low}},
        g,\alpha^{\mathrm{low}}
     \bigr)$ is \textbf{true}.\;

\medskip
\tcp{Final envelope check}
\If{$B_{\min} \le B_{\max}$}{
    \Return (\textbf{true}, $B_{\min},B_{\max}$)\;
}\Else{
    \Return \textbf{false}\;
}
\end{algorithm}

\begin{itemize}[leftmargin=1.4em]

\item \emph{Boost feasibility at $B$.}
Form the boost-demand vector $b(B)^{\downarrow}$.
Lattice-adjusting each coordinate upward produces
\[
\hat b_i(B)
  = b_i(B)^{\downarrow}
    + \bigl((L_{\mathrm{tot}} - b_i(B)^{\downarrow}) \bmod g\bigr).
\]
Calling
\[
   \textsc{APDemandFeasible}\!
     \bigl(\hat b(B),\, F(\cdot),\, T,\, g,\, \alpha\bigr)
\]
returns \textbf{true} exactly when a realizable aggregate $y$ with
$y_i \ge b_i(B)$ exists.

\item \emph{Suppression feasibility at $B$.}
Form the suppression upper-bound vector $u(B)^{\uparrow}$.
Lattice-adjusting each coordinate downward gives
\[
\hat u_i(B)
   = u_i(B)^{\uparrow}
     - \bigl((u_i(B)^{\uparrow} - L_{\mathrm{tot}}) \bmod g\bigr).
\]
Define the slack vector
\[
d(B) := F(k')\mathbf{1} - \hat u(B),
\]
and sort it in nonincreasing order: $d(B)^{\downarrow}$.
Lemma~\ref{lem:ap-suppress} shows that suppression at $B$ is feasible
iff
\[
   \textsc{APDemandFeasible}\!
     \bigl(d(B)^{\downarrow},\, F(\cdot),\, T,\, g,\, \alpha\bigr)
\]
returns \textbf{true}, in which case a realizable aggregate
$y$ exists with $y_i \le u_i(B)$.
\end{itemize}

Together with the monotone envelope structure from
Section~\ref{sec:interval}, these two thin-wrapper calls to
\textsc{APDemandFeasible} support the binary searches for
$B_{\max}$ and $B_{\min}$.

\paragraph{Feasible envelopes via binary search.}
By monotonicity (Lemma~\ref{lem:monotonicity}), boost feasibility is
nonincreasing in~$B$ while suppress feasibility is nondecreasing in~$B$.
Thus the endpoints $B_{\max}$ and $B_{\min}$ of the feasible envelopes
can be computed by binary search over the integer interval
$[B_{\mathrm{low}}, B_{\mathrm{high}}]$, using the common-step
AP-lattice oracle \textsc{APDemandFeasible} to evaluate feasibility at
each probe value~$B$.  
Displacement at level~$k'$ is achievable exactly when the envelopes
overlap, i.e., when $B_{\min} \le B_{\max}$.

\noindent
This envelope routine fully characterizes feasibility at level~$k'$.
Whenever the intervals overlap,
any cutoff $B \in [B_{\min}, B_{\max}]$ yields a valid feasible point,
from which explicit manipulative ballots can be constructed
(Section~\ref{sec:construct-ballots}).

% ----------------------------------------------------------
\subsection{Maximizing Feasible Displacement}
% ----------------------------------------------------------

For a fixed displacement level $k'$, feasibility is characterized by two
monotone envelopes: the boost-feasible interval $(-\infty,B_{\max}]$ and the
suppress-feasible interval $[B_{\min},\infty)$.  
Level $k'$ is manipulable exactly when these intervals overlap, i.e., when
$B_{\min} \le B_{\max}$.  
The routine \textsc{FeasibleEnvelopeAtLevel} computes $(B_{\min},B_{\max})$
for the $k'$ strongest outsiders and the $k'$ weakest winners using the
common-step AP-lattice feasibility oracles from
Lemmas~\ref{lem:ap-boost}--\ref{lem:ap-suppress}.

\begin{algorithm}[t]
\footnotesize
\DontPrintSemicolon
\caption{\textsc{MaximizeDisplacement}$(S,\mathcal{O},\mathcal{T},k,\Theta^{\mathrm{high}},\Theta^{\mathrm{low}})$}
\label{alg:MaximizeDisplacement}
\KwIn{
  honest scores $S$; outsiders $\mathcal{O}$ sorted by decreasing $S$;
  winners $\mathcal{T}$ sorted by increasing $S$; number of winners $k$;
  common-step AP parameters for high and low ladders
  $\Theta^{\mathrm{high}},\Theta^{\mathrm{low}}$;
  search range $k' \in \{0,\dots,k\}$.
}
\KwOut{maximum achievable displacement $k^\ast$ and its cutoff interval
$[B_{\min}^\star,B_{\max}^\star]$}

\BlankLine
\tcp{Monotonicity: if level $k'$ is feasible then all smaller levels are feasible}
Binary search over $k' \in \{0,\dots,k\}$ to find the largest feasible level $k^\ast$.\;

\medskip
\tcp{Feasibility test at a candidate level $k'$}
For a queried $k'$, set
$\mathcal{O}^\ast \gets$ first $k'$ elements of $\mathcal{O}$ and
$\mathcal{T}_w \gets$ first $k'$ elements of $\mathcal{T}$.\;
Invoke
$(ok,B_{\min},B_{\max}) \gets
\textsc{FeasibleEnvelopeAtLevel}(S,\mathcal{O}^\ast,\mathcal{T}_w,
\Theta^{\mathrm{high}},\Theta^{\mathrm{low}})$.\;

\medskip
\tcp{Update rule (implemented by the binary search)}
If $ok$ is \textbf{true}, record $(k',B_{\min},B_{\max})$ as the current best
and continue the search on larger $k'$; otherwise continue on smaller $k'$.\;

\medskip
\Return $(k^\ast,B_{\min}^\star,B_{\max}^\star)$\;
\end{algorithm}

To find the overall maximum achievable displacement $k^\ast$, we perform a binary search over
$k' \in \{0,\dots,k\}$, checking feasibility at each midpoint using
\textsc{FeasibleEnvelopeAtLevel}. The procedure performs $O(\log k)$
feasibility-envelope evaluations.
Each envelope evaluation conducts two binary searches over the cutoff~$B$,
and each such search calls the common-step AP-lattice feasibility oracle, which
runs in
\[
O\!\left(k' \log k' \right)
\]
time per probe.  
Since every relevant cutoff lies in an interval of length $O(mp_1)$ with
$p_1 = O(x)$, each binary search over $B$ uses $O(\log(mx))$ probes.  
Consequently, \textsc{MaximizeDisplacement} runs in
\[
O\!\left(k\log k \cdot \log(mx)\cdot \log k\right)
= O\!\left(k(\log k)^2\log(mx)\right)
\] 
time overall
(since every queried level satisfies $k' \le k$),
and returns the maximum displacement $k^\ast$ together with its feasible cutoff interval
$[B^\star_{\min},B^\star_{\max}]$, which fully
characterizes all realizable manipulative outcomes at the optimal level.

% ----------------------------------------------------------
\subsection{Constructing Manipulative Ballots}
\label{sec:construct-ballots}
% ----------------------------------------------------------

Once the maximum achievable displacement $k^\ast$ and a feasible cutoff $B$
have been identified, it is straightforward to construct explicit coalition
ballots that realize the certified boost and suppression contributions.
We construct ballots at this optimal level $k^\ast$.
Although the feasibility tests already guarantee existence, the procedure
below provides concrete rankings for the colluding voters.

\begin{algorithm}[t]
\footnotesize
\DontPrintSemicolon
\caption{\textsc{ConstructManipulativeBallots}$(S,\mathcal{O}^\ast,\mathcal{T}_w,m,x,k^\ast,B,\Theta^{\mathrm{high}},\Theta^{\mathrm{low}})$}
\label{alg:ConstructManipulativeBallots}
\KwIn{honest scores $S[\cdot]$; boundary sets $\mathcal{O}^\ast,\mathcal{T}_w$; number of colluders $m$; number of candidates $x$; optimal level $k^\ast$; feasible cutoff $B$; AP parameters $\Theta^{\mathrm{high}}=(F^{\mathrm{high}}(\cdot),T^{\mathrm{high}},g,\alpha^{\mathrm{high}})$ and $\Theta^{\mathrm{low}}=(F^{\mathrm{low}}(\cdot),T^{\mathrm{low}},g,\alpha^{\mathrm{low}})$}
\KwOut{manipulative rankings $\pi_1,\ldots,\pi_m$}
$k' \gets k^\ast$\;
$(b^\downarrow,u^\uparrow)\leftarrow \textsc{ComputeDemandVectors}(S,\mathcal{O}^\ast,\mathcal{T}_w,B)$\;

\BlankLine
\textbf{Top block (boost): witness aggregate and decomposition}\;
$\hat b \leftarrow \textsc{LatticeAdjustUp}(b^\downarrow,g,\alpha^{\mathrm{high}})$ \tcp*{as in Lemma~4}
$y^{\mathrm{high}} \leftarrow \textsc{ExtendToBase}(\hat b, F^{\mathrm{high}}(\cdot), T^{\mathrm{high}}, g)$\;
$(\sigma_1,\ldots,\sigma_m)\leftarrow \textsc{RealizeAP}(y^{\mathrm{high}},\Theta^{\mathrm{high}})$ \tcp*{Appendix~G}

\BlankLine
\textbf{Bottom block (suppress): witness aggregate and decomposition}\;
$\hat u \leftarrow \textsc{LatticeAdjustDown}(u^\uparrow,g,\alpha^{\mathrm{low}})$ \tcp*{as in Lemma~5}
$d \leftarrow T^{\mathrm{low}}\mathbf{1} - \hat u$ \tcp*{slack demand (Lemma~5)}
$z \leftarrow \textsc{ExtendToBase}(d, F^{\mathrm{low}}(\cdot), T^{\mathrm{low}}, g)$\;
$y^{\mathrm{low}} \leftarrow T^{\mathrm{low}}\mathbf{1} - z$\;
$(\rho_1,\ldots,\rho_m)\leftarrow \textsc{RealizeAP}(y^{\mathrm{low}},\Theta^{\mathrm{low}})$ \tcp*{Appendix~G}

\BlankLine
\textbf{Assemble ballots}\;
\For{$v\leftarrow 1$ \KwTo $m$}{
Assign the top $k'$ positions of $\pi_v$ to $\mathcal{O}^\ast$ according to permutation $\sigma_v$\;
Assign the bottom $k'$ positions of $\pi_v$ to $\mathcal{T}_w$ according to permutation $\rho_v$\;
Fill the remaining positions $\{k'+1,\ldots,x-k'\}$ with an arbitrary permutation of $\mathcal{C}\setminus (\mathcal{O}^\ast\cup \mathcal{T}_w)$\;
}
\Return{$(\pi_1,\ldots,\pi_m)$}\;
\end{algorithm}

\paragraph{Principle.}
By Lemma~\ref{lem:canonical}, any successful manipulation at level~$k'$ (in particular, at $k'=k^\ast$) may be
assumed to use only:
\begin{itemize}
    \item the $mk'$ \emph{highest} positional scores to boost the designated
          outsiders $\mathcal{O}^\ast$, and
    \item the $mk'$ \emph{lowest} positional scores to suppress the weak
          winners $\mathcal{T}_w$,
\end{itemize}
with all middle positions irrelevant to feasibility.
Lemma~\ref{lem:independent} guarantees that the high-score and low-score pools are independent resources, and Theorem~\ref{thm:guaranteed-topk}
ensures that any coalition profile satisfying
\[
F_o \ge B \ \ \forall o \in \mathcal{O}^\ast, \qquad
F_t \le B-1 \ \ \forall t \in \mathcal{T}_w,
\]
guarantees displacement level $k'$: in every completion of the middle ranks and under any tie-breaking, the
final Top-$k$ winner set contains at least $k'$ outsiders (and hence excludes at least $k'$ honest winners).
When the scoring vector contains plateaus, the identities of the displaced incumbents and entering outsiders
may depend on tie-breaking; the construction below is therefore aimed at realizing the certified separation.

Constructing ballots requires more than sorting: the feasibility oracle used in the envelope search
( \textsc{APDemandFeasible}) certifies \emph{existence} of feasible aggregates but does not specify how to
distribute score copies across the $m$ ballots.

After we have found a feasible cutoff $B$, we therefore construct \emph{witness} aggregate contribution
vectors for the high-score pool and low-score pool, and then decompose each witness aggregate into explicit
ballot-wise permutations of the corresponding AP ladder. This decomposition is guaranteed by the
constructive proof of the AP--Ladder Lattice Theorem (Theorem~\ref{thm:ap-lattice-common}; see Appendix~\ref{sec:ProofAP}).

Algorithm~\ref{alg:ConstructManipulativeBallots} summarizes this witness-and-decomposition construction.

\paragraph{Subroutines in Algorithm~\ref{alg:ConstructManipulativeBallots}.}
The lattice adjustments \textsc{LatticeAdjustUp} and \textsc{LatticeAdjustDown} are the coordinatewise
rounding operations from Lemmas~\ref{sec:ProofBoostFeasibility} and \ref{sec:ProofSupressFeasibility} that map a demand vector into the unique congruence class
$L_{\mathrm{tot}} + g\mathbb{Z}$. The routine \textsc{ExtendToBase} takes any vector satisfying the Block--HLP
prefix constraints with total sum at most $F(k')$ and returns a dominating integer base point of total sum
$F(k')$ (polymatroid extension; see~\cite{edmonds1970}). Finally, \textsc{RealizeAP} (Appendix~\ref{sec:ProofAP}) decomposes any
realizable aggregate into ballot-wise permutations as guaranteed by the constructive proof of Theorem~\ref{thm:ap-lattice-common}.

\paragraph{Correctness.}
Fix a displacement level $k'$ (in the construction we take $k'=k^\ast$) and a feasible cutoff $B$.
The envelope/oracle phase certifies that the lattice-adjusted boost demand $\hat b(B)$ and the slack demand
$d(B)=T^{\mathrm{low}}\mathbf{1}-\hat u(B)$ satisfy the Block--HLP prefix constraints (Lemmas~\ref{sec:ProofBoostFeasibility} and \ref{sec:ProofSupressFeasibility}), hence each
can be extended to a base point in the corresponding polymatroid (subroutine \textsc{ExtendToBase}).
This produces witness aggregates $y^{\mathrm{high}} \ge \hat b(B)$ and $y^{\mathrm{low}} \le \hat u(B)$ with the correct
congruence class modulo $g$.

By the constructive AP--Ladder Lattice Theorem (Theorem~\ref{thm:ap-lattice-common}; Appendix~\ref{sec:ProofAP}), each witness aggregate can be
decomposed into $m$ ballot-wise permutations of the relevant AP ladder. Therefore Algorithm~\ref{alg:ConstructManipulativeBallots}  produces
coalition ballots whose induced contributions satisfy
\[
S_o+\Delta_o \ge B \ \ \forall o\in \mathcal{O}^\ast, \qquad
S_t+\Delta_t \le B-1 \ \ \forall t\in \mathcal{T}_w.
\]
Applying the guaranteed-displacement theorem (Theorem~\ref{thm:guaranteed-topk}) yields robust displacement at level $k'$
(regardless of tie-breaking).

\paragraph{Complexity.}
The construction calls \textsc{ExtendToBase} (polynomial in $k'$) and the decomposition routine
\textsc{RealizeAP} from Appendix~\ref{sec:ProofAP} (polynomial in $(m,k')$), and then assembles ballots in $O(mk')$ time.

% =========================================================
\section{Experimental Evaluation}
\label{sec:experiments}
% =========================================================

We evaluate our framework on synthetic and real election data, focusing on
four core questions:

\begin{enumerate}
  \item \textbf{Exactness.}
        Are the Block--HLP feasibility tests \emph{tight}?  
        On instances where brute-force enumeration is feasible, does the
        oracle agree with the ground truth on every cutoff and displacement
        level?

  \item \textbf{Envelope accuracy.}
        Do the predicted feasibility boundaries $(B_{\min},B_{\max})$ match the
        oracle pointwise across all tested profiles, confirming that the
        boost--suppress envelopes give an exact geometric description of
        realizable manipulation?

\item \textbf{Scalability.}
     How does the oracle scale with the number of candidates? We benchmark the exact
maximum-achievable-displacement oracle (computing $k^\ast$ and implemented via the
Block--HLP/AP-lattice feasibility tests of Section~\ref{sec:algorithm}) on synthetic elections with candidate counts from
$x = 10^3$ up to $x = 10^9$. Runtime scales near-linearly in $x$ and is effectively independent of the
coalition size $m$, with larger instances limited only by memory.

  \item \textbf{Baseline comparison.}
        How does our exact method compare to natural heuristics?
        We show that both a greedy constructive strategy and an LP-rounding
        baseline systematically underestimate achievable displacement, often by
        large margins.
\end{enumerate}

%All code and data for reproducing our experiments are available at \url{https://anonymous.ec26.github.io/displacement}.

%% ---------------------------------------------------------
\subsection{Setup}
\label{subsec:setup}
% ---------------------------------------------------------

\paragraph{Synthetic elections.}
We generate rankings for $n = 1000$ honest voters using the Mallows model with dispersion
$\phi \in \{0.01, 0.05, 0.2, 0.6, 0.95, 0.97, 0.98, 0.99, 1.00\}$.
Here smaller $\phi$ means stronger concentration around the reference ranking, and $\phi=1$ is uniform over rankings.
For each $\phi$, we draw 50 independent elections.
Unless stated otherwise, we use pure Borda as the positional scoring rule.

\paragraph{Candidate and committee sizes (matching the figures).}
The synthetic experiments reported in the figures use the following candidate and committee sizes:
\begin{itemize}
  \item \textbf{Figure~\ref{fig:diminishing-synthetic} (maximal displacement vs.\ coalition size):} $x=500$ candidates and a Top-$k$ rule with $k=250$.
  \item \textbf{Figure~\ref{fig:envelope} (boost/suppress envelopes):} $x=100$ candidates, $k=50$, and $m=20$ colluding voters (as stated in the caption).
  \item \textbf{Figure~\ref{fig:runtime_scaling} (scalability in the number of candidates):}
 $x=10^{3}$ to $10^{9}$ and coalition sizes
$m \in \{10^{3},10^{4},10^{5},10^{6},10^{7}\}$.
  \item \textbf{Figure~\ref{fig:baseline-comparison} (baseline comparison):} $x=400$ candidates and a Top-$k$ rule with $k=200$.
\end{itemize}
Coalition sizes $m$ are varied over the ranges shown on the corresponding plot axes (up to $m/n=0.4$ on the synthetic plots).

\paragraph{Committee size and displacement cap.}
In every Top-$k$ election, the displacement power is bounded by
\[
k^\ast(m) \;\le\; \min\{k,\, x-k\}.
\]
In our synthetic plots we set $k=x/2$ (so $x-k=k$), which makes the upper bound equal to $k$.
This is why the $k^\ast$ curves can plateau at $k=250$ in Figure~\ref{fig:diminishing-both} (where $x=500$) and
why the envelope plot in Figure~\ref{fig:envelope} ranges over displacement levels up to $k=50$ (where $x=100$).

\paragraph{Real datasets.}
We evaluate on three classes of real-world preference data from PrefLib~\cite{MaWa13a}:
\begin{itemize}
  \item 33 Irish elections ($n \approx 3\times 10^3$--$1.5\times 10^4$, $x \approx 10$--$20$);
  \item the SUSHI dataset ($n = 5000$, $x = 100$);
  \item the Glasgow City Council dataset ($n \approx 3\times 10^4$, $x \approx 12$).
\end{itemize}
To make the real-data plots comparable across datasets and consistent with the axis scale in Figure~\ref{fig:diminishing-both},
we set the committee size to
\[
k \;=\; \min\{10,\ \lfloor x/2\rfloor\},
\]
so that the maximal possible displacement is at most 10 on every instance.
For each election we treat the reported rankings as the honest profile and apply our displacement oracle for
coalition sizes $m$ ranging up to a $20\%$ fraction of the electorate (as plotted in Figure~\ref{fig:diminishing-real}).

\paragraph{Scoring rules.}
All experiments use pure Borda unless otherwise noted. For positional rules with different curvature
(e.g., truncated Borda or geometric scoring), we recompute the capacity vectors $F(t)$ accordingly.

\paragraph{Implementation and environment.}
Our algorithms are implemented in Python using optimized NumPy and Cython-backed routines for prefix-sum
computations and feasibility checks. All running times are measured on a workstation equipped with an
Intel(R) Core(TM) Ultra 9 285K CPU, 32\,GB RAM, running Ubuntu~24.04.3~LTS.
Reported times are averaged over 20 repetitions.

% ---------------------------------------------------------
\subsection{Exactness on Small Instances}
\label{subsec:exactness}
% ---------------------------------------------------------

Our theory characterizes feasibility using a separating cutoff $B$: displacement
of $k'$ winners is achieved only when the coalition can create a strict score gap
\[
\min_{o\in \mathcal{O}^\ast} F_o
\;\ge\;
\max_{t\in \mathcal{T}_w} F_t + 1.
\]
To confirm that our Block--HLP and AP-lattice feasibility tests are tight and exact, we compare them to a
brute-force ground truth on very small elections where full enumeration is possible.

For small instances where exhaustive verification is computationally feasible, we validate our exact
feasibility oracle against brute-force ground truth: in the unit-step regime ($g=1$) this oracle reduces to the
Block--HLP prefix test (Corollary~\ref{cor:unit-step-structural}), while in the sparse-step regime ($g>1$) it applies the Block--HLP prefix
test together with the AP congruence condition from Theorem~\ref{thm:ap-lattice-common}.

Specifically, for elections with up to $x\le 7$ candidates and $m\le 2$ colluding voters, we enumerate all
coalition ballot profiles up to permutation of the $m$ colluders, i.e., treat the coalition as an unordered
multiset of ballots.\footnote{
Enumerating $(x!)^m$ \emph{ordered} coalition profiles is infeasible even for moderate $(x,m)$.
We therefore enumerate coalition profiles \emph{modulo permutation of the $m$ colluders}, which reduces the
search space substantially while remaining exact: permuting the colluders does not change the induced
aggregate score vector (and hence does not change whether a separating cutoff exists under our strict-gap
definition).}
For each enumerated coalition profile, we compute the resulting final score vector and test whether a
separating cutoff $B$ exists for each $k'$, following the strict-threshold feasibility condition used throughout
the theoretical development.

% NOTE: update the sparse-step rule instantiation and the instance counts below to match your final sweep.
Across 1,200 randomly generated instances spanning Borda ($g=1$), a sparse-step common-step AP rule with
$g>1$ (e.g., $3$-scaled Borda with positional scores $p_r = 3(x-r)$), truncated Borda, plurality, and $3$--$2$--$1$
scoring rules, the oracle's predicted maximal displacement $k^\ast$ matched the brute-force maximum in
\textbf{100\%} of cases on all brute-forceable configurations. This confirms that the boost/suppress prefix systems
and the $B_{\min}$--$B_{\max}$ feasibility envelopes are tight on these instances: no feasible displacement is missed,
and no infeasible displacement is ever certified.

\paragraph{Why test a sparse-step rule ($g>1$).}
For unit-step rules such as Borda, Theorem~\ref{thm:ap-lattice-common} reduces to a pure prefix-majorization test because the
congruence restriction is vacuous (Corollary~\ref{cor:unit-step-structural}). When $g>1$, however, the congruence condition becomes
substantive: there exist integer points that satisfy all Block--HLP prefix inequalities but are not realizable by
ballot permutations because they fall outside the unique residue class modulo $g$. Including a $g>1$ scoring
rule in the brute-force sweep therefore directly validates the ``prefix plus one congruence class'' claim that
distinguishes the AP-lattice theory from the continuous Block--HLP envelope.

\subsection{Maximal Displacement as Coalition Size Grows}

\begin{figure*}[t]
\centering
\begin{subfigure}[t]{0.45\linewidth}
    \centering
    \includegraphics[width=\linewidth]{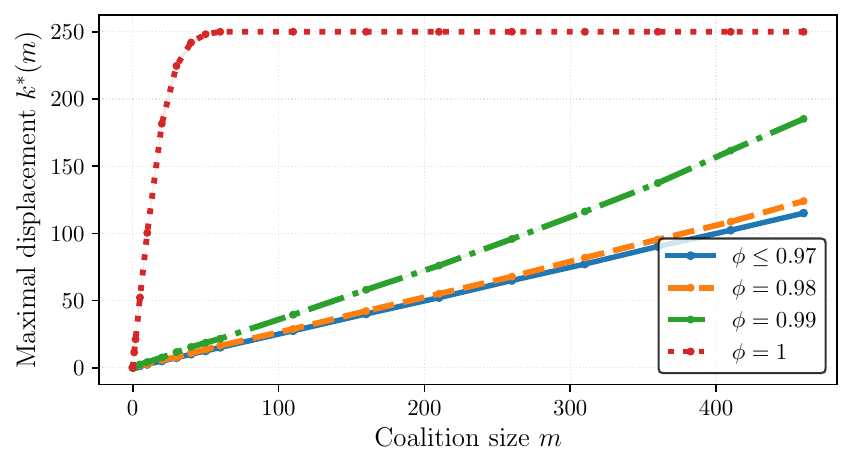}
    \caption{Synthetic (Mallows).}
    \label{fig:diminishing-synthetic}
\end{subfigure}
\hfill
\begin{subfigure}[t]{0.45\linewidth}
    \centering
    \includegraphics[width=\linewidth]{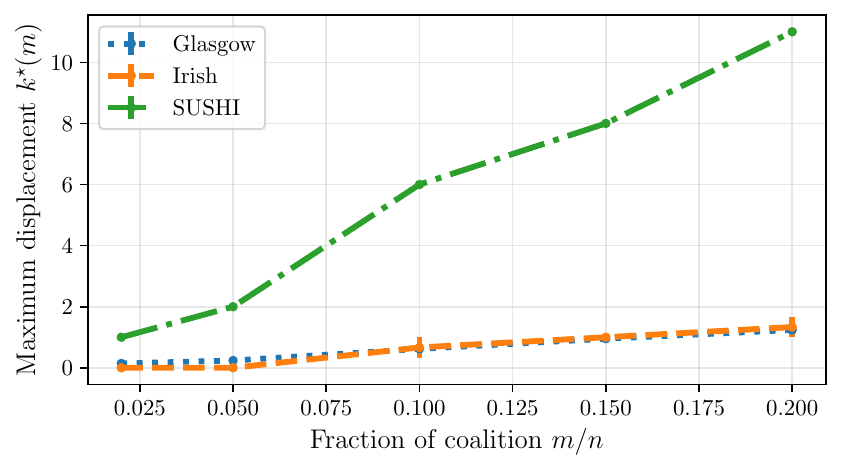}
    \caption{Real elections.}
    \label{fig:diminishing-real}
\end{subfigure}
\caption{
Maximal displacement $k^\ast$ as a function of coalition size.
\textbf{(a)} Synthetic Mallows profiles (Borda; $n=1000$; $x=500$; $k=250$), averaged over 50 trials for each dispersion parameter $\phi$ (see legend).
\textbf{(b)} Real PrefLib elections (Glasgow, Irish, SUSHI), plotting $k^\ast$ against coalition fraction $m/n$ up to $20\%$ (with $k=\min\{10,\lfloor x/2\rfloor\}$ for cross-dataset comparability).
}
\label{fig:diminishing-both}
\end{figure*}

Figures~\ref{fig:diminishing-synthetic} and~\ref{fig:diminishing-real} plot
how the coalition’s maximum achievable displacement $k^\ast(m)$ grows with the
number of colluding voters.

On synthetic Mallows elections (Figure~\ref{fig:diminishing-synthetic}),
the maximal displacement $k^\ast(m)$ is remarkably insensitive to $\phi$ over a broad range.
The curves for $\phi \in \{0.01, 0.05, 0.2, 0.6, 0.95, 0.97\}$ are visually indistinguishable (and are therefore shown as a single line labeled $\phi \le 0.97$),
and grow approximately linearly with the coalition size $m$.
Since smaller $\phi$ corresponds to stronger concentration (higher agreement) while $\phi \to 1$ approaches uniform randomness,
this collapse spans profiles ranging from highly structured to fairly noisy.

This behavior is consistent with the structure of the displacement oracle:
throughout this range, the binding constraint is primarily the coalition's aggregate score-supply geometry (the Block--HLP prefix capacities),
rather than small changes in the honest score gaps induced by $\phi$.
Only when $\phi$ becomes extremely close to $1$ do the honest score gaps near the Top-$k$ boundary compress enough to materially increase manipulability.

Beyond $\phi \approx 0.97$, the behavior changes visibly.
At $\phi = 0.98$ and $\phi = 0.99$, the curve shifts upward, reflecting the increased leverage available under near-uniform preferences without immediate saturation.
At $\phi = 1$ (uniform random preferences), the coalition quickly reaches the displacement cap (here $k=x/2$ in Figure~\ref{fig:diminishing-synthetic}), after which additional manipulators yield no further displacement.
Overall, the dependence on $\phi$ exhibits a sharp transition as $\phi \to 1$, rather than a smooth interpolation across the entire range.

Real elections (Figure~\ref{fig:diminishing-real}) show a similar
sublinear pattern but on a smaller scale.
SUSHI exhibits the largest displacement (reflecting its concentrated
preferences), while Irish and Glasgow profiles are noticeably harder to
manipulate.
Across all datasets, $k^\ast(m)$ grows more slowly than linearly and, once
the coalition becomes large enough, begins to flatten toward the
theoretical upper bound, consistent with the Block--HLP picture that
boosting capacity grows roughly linearly in $m$ while the “demand’’ to
promote additional outsiders grows faster.

\subsection{Feasibility Envelope}

\begin{figure}[t]
\centering
\includegraphics[width=0.5\linewidth]{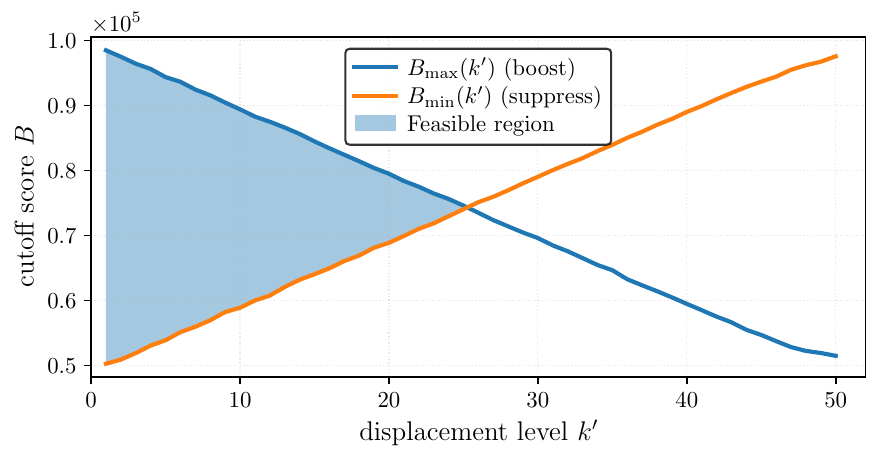}
\caption{Boost (blue) and suppress (orange) cutoff envelopes for a
synthetic election with $x=100$ candidates and $m=20$ colluding voters.
For each displacement level $k'$, the blue curve shows the largest
boost-feasible cutoff $B_{\max}(k')$, while the orange curve shows the
smallest suppress-feasible cutoff $B_{\min}(k')$.
Displacement at level $k'$ is feasible exactly when
$B_{\min}(k') \le B_{\max}(k')$, i.e., over the range of $k'$ where the
two envelopes overlap.
The total number of feasible displacement levels coincides with the
value $k^\ast$ returned by \textsc{MaximizeDisplacement}.}
\label{fig:envelope}
\end{figure}

Figure~\ref{fig:envelope} illustrates the feasibility-envelope
characterization for a single synthetic election instance.
For a fixed displacement level $k'$, the boost constraints induce the
interval $(-\infty, B_{\max}(k')]$ of score cutoffs for which the coalition
can raise all $k'$ targeted outsiders to at least $B$, while the suppress
constraints induce the interval $[B_{\min}(k'), \infty)$ of cutoffs for
which it can keep all $k'$ weak winners strictly below $B$.
Displacement at level $k'$ is feasible if and only if these two intervals
overlap.

The figure makes this geometric condition explicit: feasible displacement
levels correspond exactly to those values of $k'$ for which the boost and
suppress envelopes intersect.
For this instance, the overlap region is contiguous and its endpoint
coincides with the maximal displacement $k^\ast$ returned by
\textsc{MaximizeDisplacement}.
This visualization serves as a concrete illustration of how the abstract
boost--suppress conditions reduce feasibility testing to a simple
one-dimensional geometric criterion.

\begin{figure}[t]
    \centering
    \includegraphics[width=0.5\columnwidth]{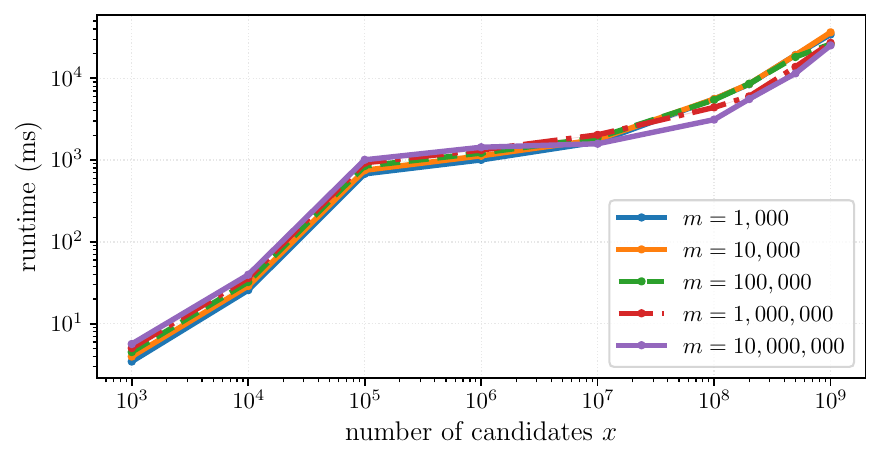}
  \caption{
\textbf{Scalability to large-scale elections.}
Runtime of the maximum-achievable-displacement oracle (computing $k^\ast$) as a function of the number of
candidates $x$ (log--log scale), ranging from $10^{3}$ up to $10^{9}$.
Results are shown for five representative coalition sizes
($m=10^{3}, 10^{4}, 10^{5}, 10^{6}, 10^{7}$), whose curves nearly overlap,
demonstrating that runtime is effectively independent of the coalition size.
The observed approximately linear scaling  in $x$ reflects the cost of
scanning the score array and extracting the boundary-relevant top and bottom
segments via linear-time selection, while the Block--HLP feasibility checks and
dual binary search contribute only negligible overhead.
}
    \label{fig:runtime_scaling}
\end{figure}

\subsection{Scalability in the Number of Candidates}
\label{sec:scaling}

To evaluate scalability in the number of candidates, we benchmark the exact displacement oracle on synthetic
elections with candidate counts ranging from $x = 10^3$ up to $x = 10^9$.
In each experiment, we explicitly generate an integer score array of length $x$ (32-bit) and apply linear-time selection to extract only the boundary-relevant top and bottom segments; the feasibility oracle then operates only on these $O(k')$-sized summaries.
The positional scoring vector is likewise represented explicitly to match the
general positional-rule interface.

Crucially, while score generation and boundary extraction scale linearly in $x$,
the geometric feasibility oracle itself operates only on the extracted boundary
sets and on prefix-capacity arrays of size $O(k')$.
As a result, the cost of the Block--HLP feasibility checks and the dual envelope
search is negligible relative to the one-pass scan and selection over the score
array.

Figure~\ref{fig:runtime_scaling} reports the end-to-end runtime as a function of
the number of candidates for five representative coalition sizes,
$m \in \{10^{3}, 10^{4}, 10^{5}, 10^{6}, 10^{7}\}$.
Across this range, runtime exhibits an approximately linear \emph{trend} on a
log--log scale, with moderate variability attributable to system and memory
effects, reaching on the order of $10^{4}$--$10^{5}$ milliseconds at
$x = 10^{9}$.
The curves for different coalition sizes nearly overlap throughout, indicating
that runtime is effectively independent of $m$.

These results empirically corroborate the theoretical analysis: the dominant
cost is the unavoidable linear pass over the score array, while the geometric
Block--HLP feasibility checks contribute only a small constant-factor overhead.
Even at candidate scales far exceeding any realistic real-world election, the
oracle remains stable and computationally practical.

We do not evaluate larger values of $x$ due solely to system memory constraints
associated with explicitly materializing the score array.
No algorithmic bottleneck was encountered within the tested range, and the
observed scaling behavior is consistent with continued linear-time dependence
on $x$ given sufficient memory.

\subsection{Comparison with Heuristic Baselines}
\label{sec:baseline-comparison}
% ----------------------------------------------------------

We compare our exact displacement oracle (Section~\ref{sec:algorithm}) against
two natural baselines:

\begin{itemize}[leftmargin=1.3em]
    \item \textsc{GreedyPromote}: a myopic constructive heuristic that repeatedly
      identifies the currently strongest outsider still below the top-$k$
      threshold and allocates the coalition’s highest available scores to that
      candidate.  At each step it updates the running scoreboard and breaks
      ties in favor of outsiders with larger honest scores.  The remaining
      candidates on the ballot are ranked in decreasing order of their
      current scores, so that the highest-scoring incumbents receive the
      lowest possible positions.
    \item \textsc{LP-rounding}: formulates the displacement task as a linear
      program that relaxes the coalition’s per-ballot ranking constraints into
      a fractional score-allocation problem.  After solving the LP,
      randomized rounding is applied to convert the fractional allocations
      into integral Borda ballots.  The rounded ballots are then cast
      sequentially and the scoreboard updated after each step.
\end{itemize}

Both baselines have access to the same information (honest scores and coalition
size) but operate without the geometric Block--HLP / AP-lattice structure.

\begin{figure}[t]
    \centering
    \begin{subfigure}[t]{0.48\textwidth}
        \centering
        \includegraphics[width=\textwidth]{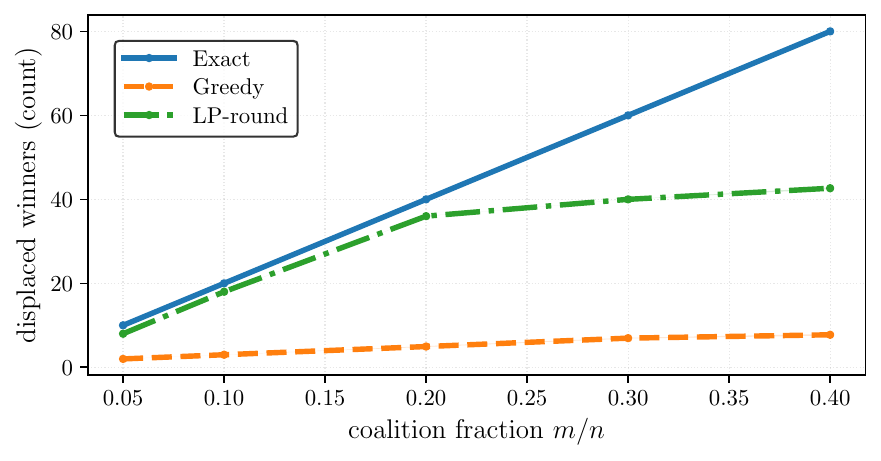}
        \caption{\textbf{Displacement optimality.}}
        \label{fig:baseline-optimality}
    \end{subfigure}
    \hfill
    \begin{subfigure}[t]{0.48\textwidth}
        \centering
        \includegraphics[width=\textwidth]{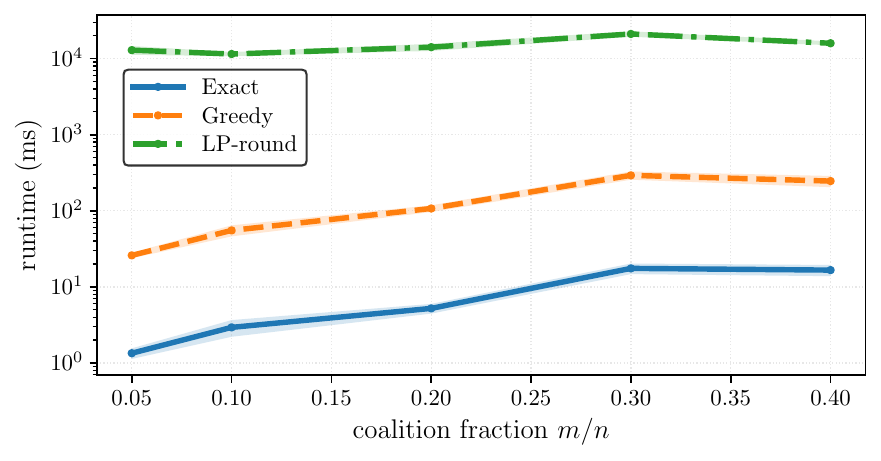}
        \caption{\textbf{Runtime comparison (log scale).}
       }
        \label{fig:baseline-runtime}
    \end{subfigure}

    \caption{
    \textbf{Comparison with heuristic baselines.}
    Synthetic Mallows elections with $x=400$ candidates and Borda scoring,
    averaged over 50 i.i.d.\ instances per coalition size.
    Shaded regions indicate 95\% confidence intervals.
    }
    \label{fig:baseline-comparison}
\end{figure}

\paragraph{Displacement quality.}
Figure~\ref{fig:baseline-optimality} reports the achieved displacement $k^\ast$ as a function of the coalition fraction $m/n$ for the exact
oracle and the two heuristic baselines.
The oracle exhibits near-linear growth across the entire range of coalition
sizes, consistently achieving the largest displacement.
For small coalitions, LP-rounding initially tracks the oracle closely and
achieves comparable displacement.
However, as the coalition grows, LP-rounding quickly saturates and flattens out,
reaching only about $40$ displaced winners at the largest coalition sizes—roughly
half of the oracle’s value.
The greedy heuristic performs substantially worse, saturating even earlier and
displacing fewer than $10$ winners even when $m/n$ approaches $0.4$.
This widening gap for larger coalitions shows that local or fractional score
reallocations fail to capture the global, prefix-aware coordination required
for optimal manipulation, which is precisely enforced by the Block--HLP
feasibility envelope.

\paragraph{Runtime comparison.}
Figure~\ref{fig:baseline-runtime} shows the running time of each method on a
logarithmic scale as a function of the coalition fraction $m/n$.
The exact oracle is extremely fast, with runtimes ranging from roughly
$1$--$20$\,ms across all coalition sizes.
The greedy heuristic is consistently slower, increasing from tens to several
hundred milliseconds as $m/n$ grows.
LP-rounding is dramatically slower than both alternatives, with runtimes on the
order of $10^4$ milliseconds throughout, reflecting the cost of repeatedly
solving large linear programs.
Notably, the oracle achieves strictly higher displacement while remaining one to
three orders of magnitude faster than greedy and roughly four orders of
magnitude faster than LP-rounding, demonstrating that the geometric structure is
algorithmically efficient as well as optimal.

\subsection{Summary}

Across both synthetic and real elections, our empirical evaluation
corroborates the theoretical guarantees of the Block--HLP framework and
demonstrates its practical effectiveness.

First, on all instances where exhaustive enumeration is feasible, the feasibility oracle achieves 100\% exactness,
confirming that the geometric characterization is tight: every feasible displacement is certified, and no
infeasible displacement is ever admitted.

Second, maximal displacement scales strongly with coalition size.
On synthetic Mallows elections, $k^\ast(m)$ grows approximately linearly in $m$
over a broad range of dispersions: the curves for all $\phi \le 0.97$ nearly
overlap and exhibit similar growth rates, despite $\phi=0.97$ already being
highly concentrated.
Only when $\phi$ becomes extremely close to $1$ does the behavior change
qualitatively, with rapid saturation at the committee-size cap.
Real-world elections show analogous trends at smaller scales, with
$k^\ast(m)$ increasing with $m$ at dataset-dependent rates.

Third, the algorithm scales smoothly to large candidate sets.
End-to-end runtime grows approximately linearly with the number of
candidates $x$ and is dominated by the unavoidable cost of scanning the
score array and extracting boundary-relevant segments.
Across multiple coalition sizes, runtime curves nearly overlap,
demonstrating that performance is effectively independent of the coalition
size and that the geometric feasibility checks contribute only a small
constant-factor overhead.

Finally, in comparison with natural heuristics, the exact oracle
decisively outperforms both greedy construction and LP-rounding.
While LP-rounding can initially track the oracle for small coalitions, it
quickly saturates and falls far short of the true feasible displacement as
the coalition grows, and incurs several orders of magnitude higher runtime.
The greedy heuristic performs substantially worse in both solution quality
and efficiency.

Collectively, these results demonstrate that the Block--HLP framework is
both theoretically sharp and practically scalable, providing a complete,
efficient, and provably optimal tool for analyzing coordinated
manipulation under positional scoring rules.

% ==========================================================
\section{Discussion}
\label{sec:discussion}
% ==========================================================

\subsection{Conceptual Contribution: A Canonical Decomposition of Manipulation}
A central conceptual contribution of this work is the identification of a
canonical geometric decomposition underlying coalition manipulation in
positional-scoring elections.  Despite the combinatorial richness of the
strategy space—each of the $m$ manipulative voters may rank $x$ candidates in
$x!$ possible ways—we show that the effect of the coalition collapses onto a
small, highly structured set of aggregate score allocations.

The first layer of this decomposition is \emph{continuous}: every achievable
aggregate score vector lies inside a polymatroid defined by the
Block--Hardy--Littlewood--Pólya (Block--HLP) prefix inequalities and a fixed
total mass constraint.  These convex conditions capture all feasible ways of
distributing the coalition's high and low scores without reference to the
individual ballots that produced them.

The second layer is \emph{discrete}: under the assumption that all colluding
ballots follow \emph{common-step} AP ladders, the
coalition’s realizable aggregates lie on a single congruence class modulo the
shared step size~$g$.  This produces a one-dimensional lattice structure:
every realizable aggregate satisfies $y_i \equiv L_{\mathrm{tot}} \pmod g$,
and no conflicting residue constraints arise across ballots.

Together, these two layers yield a complete characterization for
common-step AP ladders: feasible manipulations correspond exactly to the
intersection of a polymatroid with a rank-one lattice coset.  This
decomposition not only explains why the problem admits polynomial-time
algorithms in the common-step AP regime, but also delineates the structural
boundary between tractable scoring rules (with globally coherent step sizes)
and potentially intractable ones.  The remainder of this section explores the
consequences of this decomposition for complexity, general scoring systems, and
the broader landscape of coalition manipulation.

\subsection{The Algorithmic Frontier: Why Common-Step AP Ladders Yield Tractability}

The canonical decomposition has an immediate algorithmic consequence: 
\emph{when all colluding ballots share a common AP step size} $g$, displacement
feasibility reduces to evaluating the intersection of two monotone envelopes 
over a one-dimensional parameter~$B$.  This stands in sharp contrast to the 
apparent combinatorial explosion of the manipulation space, where each of the 
$m$ manipulative voters has $x!$ possible ballots and the coalition’s joint 
action space is of size $(x!)^m$.

The key to tractability is that common-step AP ladders impose two stabilizing 
geometric constraints:

\begin{enumerate}[leftmargin=1.3em]
\item \textbf{Prefix-submodularity of the score supply.}
      Each ballot’s contribution to the top $t$ positions forms a 
      nonincreasing prefix-sum sequence, and summing across ballots yields 
      the polymatroid defined by the Block--HLP inequalities.  
      This structure mirrors classic feasibility conditions in majorization 
      theory and eliminates the need to reason about individual permutations.

\item \textbf{Lattice coherence across ballots.}
      When all manipulative ballots use the \emph{same} step size~$g$, every 
      realizable aggregate score vector lies in a single residue class 
      modulo~$g$.  
      This global lattice coherence is essential: if even one ballot uses a 
      different step size, the shared residue structure collapses and the 
      polymatroid–lattice intersection used in our feasibility oracles no 
      longer holds.

\end{enumerate}

Together, these two properties collapse the search for feasible manipulations 
to a pair of scalar monotone searches: one determining the largest attainable 
boost cutoff $B_{\max}$ and one determining the smallest attainable suppress 
cutoff $B_{\min}$.  Their intersection determines displacement feasibility for 
any fixed~$k'$, and a second monotone search over $k'$ yields the maximum achievable
displacement $k^\ast$.

The resulting algorithm runs in polylogarithmic time in the score domain and 
near-linear time in the number of candidates.  In practice (as confirmed in 
Section~\ref{sec:experiments}), even extremely large elections can be processed 
within seconds.  
This level of scalability is surprising given the general intractability of 
manipulation under most scoring rules, and it derives entirely from the 
stabilizing effect of \emph{common-step} AP-ladder structure on both the 
continuous and discrete layers of the feasibility region.

Understanding where this tractability breaks is the focus of the next 
subsection, which proposes a three-region ``tractability frontier'' for positional scoring rules.

\subsection{The Coordination-Based Tractability Frontier}
\label{subsec:coordination-frontier}

The structural results in this paper indicate that the computational
complexity of coalition manipulation is governed not only by the positional
scoring rule, but also by the degree of \emph{alignment} among the coalition’s
ballot ladders.  The polynomial-time results established here rely critically
on the assumption that all $m$ colluding ballots share a \emph{common-step}
AP ladder.  We therefore measure deviations from this
structure using a parameter $C$, defined as the number of ballots whose score
ladders \emph{do not} share the common step size~$g$.

\medskip
\noindent
\textbf{The case $C=0$: common-step AP ladders (solved in this paper).}
When all ballots use the same AP step size, the aggregate score supply lies in
a \emph{single} residue class modulo~$g$, producing a one-dimensional lattice
coset embedded in the Block--HLP polymatroid.  This alignment collapses
displacement feasibility to the intersection of two monotone envelopes in a
scalar cutoff~$B$, yielding the polynomial-time characterization proved in this
paper.

\medskip
\noindent
\textbf{The case $C>0$: introducing irregular ladders.}
As soon as some ballots use ladders with \emph{different} step sizes, the
aggregate score supply no longer lies in a single residue class.  Instead,
these “irregular” ballots inject additional, independent congruence
constraints.  Geometrically, the Minkowski sum of the coalition’s ladders
thickens from a one-dimensional lattice line into a higher-dimensional lattice
region whose feasible points may occupy multiple incompatible residue classes.

Even a single irregular step ($C=1$) destroys global congruence alignment:
there is no longer a single modulo-$g$ lattice coset that contains all
realizable aggregates.  This eliminates the one-dimensional envelope structure
that underpins tractability.  With multiple irregular ballots ($C \ge 2$), the
aggregate lattice becomes genuinely multi-dimensional, and the interaction
between irregular congruence constraints and Block--HLP prefix inequalities
resembles the multi-dimensional packing constraints characteristic of classical
NP-hard problems.

\medskip
\noindent
Although a formal hardness boundary remains open, the structural evidence
suggests a sharp transition in computational behavior between the fully aligned
case $C=0$ (solved here in polynomial time) and settings with one or more
misaligned ladders ($C>0$), where feasibility is unlikely to reduce to a
one-dimensional monotone condition.

\medskip
\noindent\textbf{Conjecture (Coordination Threshold).}
\emph{There exists a universal constant $C^\star \ge 1$ such that displacement
feasibility under positional scoring rules is solvable in polynomial time for
all $C < C^\star$ (i.e., when fewer than $C^\star$ ballots deviate from the AP
structure), and becomes NP-hard once $C \ge C^\star$.  Structural evidence
based on Minkowski-sum geometry suggests that $C^\star$ is small---possibly as
low as~$2$.}

\medskip
\paragraph{Minkowski--Sum Thickening and Complexity.}
A geometric lens on the conjecture comes from examining the Minkowski sum of
the coalition’s score ladders.  
When all ballots share the \emph{same} AP step size
($C=0$), the Minkowski sum of their ladders remains a 
\emph{one-dimensional} arithmetic-progression coset lying on a single lattice
line inside the Block--HLP polymatroid; feasibility thus reduces to comparing
two monotone envelopes in a single cutoff parameter.

When some ballots deviate from the common step size, each such ``irregular''
ladder introduces an independent off-axis direction in the Minkowski sum.
Even if these ladders are AP individually, differing step sizes already break
the global lattice alignment and therefore behave as non-AP sources of
irregularity.  Each irregular block thickens the aggregate lattice from a line
into a low-dimensional tube.  Preliminary structural evidence suggests that
for $C=1$, this thickened region may still behave essentially
one-dimensionally, preserving enough structure for envelope-based feasibility
tests.

Once two or more irregular ladders are present ($C \ge 2$), the geometry
changes qualitatively.  
The Minkowski sum becomes a genuinely \emph{multi-dimensional} polytope whose
independent irregular directions interact with the Block--HLP polymatroid in a
way that mirrors multi-dimensional packing and partition polytopes.  
This geometric transition from one-dimensional to multi-dimensional feasible
regions provides strong structural motivation for the hypothesis that
manipulation complexity exhibits a sharp boundary at a small constant~$C^\star$.

\begin{itemize}[leftmargin=1.3em]
    \item \textbf{Inner circle ($C=0$): fully tractable.}
          All ballots share the same AP step size.  Feasible aggregates lie on
          a one-dimensional lattice line, and displacement is decided by
          monotone intervals.  This regime is completely solved in this paper.

    \item \textbf{Middle circle ($0 < C < C^\star$): conjectured tractable.}
          A small number of irregular (mixed-step or non-AP) ladders thicken
          the aggregate lattice but may still preserve an essentially
          one-dimensional envelope geometry.  
          We conjecture that polynomial-time algorithms extend throughout this
          regime.

    \item \textbf{Outer circle ($C \ge C^\star$): conjectured NP-hard.}
          Multiple irregular ladders generate a multi-dimensional Minkowski-sum
          polytope, causing feasibility constraints to behave like
          multi-dimensional knapsack problems.  
          We conjecture that manipulation feasibility becomes NP-hard beyond
          this boundary.
\end{itemize}

This conjecture deliberately leaves the exact value of $C^\star$ open.
Our results establish tractability at $C=0$ and provide structural evidence
that the complexity transition should occur at a small constant, with the cases
$C=1$ and $C=2$ forming important directions for future work.

\begin{figure}[t]
\centering
\begin{tikzpicture}[scale=1.1]

% Define grayscale fills
\colorlet{inner}{green!25}
\colorlet{middle}{yellow!30}
\colorlet{outer}{red!20}

% Circles
\filldraw[fill=outer,  draw=black, thick] (0,0) circle (3);
\filldraw[fill=middle, draw=black, thick] (0,0) circle (2);
\filldraw[fill=inner,  draw=black, thick] (0,0) circle (1);

% Clean centered labels
\node at (0,  2.35) {\large $\mathbf{C \ge C^\star}$};
\node at (0,  1.35) {\large $\mathbf{0 < C < C^\star}$};
\node at (0,  0) {\large $\mathbf{C = 0}$};

% Legend on the right
\begin{scope}[shift={(4.1,0)}]
    \node[anchor=west, align=left] at (0,2.0) {\textbf{Legend:}};
    
    \node[anchor=west, align=left] at (0,1.25) {
        \textbf{$C = 0$ (inner circle)}\\[-0.2em]
        pure displacement; fully tractable; proved in this paper
    };

    \node[anchor=west, align=left] at (0,-0.1) {
        \textbf{$0 < C < C^\star$ (middle circle)}\\[-0.2em]
        few coordinated blocks; conjectured polytime
    };

    \node[anchor=west, align=left] at (0,-1.45) {
        \textbf{$C \ge C^\star$ (outer circle)}\\[-0.2em]
        multi-block coordination; conjectured NP-hard
    };
\end{scope}

\end{tikzpicture}

\caption{Conjectured slack-based complexity frontier for targeted replacement.  
When $k = k'$ (fully targeted), the problem is NP-hard for $k = k' = 1$
(known) and conjectured NP-hard for all $k = k' > 1$.  For targeted partial
replacement, we conjecture a slack threshold $C_{\mathrm{slack}}$ such that
instances with $0 < k - k' < C_{\mathrm{slack}}$ remain hard, while those
with $k - k' \ge C_{\mathrm{slack}}$ may become tractable under common-step
AP ladders.}
\label{fig:three-circles}
\end{figure}

\subsection{A Second Complexity Frontier: Targeted Full vs.\ Targeted Partial Replacement}
\label{subsec:targeted-replacement}

The analysis in this paper concerns the \emph{maximum displacement} problem:
the coalition is allowed to replace \emph{any} subset of $k'$ current winners
with \emph{any} subset of $k'$ outsiders, and the objective is to maximize $k'$,
yielding the maximum achievable displacement $k^\ast$.  
Our structural results (Section~\ref{sec:decomposition}) show that there is
always an optimal manipulation that targets the $k'$ \emph{weakest} winners and
the $k'$ \emph{strongest} outsiders, so the search for maximum displacement can
be restricted without loss of generality to this canonical “weakest vs.\ strongest”
pairing.  
This non-targeted setting admits the canonical ordering that drives the
one-dimensional cutoff structure under common-step AP ladders.

In contrast, \emph{targeted replacement} asks the coalition to replace
\emph{prescribed} winners with \emph{prescribed} outsiders.
This objective is fundamentally incompatible with the maximum-displacement
problem considered in this paper: the goal is no longer to maximize how many
winners can be displaced, but to achieve one specific replacement outcome.
The two formulations address different questions and are not reducible to one
another.
Accordingly, the meaningful computational question in the targeted setting is
not maximum displacement but rather:
\emph{what is the minimal coalition size required to achieve the prescribed
replacement?}

Below we outline a second complexity frontier---one that emerges solely from
the structure of the replacement objective.

\paragraph{Targeted partial replacement ($k>k'\ge1$).}
Here the coalition must replace only $k'$ designated winners with $k'$
designated outsiders, while the remaining $k-k'$ winners are untargeted.
This setting differs fundamentally from both maximum displacement and fully
targeted replacement.

Unlike maximum displacement, the coalition is required to place \emph{specific}
outsiders into the top-$k$; unlike fully targeted replacement, the presence of
$k-k'$ untargeted winners provides \emph{slack}: some accidental promotions or
demotions of other candidates may occur without violating the objective.
We distinguish:
\begin{itemize}[leftmargin=1.3em,itemsep=2pt]
  \item \emph{Weak targeted replacement}: the required outsider--winner swaps
        occur, but additional candidates may cross the top-$k$ boundary.
  \item \emph{Strong targeted replacement}: exactly the designated outsiders
        enter and exactly the designated winners leave the top-$k$.
\end{itemize}

Slack raises the possibility that weak and strong targeted replacement coincide
in terms of minimal coalition size once $k-k'$ is large enough.  
This motivates a slack-threshold view of tractability.

\begin{quote}
\emph{Conjecture (Slack-based tractability frontier).}
There exists a slack threshold $C_{\mathrm{slack}}\ge 1$ such that, under
common-step AP ladders, computing the minimal coalition size for \emph{targeted
replacement} is NP-hard whenever $k-k' < C_{\mathrm{slack}}$ and polynomial-time
solvable whenever $k-k' \ge C_{\mathrm{slack}}$.
\end{quote}

The intuition is that sufficient slack allows accidental crossings of the
top-$k$ boundary to be absorbed or neutralized without increasing coalition
power, potentially restoring a one-dimensional cutoff structure.  In contrast,
fully targeted replacement ($k'=k$) has no slack at all and forces the
coalition to satisfy $k$ independent outsider--winner dominance constraints,
inducing a genuinely multi-dimensional feasibility region.

If this conjecture holds, then the \emph{minimal coalition size} needed for
targeted partial replacement should admit a one-dimensional characterization
under AP ladders, and thus be solvable in polynomial time—unlike the fully
targeted case.

\paragraph{Frontier Summary.}
These considerations suggest a second complexity frontier, governed by the
slack $k-k'$ in the replacement objective.  Informally:
\[
\textbf{Targeted full replacement } (k'=k)
   \ \Rightarrow\ 
   \begin{cases}
      \text{NP-hard for $k=k'=1$ under Borda and related rules (known);}\\[2pt]
      \text{conjectured NP-hard for all $k=k'>1$.}
   \end{cases}
\]

\[
\textbf{Targeted partial replacement with }
   k-k' \ge C_{\mathrm{slack}}
   \ \Rightarrow\
   \text{conjectured tractable under common-step AP ladders};
\]

\[
\textbf{Targeted partial replacement with }
   k-k' < C_{\mathrm{slack}}
   \ \Rightarrow\
   \text{conjectured hard}.
\]

The constant $C_{\mathrm{slack}}$ is not yet identified.
Preliminary evidence from our structural analysis of side-effect elimination
suggests that $C_{\mathrm{slack}}$ may be as small as~2, though a complete
characterization remains open.
Informally, when slack is large, many pairwise outsider--winner dominance
constraints appear to collapse into a single global cutoff condition,
restoring the one-dimensional geometry that enables tractability in maximum
displacement; when slack is small, the feasible region remains fundamentally
multi-dimensional, preventing such a reduction.

\begin{figure}[t]
\centering
\begin{tikzpicture}[x=1.4cm,y=1.2cm]

% Axis
\draw[->,thick] (0,0) -- (7.6,0) node[anchor=west] {$k - k'$ (slack)};
\draw[thick] (0,-0.1) -- (0,0.1);
\node[below] at (0,-0.1) {\scriptsize $0$};

% --- Regions ---
% Region 1: k = k' = 1 (known NP-hard)
\begin{scope}
  \fill[red!10] (0,0.5) rectangle (2.0,1.7);
  \draw[red!50!black,thick] (0,0.5) rectangle (2.0,1.7);
  \node at (1.0,1.25) {\small \textbf{NP-hard}};
  \node at (1.0,0.9) {\tiny $k = k' = 1$ (known)};
\end{scope}

% Region 2: k = k' > 1 (conjectured NP-hard)
\begin{scope}
  \fill[red!10] (2.0,0.5) rectangle (4.0,1.7);
  \draw[red!50!black,thick] (2.0,0.5) rectangle (4.0,1.7);
  \node at (3.0,1.25) {\small \textbf{NP-hard?}};
  \node at (3.0,0.9) {\tiny $k = k' > 1$};
\end{scope}

% Region 3: 0 < k-k' < C_slack (low slack, conjectured hard)
\begin{scope}
  \fill[red!10] (4.0,0.5) rectangle (5.8,1.7);
  \draw[red!50!black,thick] (4.0,0.5) rectangle (5.8,1.7);
  \node at (4.9,1.25) {\small \textbf{Hard?}};
  \node at (4.9,0.9) {\tiny $0 < k - k' < C_{\mathrm{slack}}$};
\end{scope}

% Region 4: k-k' >= C_slack (high slack, tractable?)
\begin{scope}
  \fill[green!12] (5.8,0.5) rectangle (7.6,1.7);
  \draw[green!50!black,thick] (5.8,0.5) rectangle (7.6,1.7);
  \node at (6.7,1.25) {\small \textbf{Tractable?}};
  \node at (6.7,0.9) {\tiny $k - k' \ge C_{\mathrm{slack}}$};
\end{scope}

% --- Top labels (grouped) ---
% Targeted full replacement above regions 1+2
\node at (2.0,2.15) {\scriptsize targeted full replacement};

% Targeted partial replacement above regions 3+4
\node at (6.0,2.15) {\scriptsize targeted partial replacement};

% Slack labels above boxes 3 and 4 (two separate layers, no overlap)
\node at (4.9,1.95) {\scriptsize low slack};
\node at (6.7,1.95) {\scriptsize high slack};

% Optional: mark the conceptual threshold between regions 3 and 4 (no dashed line below)
\node at (5.8,0.2) {\scriptsize $C_{\mathrm{slack}}$};

\end{tikzpicture}

\caption{Conjectured slack-based complexity frontier for targeted replacement.  
When $k = k'$ (fully targeted), the problem is NP-hard for $k = k' = 1$
(known) and conjectured NP-hard for all $k = k' > 1$.  For targeted partial
replacement, we conjecture a slack threshold $C_{\mathrm{slack}}$ such that
instances with $0 < k - k' < C_{\mathrm{slack}}$ remain NP-hard, while those
with $k - k' \ge C_{\mathrm{slack}}$ become tractable under common-step
AP ladders.}
\label{fig:slack-frontier}
\end{figure}
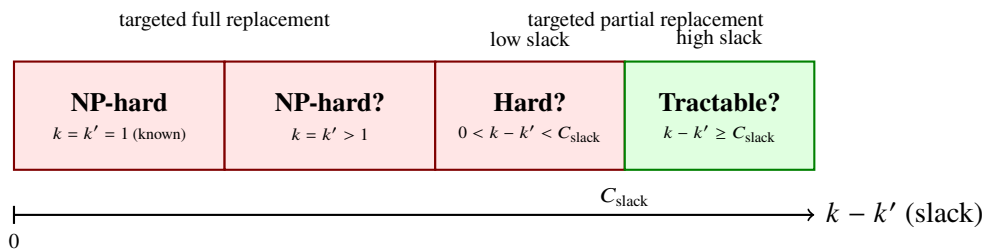

Figure~\ref{fig:slack-frontier} summarizes the conjectured frontier and the
role of slack in separating the hard and potentially tractable regimes.

\paragraph{Consistency with classical hardness ($k=k'=1$).}
Classical NP-hardness results assume fully targeted replacement with no slack.
By contrast, maximum displacement and the $k>k'$ targeted-partial regime rely
on canonical ordering and slack-based normalization, which largely remove the
need for enforcing individual outsider--winner dominance constraints.  Thus there is no inconsistency: targeted
full replacement remains hard, while non-targeted and slack-rich settings admit
structure.

\subsection{Broader Perspective and Conceptual Implications}

The results of this paper contribute to a growing body of work showing that the
computational structure of coalition manipulation is governed less by the
specific form of the positional scoring rule and more by the \emph{geometry of
feasible score redistributions}.  At a high level, our analysis reveals two
general principles that extend beyond the displacement setting studied here.

\paragraph{Manipulation as structured resource allocation.}
Although the coalition's action space is combinatorially enormous---each of the
$m$ manipulative voters may submit any of $x!$ possible rankings---their
aggregate effect collapses onto a highly structured feasibility region governed
by prefix-sum (submodular) constraints and a fixed total-mass constraint.
Viewed through this lens, coalition manipulation becomes a resource allocation
problem over a polymatroid: the coalition distributes a limited supply of
high-score positions subject to dominance constraints reminiscent of classical
majorization theory.  This continuous geometric structure abstracts away the
underlying ballots and exposes the essential variables that determine
manipulability.

\paragraph{Tractability via one-dimensional monotone structure.}
A second guiding principle is that manipulation becomes algorithmically
tractable whenever feasible outcomes admit a \emph{monotone
one-parameter characterization}.  In the displacement problem, this parameter
is a scalar cutoff $B$ separating the targeted outsiders from the targeted
winners.  Once feasibility reduces to checking whether two monotone envelopes
overlap on the $B$-axis, the entire problem collapses to two polynomial-time
feasibility oracles and a pair of binary searches.  This perspective explains
how large portions of the manipulation landscape can be efficiently decided
despite the exponential size of the coalition’s action space.

\paragraph{Implications for strategic multiagent systems.}
These two principles---polymatroidal structure and one-dimensional monotonicity---
suggest that similar tractability phenomena may arise in other multiagent
strategic settings.  Examples include:
\begin{itemize}[leftmargin=1.3em]
    \item \emph{Mechanism design and matching}, where threshold-based
          feasibility regions often collapse to scalar deviation constraints;
    \item \emph{Federated learning and fairness}, where client influence and
          robustness constraints can become tractable once reduced to a
          single-order statistic or cutoff parameter;
    \item \emph{Committee selection and multiwinner voting}, where feasible
          committees may be characterized through prefix-dominance or
          majorization conditions.
\end{itemize}
The canonical decomposition established here---a continuous polymatroid region
intersected with a discrete feasibility layer---provides a template for
analyzing strategic behavior in these and related domains.  It highlights that
complexity transitions often arise not from the size of the strategy space but
from the \emph{loss of monotonicity} or \emph{loss of geometric coherence} in
the aggregate feasible region.

Understanding these transitions, identifying their generality, and mapping the
boundary between tractable and intractable manipulation problems constitute
promising directions for future research.

\paragraph{Implications beyond voting.}
Many socio-technical systems aggregate ranked or scored signals from distributed
participants: peer-review scoring pipelines, recommender-system ranking,
credit-attribution platforms, moderation signals, and other integrity workflows.
In these domains, coordinated actors effectively control multiple copies of a
score ladder, and the geometric tools developed in this paper provide a
tractable way to quantify feasible influence, detect anomalous redistributions,
and certify whether observed aggregates are consistent with any legitimate
coalition.  Thus the Block--HLP envelope and AP-lattice structure offer a
general mathematical toolkit for modeling coordinated behavior across a wide
spectrum of ranking, evaluation, and multi-signal aggregation systems.

% =========================================================
\section{Conclusion and Future Work}
\label{sec:conclusion}
% =========================================================

This paper develops a unified geometric framework for analyzing coalition
manipulation in multiwinner elections under positional scoring rules.
By decomposing manipulation into independent \emph{boost} and \emph{suppress}
phases, we show that displacement feasibility at any fixed queried level~$k'$
is governed exactly by two prefix-majorization systems comparing coalition demand
vectors with its rank-prefix supply.
  This yields the Block--HLP
characterization, which identifies precisely when the coalition can raise all
targeted outsiders above a common cutoff~$B$ while simultaneously forcing all
targeted winners strictly below it.

For \emph{common-step} AP scoring vectors, the feasible
region refines into a discrete prefix-and-congruence lattice. In this
common-step AP regime, the Block--HLP conditions are also \emph{sufficient},
enabling an exact polynomial-time realizability oracle and a dual binary-search
algorithm for computing the maximum achievable displacement $k^\ast$.

Empirical evaluations confirm these results: on all instances where exhaustive
enumeration is feasible, the exact Block--HLP oracle agrees with brute-force
ground truth, and the Block--HLP envelopes match the oracle exactly on all tested
instances. The algorithm further scales to extremely large elections (e.g.,
$10^9$ candidates in under $28$ seconds), with memory rather than computation
being the only limiting factor.

Taken together, these results provide a complete geometric description of the
maximum displacement problem under common-step positional scoring rules, and an
exact algorithm for computing the maximum achievable displacement $k^\ast$.
Several research directions follow naturally:

\begin{itemize}[leftmargin=1.5em]

\item \emph{Tie-breaking and stochastic displacement.}
  Our strict-separation model guarantees success under \emph{all}
  tie-breaking rules, including adversarial ones.  Relaxing this to
  lexicographic, neutral, or randomized tie-breaking raises questions about
  expected displacement, probabilistic guarantees, and optimal stochastic
  manipulation strategies.  New geometric techniques may be needed to
  characterize “expected envelopes’’ under such uncertainty.

\item \emph{Beyond common-step AP structure and the coordination frontier.}
  When the coalition's ballots deviate from a shared step size, the aggregate
  feasible region loses its one-dimensional lattice structure.
  Section~\ref{subsec:coordination-frontier} suggests a sharp computational
  transition governed by the number of such deviations.  Determining whether a
  small universal threshold exists---beyond which feasibility becomes
  NP-hard---is a central open problem.  Mapping this \emph{coordination-based
  tractability frontier} and identifying exact structural assumptions needed
  for polynomial-time feasibility remain major challenges.

\item \emph{Minimal coalition size and the targeted-replacement frontier.}
  Computing minimum-size manipulating coalitions is NP-hard in the
single-winner case, and we conjecture that hardness persists for all
\emph{targeted full replacement} tasks ($k = k'$), even under common-step AP ladders.
By contrast, when $k > k' \ge 1$ and the coalition aims for
\emph{targeted partial displacement}, the feasible region may retain a one-dimensional cutoff structure under common-step AP ladders, which could admit efficient and exact algorithms.
  Understanding exactly which $(k,k')$ pairs under common-step AP ladders admit such monotone envelope characterizations—and therefore may support efficient and exact minimal-coalition algorithms—remains an important open question.

\item \emph{Variable-size committees and endogenous selection.}
  Our analysis assumes a fixed Top-$k$ rule.  Extending the geometric
  framework to systems with endogenous committee size—such as approval voting
  with quotas, threshold-based selection, or proportional representation—will
  require new definitions of displacement and new geometric separators.
  Understanding how prefix-majorization interacts with dynamic committee sizes
  is a promising direction.

\end{itemize}

Beyond these algorithmic questions, the interval-majorization viewpoint offers
tools for \emph{manipulation detection} and auditing.  Because the Block--HLP
conditions define tight feasible envelopes, observed election outcomes that
deviate from these envelopes may provide geometric signatures of coordinated
behavior.  This connection suggests new methods for analyzing robustness in
ranked data, aggregation systems, and federated learning environments where
bounded strategic influence is a central concern.

In summary, the Block--HLP theory provides a complete geometric foundation for
understanding coalition feasibility in positional scoring rules.  Combined with
the discrete structure exploited by our realizability oracle, it yields both
analytical clarity and practically scalable algorithms.  The broader landscape
of coalition manipulation—especially beyond common-step structure and beyond
fixed committee sizes—offers a rich set of open problems at the interface of
social choice, convex geometry, and computational complexity.

\bibliographystyle{ACM-Reference-Format}
\bibliography{Reference}
\appendix
% =========================================================
% Full proof: Feasibility under consecutive top-k' scores
% and constructive realization via layer decomposition
% =========================================================
% Requires: amsmath, amssymb, amsthm
% ---------------------------------------------------------

\section{Proof of Lemma~\ref{lem:canonical}}\label{sec:ProofLemma1}
\begin{proof}[Full Proof of Lemma~\ref{lem:canonical}]
%\paragraph{Recall (Lemma~\ref{lem:canonical}).}

Recall Lemma~\ref{lem:canonical} states that if displacement at level $k'$ is feasible under a positional scoring rule, then there exists a successful
manipulation in which (i) every top-$k'$ position on every coalition ballot is assigned to a candidate in
$\mathcal{O}^\ast$, and (ii) every bottom-$k'$ position on every coalition ballot is assigned to a candidate in
$\mathcal{T}_w$.

\paragraph{Setup and notation.}
Let $p=(p_1,\dots,p_x)$ be the positional scoring vector with $p_1\ge \cdots \ge p_x$.
Let $\pi=(\pi_1,\dots,\pi_m)$ be a manipulation profile (one complete ranking $\pi_v$ per colluding voter).
For a candidate $c$ and ballot $\pi_v$, let $\mathrm{pos}_{\pi_v}(c)\in\{1,\dots,x\}$ denote the position of $c$
in $\pi_v$.
Define the coalition's total score contribution to candidate $c$ under manipulation profile $\pi$ by
\[
\Delta_c(\pi)\;:=\;\sum_{v=1}^m p_{\mathrm{pos}_{\pi_v}(c)}.
\]
The final score is $F_c(\pi):=S_c+\Delta_c(\pi)$, where $S_c$ is the honest score.

Fix a displacement level $k'$ and a separating cutoff $B$ such that $\pi$ is successful, i.e.,
\begin{equation}
\label{eq:sep_appA}
F_o(\pi)=S_o+\Delta_o(\pi)\ge B \quad \forall\, o\in \mathcal{O}^\ast,
\qquad
F_t(\pi)=S_t+\Delta_t(\pi)\le B-1 \quad \forall\, t\in \mathcal{T}_w.
\end{equation}

\paragraph{A basic swap fact.}
Consider a single ballot $\pi_v$ and two candidates $a,b$ that occupy adjacent positions $j-1$ and $j$.
Swapping $a$ upward to $j-1$ and $b$ downward to $j$ changes only their coalition contributions:
\[
\Delta_a \leftarrow \Delta_a + (p_{j-1}-p_j),\qquad
\Delta_b \leftarrow \Delta_b - (p_{j-1}-p_j),
\]
and leaves all other $\Delta_c$ unchanged. Since $p_{j-1}\ge p_j$, we have $p_{j-1}-p_j\ge 0$.
Thus, moving a candidate upward weakly increases its $\Delta$, and moving a candidate downward weakly
decreases its $\Delta$.

\paragraph{Violation counters.}
For each coalition ballot $\pi_v$, define:
\begin{itemize}
\item $H_v$: the number of candidates in the top-$k'$ positions $\{1,\dots,k'\}$ of $\pi_v$
that are \emph{not} in $\mathcal{O}^\ast$;
\item $L_v$: the number of candidates in the bottom-$k'$ positions $\{x-k'+1,\dots,x\}$ of $\pi_v$
that are \emph{not} in $\mathcal{T}_w$.
\end{itemize}
Let $H:=\sum_{v=1}^m H_v$ and $L:=\sum_{v=1}^m L_v$.
We will transform $\pi$ via adjacent swaps into a canonical manipulation profile $\pi^{\mathrm{can}}$ such that $H=L=0$ while maintaining
\eqref{eq:sep_appA} throughout. Since each local correction strictly decreases either $H$ or $L$ and both are nonnegative integers,
the process terminates.

\subparagraph{Part I: Fixing misallocated top-$k'$ positions (making $H=0$).}
Assume $H>0$ and pick a ballot $\pi_v$ with $H_v>0$. Let $i\le k'$ be the \emph{largest} index such that
the occupant $c:=\pi_v[i]$ satisfies $c\notin \mathcal{O}^\ast$. By maximality of $i$, every position
$j\in\{i+1,\dots,k'\}$ is occupied by a candidate in $\mathcal{O}^\ast$.
Since $|\mathcal{O}^\ast|=k'$ but the top-$k'$ block of $\pi_v$ contains $c\notin\mathcal{O}^\ast$, there exists
at least one candidate $o\in\mathcal{O}^\ast$ appearing at some position $j>k'$ on the same ballot.

We modify $\pi_v$ using adjacent swaps in two stages, ensuring that the candidate ejected from the top-$k'$
block is exactly the misallocated $c$.

\emph{Stage 1 (move $c$ to the boundary).}
Starting from its current position $i\le k'$, repeatedly swap $c$ downward by adjacent swaps until $c$
reaches position $k'$. Each such swap moves $c$ to a weakly lower-scoring position, so $\Delta_c$ weakly
decreases. Moreover, the only candidates swapped upward past $c$ during this bubbling process are the
candidates originally in positions $i+1,\dots,k'$, all of which lie in $\mathcal{O}^\ast$ by the choice of $i$.
In particular, no candidate in $\mathcal{T}_w$ is ever moved upward in Stage~1.

\emph{Stage 2 (insert $o$ at the boundary).}
From the resulting ballot, repeatedly swap $o$ upward by adjacent swaps until $o$ reaches position $k'$.
This places $o$ into the top-$k'$ block and pushes $c$ from position $k'$ to position $k'+1$, thereby
removing $c$ from the top-$k'$ block.

\emph{Effect on the constraints.}
For the boost constraints, Stage~1 can only decrease $\Delta_c$ for $c\notin\mathcal{O}^\ast$ and weakly
increase the contributions of candidates in $\mathcal{O}^\ast$ that move upward; Stage~2 weakly increases
$\Delta_o$ for $o\in\mathcal{O}^\ast$ and the only candidate moved downward out of the top-$k'$ block is
$c\notin\mathcal{O}^\ast$. Hence all inequalities $S_{o'}+\Delta_{o'}\ge B$ for $o'\in\mathcal{O}^\ast$ remain valid.
For the suppress constraints, Stage~1 does not move any $t\in\mathcal{T}_w$ upward (by the choice of $i$),
and Stage~2 moves every candidate passed by $o$ downward by one position (so their $\Delta$ weakly
decreases). Therefore every inequality $S_t+\Delta_t\le B-1$ for $t\in\mathcal{T}_w$ remains valid
throughout.

\emph{Progress.}
After Stage~2, position $k'$ is occupied by $o\in\mathcal{O}^\ast$ and $c\notin\mathcal{O}^\ast$ has been moved
to position $k'+1$. Thus the number $H_v$ of non-$\mathcal{O}^\ast$ candidates in the top-$k'$ block decreases
by at least one, so $H$ strictly decreases. Repeating this operation while $H>0$ yields a successful
manipulation profile (still satisfying~(1)) with $H=0$, i.e., every top-$k'$ position on every ballot is
occupied by a candidate in $\mathcal{O}^\ast$.

\subparagraph{Part II: Fixing misallocated bottom-$k'$ positions (making $L=0$).}
We now start from a successful manipulation profile in which $H=0$ (i.e., every top-$k'$ position on every
ballot is occupied by a candidate in $\mathcal{O}^\ast$), and assume $L>0$. Pick a ballot $\pi_v$ with $L_v>0$.
Let $\ell \ge x-k'+1$ be the \emph{smallest} index within the bottom-$k'$ block such that the occupant
$c:=\pi_v[\ell]$ satisfies $c\notin \mathcal{T}_w$. By minimality of $\ell$, every position
$j\in\{x-k'+1,\dots,\ell-1\}$ is occupied by a candidate in $\mathcal{T}_w$.
Since $|\mathcal{T}_w|=k'$ but the bottom-$k'$ block of $\pi_v$ contains $c\notin\mathcal{T}_w$, there exists
at least one candidate $t\in\mathcal{T}_w$ appearing at some position $j<x-k'+1$ on the same ballot.

We modify $\pi_v$ using adjacent swaps in two stages, ensuring that the candidate ejected from the bottom-$k'$
block is exactly the misallocated $c$.

\emph{Stage 1 (move $c$ to the boundary).}
Starting from its current position $\ell\ge x-k'+1$, repeatedly swap $c$ upward by adjacent swaps until $c$
reaches position $x-k'+1$. Each such swap moves $c$ to a weakly higher-scoring position, so $\Delta_c$ weakly
increases. Moreover, the only candidates swapped downward past $c$ during this bubbling process are the
candidates originally in positions $x-k'+1,\dots,\ell-1$, all of which lie in $\mathcal{T}_w$ by the choice of
$\ell$. Hence, during Stage~1, every $t\in\mathcal{T}_w$ that is affected is moved \emph{downward} (so its
$\Delta_t$ weakly decreases), and no candidate in $\mathcal{T}_w$ is moved upward.

\emph{Stage 2 (insert $t$ at the boundary).}
From the resulting ballot, repeatedly swap $t$ downward by adjacent swaps until $t$ reaches position
$x-k'+1$. This places $t$ into the bottom-$k'$ block and pushes $c$ from position $x-k'+1$ to position
$x-k'$, thereby removing $c$ from the bottom-$k'$ block.

\emph{Effect on the constraints.}
For the suppress constraints, in Stage~1 the only candidate whose $\Delta$ increases is $c\notin\mathcal{T}_w$,
while any $t'\in\mathcal{T}_w$ swapped downward past $c$ has $\Delta_{t'}$ weakly decreased; in Stage~2, the
candidate $t\in\mathcal{T}_w$ moves downward, so $\Delta_t$ weakly decreases, and the only candidate moved
upward out of the bottom-$k'$ block is $c\notin\mathcal{T}_w$. Hence every inequality
$S_{t'}+\Delta_{t'}\le B-1$ for $t'\in\mathcal{T}_w$ remains valid throughout.

For the boost constraints, note first that, since $H=0$ holds before Part~II, every candidate of $\mathcal{O}^\ast$
already occupies a top-$k'$ position on $\pi_v$. Both Stage~1 and Stage~2 act entirely within the suffix of the
ballot from positions $j\ge x-k'+1$, and thus never move any $\mathcal{O}^\ast$ candidate \emph{out of} the top-$k'$
block. Moreover, any candidates swapped upward during Stage~1 are moved to positions no higher than
$x-k'+1>k'$, so the coalition contributions of $\mathcal{O}^\ast$ candidates (which come from top-$k'$ positions) are
unchanged. Therefore all inequalities $S_o+\Delta_o\ge B$ for $o\in\mathcal{O}^\ast$ remain valid.

\emph{Progress.}
After Stage~2, position $x-k'+1$ is occupied by $t\in\mathcal{T}_w$ and $c\notin\mathcal{T}_w$ has been moved
to position $x-k'$, outside the bottom-$k'$ block. Thus the number $L_v$ of non-$\mathcal{T}_w$ candidates in
the bottom-$k'$ block decreases by at least one, so $L$ strictly decreases. Repeating this operation while
$L>0$ yields a successful manipulation profile (still satisfying~(1)) with $L=0$, i.e., every bottom-$k'$
position on every ballot is occupied by a candidate in $\mathcal{T}_w$.

\paragraph{Conclusion.}
Starting from any successful manipulation profile $\pi$, the above adjacent-swap transformations produce
a profile $\pi^{\mathrm{can}}$ that remains successful (continues to satisfy \eqref{eq:sep_appA}) and satisfies
$H=L=0$. Therefore, in $\pi^{\mathrm{can}}$:
\begin{itemize}
\item every top-$k'$ position on every coalition ballot is assigned to a candidate in $\mathcal{O}^\ast$, and
\item every bottom-$k'$ position on every coalition ballot is assigned to a candidate in $\mathcal{T}_w$,
\end{itemize}
which is exactly the canonical form claimed by Lemma~\ref{lem:canonical}.
\qed

\end{proof}

\section{Proof of Lemma~\ref{lem:independent}}
\label{sec:ProofLemma2}

\begin{proof}[Full Proof of Lemma~\ref{lem:independent}]
Fix $k'$ and a cutoff $B$, and let $b=(b_o)_{o\in\mathcal{O}^\ast}$ and
$u=(u_t)_{t\in\mathcal{T}_w}$ denote the boost and suppress requirements
\[
b_o = \max\{0,\,B - S_o\}, 
\qquad
u_t = \max\{0,\,B-1 - S_t\}.
\tag{1}
\label{eq:boost-suppress}
\]

Assume that the two subproblems are individually feasible:
\begin{enumerate}[label=(\arabic*)]
    \item \textbf{Boost feasibility.}  
    There exists an allocation of the $mk'$ \emph{highest} positional scores
    $\{p_1,\dots,p_{k'}\}$ (with $m$ copies of each) to the outsiders
    $\mathcal{O}^\ast$ such that every $o\in\mathcal{O}^\ast$ receives total coalition
    contribution at least $b_o$.

    \item \textbf{Suppress feasibility.}  
    There exists an allocation of the $mk'$ \emph{lowest} positional scores
    $\{p_{x-k'+1},\dots,p_x\}$ (with $m$ copies of each) to the weak winners
    $\mathcal{T}_w$ such that every $t\in\mathcal{T}_w$ receives total coalition
    contribution at most $u_t$.
\end{enumerate}

We show that these two allocations can be combined into a full set of
coalition ballots that simultaneously satisfy all requirements.

\paragraph{Step 1: Constructing the top and bottom blocks.}
For each coalition voter $v\in[m]$, build a ballot $\pi_v$ in three blocks.

\smallskip
\noindent\emph{Top block ($1$ through $k'$).}  
Place the candidates of $\mathcal{O}^\ast$ according to the boost-feasible
allocation in (1).  
Across all ballots, this uses exactly the $m$ copies of each of the scores
$p_1,\dots,p_{k'}$.

\smallskip
\noindent\emph{Bottom block ($x-k'+1$ through $x$).}  
Place the candidates of $\mathcal{T}_w$ according to the suppress-feasible
allocation in (2).  
Across all ballots, this uses exactly the $m$ copies of each of the scores
$p_{x-k'+1},\dots,p_x$.

\paragraph{Step 2: The middle block does not matter.}
The remaining positions $k'+1,\dots,x-k'$ use the \emph{middle} scores
$\{p_{k'+1},\dots,p_{x-k'}\}$, which are disjoint from both the top and bottom
score sets.  
Fill these positions with the remaining candidates (all candidates not in
$\mathcal{O}^\ast\cup\mathcal{T}_w$) in an arbitrary order; their scores do not
affect the boost or suppress constraints.

\paragraph{Step 3: Disjointness of resources (as score \emph{copies}).}
Although the scoring vector $p=(p_1,\dots,p_x)$ is only assumed nonincreasing (so values may repeat),
the \emph{positions} on a ballot are disjoint. For each coalition ballot $v$, the top-$k'$ block
$\{1,\dots,k'\}$ and the bottom-$k'$ block $\{x-k'+1,\dots,x\}$ are disjoint sets of positions.
Hence, across all $m$ ballots, the coalition has two disjoint multisets of score \emph{copies}:
\[
\mathcal{P}^{\mathrm{high}} := \biguplus_{v=1}^m \{p_1,\dots,p_{k'}\}
\qquad\text{and}\qquad
\mathcal{P}^{\mathrm{low}} := \biguplus_{v=1}^m \{p_{x-k'+1},\dots,p_x\},
\]
where $\biguplus$ denotes multiset union (counting multiplicity). These two multisets correspond to
disjoint ballot positions, so allocating copies from $\mathcal{P}^{\mathrm{high}}$ to $\mathcal{O}^\ast$
cannot reduce the available copies in $\mathcal{P}^{\mathrm{low}}$ for $\mathcal{T}_w$, and vice versa,
even if some numerical values coincide.

\paragraph{Step 4: Verification.}
In the combined manipulation profile $\pi=(\pi_1,\dots,\pi_m)$:
\begin{itemize}
    \item Every outsider $o\in\mathcal{O}^\ast$ receives exactly the coalition
    contribution prescribed by the boost allocation, so
    \[
    S_o + \Delta_o(\pi)\;\ge\; b_o + S_o \;\ge\; B.
    \]

   \item Every weak winner $t\in\mathcal{T}_w$ receives exactly the coalition
contribution prescribed by the suppress-feasible allocation, so
$\Delta_t(\pi)\le u_t$ and hence
\[
S_t+\Delta_t(\pi)\;\le\; S_t+u_t.
\]
Moreover, suppression feasibility at cutoff $B$ implies $S_t\le B-1$ for all
$t\in\mathcal{T}_w$ (since coalition contributions are nonnegative). Thus
$u_t=B-1-S_t$ and so $S_t+u_t=B-1$, yielding
\[
S_t+\Delta_t(\pi)\;\le\; B-1.
\]

    \item Middle-block candidates receive only scores
    $\{p_{k'+1},\dots,p_{x-k'}\}$, which do not affect any of the inequalities.
\end{itemize}

Thus all boost and suppress requirements induced by $B$ are simultaneously
satisfied.

\paragraph{Conclusion.}
A full coalition manipulation exists that achieves both the boost and suppress
requirements at cutoff $B$.  
Hence the two subproblems, being independent and using disjoint score
resources, can be satisfied simultaneously.  
This completes the proof.
\end{proof}

\section{Proof of Theorem~\ref{thm:guaranteed-topk}}
\label{sec:ProofGuranteed}

\begin{proof}[Full proof of Theorem~\ref{thm:guaranteed-topk}.]
Let $F_c := S_c + \Delta_c$ denote the final score of candidate $c$ after adding the coalition's ballots.
Fix $k'$ and cutoff $B$, and assume the coalition's top- and bottom-block assignments guarantee
\[
F_o \ge B \quad \forall\, o \in \mathcal{O}^\ast, \qquad
F_t \le B-1 \quad \forall\, t \in \mathcal{T}_w.
\]
Consider any completion of the coalition ballots in the middle positions and any tie-breaking rule, and let
$\mathcal{W}$ denote the resulting Top-$k$ winner set.

We claim that $|\mathcal{W} \cap \mathcal{O}| \ge k'$. Suppose for contradiction that
$|\mathcal{W} \cap \mathcal{O}| \le k'-1$. Then
\[
|\mathcal{W} \cap \mathcal{T}|
\;=\;
k - |\mathcal{W} \cap \mathcal{O}|
\;\ge\;
k - (k'-1)
\;=\;
k-k'+1.
\]
Since $|\mathcal{T} \setminus \mathcal{T}_w| = k-k'$ and
$|\mathcal{W} \cap \mathcal{T}| \ge k-k'+1$, it follows that
$\mathcal{W}$ must contain at least one weak winner:
\[
(\mathcal{W} \cap \mathcal{T}) \cap \mathcal{T}_w \neq \emptyset.
\]
Fix any $t \in \mathcal{W} \cap \mathcal{T}_w$. We will show that this forces
every $o\in\mathcal{O}^\ast$ to belong to $\mathcal{W}$, contradicting the assumption
$|\mathcal{W}\cap\mathcal{O}|\le k'-1$.

Indeed, for every $o \in \mathcal{O}^\ast$ we have $F_o \ge B > B-1 \ge F_t$, hence $F_o > F_t$.
Under the Top-$k$ rule, a candidate with strictly higher final score than a selected candidate must also be
selected (tie-breaking only applies among equal scores). Therefore every $o \in \mathcal{O}^\ast$ must belong
to $\mathcal{W}$. But $|\mathcal{O}^\ast| = k'$, so this implies $|\mathcal{W} \cap \mathcal{O}| \ge k'$,
contradicting $|\mathcal{W} \cap \mathcal{O}| \le k'-1$.

Hence $|\mathcal{W} \cap \mathcal{O}| \ge k'$ in every completion and under every tie-breaking. Equivalently,
at most $k-k'$ members of the honest winner set $\mathcal{T}$ remain in $\mathcal{W}$, so at least $k'$
honest winners are displaced.

This completes the proof.
\end{proof}

\section{Proof of Theorem~\ref{thm:impossibility}}
\label{sec:ProofImpossibility}

\begin{proof}[Full Proof of Theorem~\ref{thm:impossibility}]
Fix a displacement level $k'$ and a cutoff $B$.  
Let $\mathcal{O}^\ast$ and $\mathcal{T}_w$ denote the $k'$ strongest outsiders
and the $k'$ weakest winners, respectively, under the honest profile.
Achieving displacement level $k'$ at cutoff $B$ requires satisfying
\[
S_o + \Delta_o \;\ge\; B \qquad \forall\, o \in \mathcal{O}^\ast, 
\qquad\qquad
S_t + \Delta_t \;\le\; B-1 \qquad \forall\, t \in \mathcal{T}_w.
\tag{1}
\label{eq:impossibility-separation}
\]

We show that if either the boost or the suppress subproblem is infeasible,
then no coalition manipulation can satisfy all inequalities in
\eqref{eq:impossibility-separation}.

\paragraph{Canonical structure of contributions.}
By the Canonical Manipulation Lemma (Lemma~\ref{lem:canonical}), any
successful manipulation can be transformed, without harming feasibility,
into one in which:
\begin{enumerate}[label=(\roman*)]
    \item each $o \in \mathcal{O}^\ast$ receives all of its coalition
    contributions from the top-$k'$ positional scores; and
    \item each $t \in \mathcal{T}_w$ receives all of its coalition
    contributions from the bottom-$k'$ positional scores.
\end{enumerate}
The two pools of score \emph{copies} are disjoint:
\[
\biguplus_{v=1}^m \{p_1,\dots,p_{k'}\}
\;\;\dot\cup\;\;
\biguplus_{v=1}^m \{p_{x-k'+1},\dots,p_x\},
\]
where the disjoint union reflects that these copies originate from disjoint
ballot positions, even if some numerical values coincide.

Consequently, if any successful manipulation exists, then there also exists
one satisfying the canonical structure above. We argue by contradiction
under this restriction.

\paragraph{Case (a): Boost requirements are infeasible.}
Suppose the boost subproblem is infeasible: there is no allocation of the
top-$k'$ positional scores that gives every
$o \in \mathcal{O}^\ast$ a total boost contribution at least
$b_o := \max(0, B - S_o)$.
Then for every such allocation, there exists some
$\mathcal{O}^\star \in \mathcal{O}^\ast$ with
\[
S_{\mathcal{O}^\star} + \Delta_{\mathcal{O}^\star} < B.
\]
Thus the left-hand inequality of
\eqref{eq:impossibility-separation} fails for $\mathcal{O}^\star$, and no manipulation can
separate all outsiders above the cutoff $B$.
Hence displacement level $k'$ is impossible at cutoff $B$.

\paragraph{Case (b): Suppress requirements are infeasible.}
Suppose instead that the suppress subproblem is infeasible: there is no
allocation of the bottom-$k'$ positional scores that forces every
$t \in \mathcal{T}_w$ to receive a total contribution at most
$u_t := \max(0, B-1 - S_t)$.
Then for every such allocation, there exists some
$t^\star \in \mathcal{T}_w$ with
\[
S_{t^\star} + \Delta_{t^\star} \ge B.
\]
Thus the right-hand inequality of
\eqref{eq:impossibility-separation} fails for $t^\star$, making strict separation
at cutoff $B$ impossible.
Hence displacement level $k'$ is impossible at cutoff $B$.

\paragraph{Conclusion.}
If either the boost or the suppress subproblem is infeasible, then no
assignment of the top-$k'$ and bottom-$k'$ score pools can satisfy the
separation conditions in \eqref{eq:impossibility-separation}.
Therefore, no coalition manipulation can achieve displacement level $k'$ at
cutoff $B$.

This completes the proof.
\end{proof}

\section{Proof of Lemma~\ref{lem:monotonicity}}
\label{sec:ProofMonotonicity}

\begin{proof}[Full Proof of Lemma~\ref{lem:monotonicity}]
Fix a displacement level $k' \le \min\{k,\,x-k\}$, and let $\mathcal{O}^\ast$ and
$\mathcal{T}_w$ denote the $k'$ strongest outsiders and $k'$ weakest winners,
respectively.  
For a cutoff $B$, recall the boost and suppress requirements:
\[
b_o(B) := \max\{0,\,B - S_o\},
\qquad
u_t(B) := \max\{0,\,B-1 - S_t\}.
\tag{1}
\label{eq:monotone-reqs}
\]

We prove the two monotonicity claims separately.

\paragraph{1. Boost feasibility is monotone decreasing in $B$.}
Suppose boost feasibility holds at cutoff $B$:  
there exists an allocation of the $m k'$ highest positional scores
$\{p_1,\dots,p_{k'}\}$ giving each outsider $o \in \mathcal{O}^\ast$ a coalition
contribution $\Delta_o$ satisfying $\Delta_o \ge b_o(B)$.

Now fix any $B' \le B$.  
For each outsider $o \in \mathcal{O}^\ast$,
\[
b_o(B') 
= \max\{0,\,B' - S_o\}
\;\le\;
\max\{0,\,B - S_o\}
= b_o(B),
\]
because decreasing the cutoff can only weaken the boost requirement.  
Thus the exact same allocation that satisfied $\Delta_o \ge b_o(B)$ also
satisfies $\Delta_o \ge b_o(B')$.

Therefore, if boost feasibility holds at $B$, it also holds at every
$B' \le B$.

\paragraph{2. Suppress feasibility is monotone increasing in $B$.}
Suppose suppress feasibility holds at cutoff $B$:  
there exists an allocation of the $m k'$ lowest positional scores
$\{p_{x-k'+1},\dots,p_x\}$ such that for each weak winner $t \in
\mathcal{T}_w$,
\[
\Delta_t \;\le\; u_t(B).
\]

Now fix any $B' \ge B$.  
For each $t \in \mathcal{T}_w$,
\[
u_t(B') 
= \max\{0,\,B'-1 - S_t\}
\;\ge\;
\max\{0,\,B-1 - S_t\}
= u_t(B),
\]
because increasing the cutoff relaxes the suppress requirement.  
Hence the same allocation that ensured $\Delta_t \le u_t(B)$ automatically
ensures $\Delta_t \le u_t(B')$.

Therefore, if suppress feasibility holds at $B$, it also holds at every
$B' \ge B$.

\paragraph{Conclusion.}
Lowering $B$ can only weaken the boost requirements, and raising $B$ can only
weaken the suppress requirements.  
Thus boost feasibility is monotone in the downward direction, and suppress
feasibility is monotone in the upward direction, completing the proof.
\end{proof}

\section{Proof of Theorem~\ref{thm:block-hlp}}
\label{sec:Proofblock-hlp}

\begin{proof}[Full Proof of Theorem~\ref{thm:block-hlp}]
\emph{Necessity.}
A classical theorem of Rado~\cite{rado1952} (see also
Marshall--Olkin--Arnold~\cite[Ch.~3]{marshall2011})
characterizes the permutahedron $\Pi(r^{(v)})$ as follows:
a vector $z$ lies in $\Pi(r^{(v)})$ if and only if its sorted version
$z^\downarrow$ satisfies
\[
\sum_{i=1}^{t} z^\downarrow_i \;\le\; \sum_{i=1}^{t} r^{(v)}_i,
\qquad\text{for all } t = 1,\dots,k'-1,
\]
together with the total-sum equality
\[
\sum_{i=1}^{k'} z_i \;=\; \sum_{i=1}^{k'} r^{(v)}_i.
\]

Let $y=\sum_{v=1}^m y^{(v)}$ be any aggregate vector with
$y^{(v)}\in\Pi(r^{(v)})$.
For each ballot, write $y^{(v)\downarrow}$ for the nonincreasing
rearrangement of $y^{(v)}$.
Summing Rado’s prefix inequalities over all ballots yields
\begin{equation}
\sum_{v=1}^m \sum_{i=1}^{t} y^{(v)\downarrow}_i
\;\le\;
\sum_{v=1}^m \sum_{i=1}^{t} r^{(v)}_i
\qquad \text{for all } t=1,\dots,k'-1.
\tag{A}
\end{equation}

Next, recall a standard fact from majorization theory
(Hardy--Littlewood--Pólya~\cite[Thm.~A.1]{hardy1934}):
the prefix sums of the sorted aggregate are maximized when each summand
is itself sorted.  Thus
\begin{equation}
\sum_{i=1}^{t} y^\downarrow_i
\;\le\;
\sum_{v=1}^{m} \sum_{i=1}^{t} y^{(v)\downarrow}_i
\qquad\text{for all } t=1,\dots,k'-1.
\tag{B}
\end{equation}

Combining (A) and (B) gives
\[
\sum_{i=1}^{t} y^\downarrow_i
\;\le\;
\sum_{v=1}^{m} \sum_{i=1}^{t} r^{(v)}_i,
\qquad \text{for all } t=1,\dots,k'-1.
\]
The equality of total sums,
\[
\sum_{i=1}^{k'} y_i \;=\; \sum_{v=1}^{m} \sum_{i=1}^{k'} r^{(v)}_i,
\]
follows immediately from the fact that each pair $y^{(v)}$ and
$r^{(v)}$ has the same coordinate sum. Thus $y$ satisfies the
prefix-sum bounds and total-sum condition stated in
Theorem~\ref{thm:block-hlp}.

\medskip
\emph{Sufficiency.}
We now show that any vector $y$ satisfying the conditions of
Theorem~\ref{thm:block-hlp} belongs to the Minkowski sum
$\mathcal{P}_{\mathrm{conv}}=\sum_{v=1}^m \Pi(r^{(v)})$.

Rather than reasoning ballot-by-ballot, we use a standard construction
from submodular geometry. Let $[k'] := \{1,2,\dots,k'\}$.
Define a set function $f : 2^{[k']} \to \mathbb{R}$ by
\[
f(T)
  := \sum_{v=1}^{m} \sum_{i=1}^{|T|} r^{(v)}_i,
  \qquad T\subseteq [k'].
\]
Because $f(T)$ depends only on $|T|$, we may write $f(T)=h(|T|)$ where
\[
h(t):=\sum_{v=1}^{m}\sum_{i=1}^{t} r^{(v)}_i \qquad (t=0,1,\dots,k').
\]
Moreover, $h$ is \emph{discrete concave}: its marginal increments satisfy
\[
h(t)-h(t-1)=\sum_{v=1}^{m} r^{(v)}_t
\quad\text{and}\quad
h(t+1)-h(t)=\sum_{v=1}^{m} r^{(v)}_{t+1},
\]
and since each $r^{(v)}$ is nonincreasing, we have
$\sum_v r^{(v)}_t \ge \sum_v r^{(v)}_{t+1}$.
Hence $h(t)-h(t-1)$ is nonincreasing in $t$, so $h$ is discrete concave and
$f$ is submodular.

The \emph{base polytope} of $f$ is
\[
\mathcal{B}(f)
  = \Bigl\{
        y\in\mathbb{R}^{k'} :
        \sum_{i\in T} y_i \le f(T)\ \forall\,T\subseteq[k'],\
        \sum_{i=1}^{k'} y_i = f([k'])
    \Bigr\}.
\]

A fundamental result in polymatroid theory
(Fujishige~\cite[Ch.~2]{Fujishige2005}) states that
\[
\sum_{v=1}^m \Pi(r^{(v)}) \;=\; \mathcal{B}(f).
\tag{C}
\]

Because $f$ depends only on cardinality, the most restrictive
constraints among $\sum_{i\in T} y_i \le f(T)$ occur when
$T$ consists of the indices of the $t$ largest coordinates of $y$.
If $y^\downarrow$ is the nonincreasing sort of $y$, then for
$T_t=\{1,\dots,t\}$ we obtain
\[
\sum_{i=1}^{t} y^\downarrow_i
   \;\le\;
   f(T_t)
   = \sum_{v=1}^m \sum_{i=1}^{t} r^{(v)}_i,
\qquad \text{for all } t=1,\dots,k'.
\]

Together with the total-sum equality, these are exactly the
prefix-sum bounds and total-sum condition of
Theorem~\ref{thm:block-hlp}.
Therefore $y \in \mathcal{B}(f)$, and by (C) we conclude
\[
y \in \sum_{v=1}^m \Pi(r^{(v)}).
\]
\end{proof}

\section{Proof of Theorem~\ref{thm:ap-lattice-common}}
\label{sec:ProofAP}

\begin{proof}[Full proof of Theorem~\ref{thm:ap-lattice-common}]
We prove necessity and sufficiency of conditions~(A)--(C).

\paragraph{Necessity.}
If $y$ is (ballot-)realizable, then
\[
y=\sum_{v=1}^m y^{(v)} \qquad\text{for some }y^{(v)}\in P(r^{(v)})\subseteq \Pi(r^{(v)}).
\]
Conditions~(A) and~(B) follow immediately from the Block--HLP theorem
(Theorem~\ref{thm:block-hlp}), since $y\in\sum_v \Pi(r^{(v)})$.

For~(C), under a common-step AP ladder we can write
$r^{(v)}_j=L_v+g\,\ell^{(v)}_j$ with $\ell^{(v)}_j\in\mathbb{Z}$ for all $j$, hence every coordinate of
every permutation of $r^{(v)}$ is congruent to $L_v\pmod g$. Therefore each coordinate
$y^{(v)}_i\equiv L_v\pmod g$, and summing over $v$ gives
$y_i\equiv L_{\mathrm{tot}}\pmod g$ for all $i$, establishing~(C).

\paragraph{Sufficiency.}
Assume $y\in\mathbb{Z}^{k'}$ satisfies~(A)--(C).  We will construct
$s^{(v)}\in P(r^{(v)})$ such that $\sum_v s^{(v)}=y$.

\medskip
\noindent\textbf{Step 1: Subtract the baseline and scale to unit step.}
Let $b:=L_{\mathrm{tot}}\mathbf{1}$ and $d:=y-b$.
By~(C), $d$ is divisible coordinatewise by $g$, so
\[
\hat d:=d/g\in\mathbb{Z}^{k'}.
\]
Moreover, $\hat d$ is \emph{nonnegative} coordinatewise: since $y^\downarrow$ is nonincreasing,
\[
y^\downarrow_{k'} \;=\; \sum_{i=1}^{k'} y_i \;-\;\sum_{i=1}^{k'-1} y^\downarrow_i
\;\ge\; F(k')-F(k'-1)
\;=\;\sum_{v=1}^m r^{(v)}_{k'}
\;\ge\; \sum_{v=1}^m L_v=L_{\mathrm{tot}},
\]
where we used (A)--(B) for the inequality, and the definition of $F(\cdot)$ for the telescoping identity.
Hence $y^\downarrow_{k'}\ge L_{\mathrm{tot}}$, and therefore $y_i\ge L_{\mathrm{tot}}$ for all $i$,
so $\hat d\in\mathbb{Z}_{\ge 0}^{k'}$.

Dividing (A)--(B) by $g$ (and using that subtracting $L_{\mathrm{tot}}$ preserves the sort order) yields
\begin{equation}
\sum_{i=1}^t \hat d_i^\downarrow \;\le\; \frac{F(t)-tL_{\mathrm{tot}}}{g},
\qquad t=1,\dots,k'-1,
\qquad\text{and}\qquad
\sum_{i=1}^{k'} \hat d_i \;=\; \frac{F(k')-k'L_{\mathrm{tot}}}{g}.
\tag{$\dagger$}\label{eq:scaled-majorization-updated}
\end{equation}
(For completeness: $(F(t)-tL_{\mathrm{tot}})/g\in\mathbb{Z}$ since
each $r^{(v)}_j\equiv L_v\pmod g$ implies $\sum_{i=1}^t r^{(v)}_i\equiv tL_v\pmod g$ and hence
$F(t)\equiv tL_{\mathrm{tot}}\pmod g$.)

\medskip
\noindent\textbf{Step 2: Encode each ballot as a multilevel (Ferrers) capacity profile.}
For each ballot $v$ define the \emph{scaled ladder levels}
\[
\ell^{(v)}_j:=\frac{r^{(v)}_j-L_v}{g}\in\mathbb{Z}_{\ge 0},
\qquad\text{so that}\qquad
r^{(v)}_j=L_v+g\,\ell^{(v)}_j,
\]
with $\ell^{(v)}_1\ge\cdots\ge \ell^{(v)}_{k'}$.

For each integer level $\ell\ge 1$, define
\[
r^{(v)}(\ell)
   := \bigl|\{\, j\in[k'] : \ell^{(v)}_j \ge \ell \,\}\bigr|
   = \bigl|\{\, j\in[k'] : r^{(v)}_j \ge L_v+\ell g \,\}\bigr|.
\]
Thus $r^{(v)}(\ell)$ is the number of \emph{positions on ballot $v$} that are at height at least $\ell$
above baseline in the scaled lattice.  Only finitely many levels are nonzero.

A standard Ferrers-diagram double count gives, for each $t\in[k']$,
\begin{equation}
\frac{F(t)-tL_{\mathrm{tot}}}{g}
   \;=\;\sum_{v=1}^m \sum_{\ell\ge 1} \min\{t,\;r^{(v)}(\ell)\}.
\tag{$\ddagger$}\label{eq:capacity-expansion-updated}
\end{equation}
Indeed, for fixed $v$,
\[
\sum_{i=1}^t \ell^{(v)}_i
=\sum_{\ell\ge 1}\bigl|\{\,i\le t:\ell^{(v)}_i\ge \ell\,\}\bigr|
=\sum_{\ell\ge 1}\min\{t,\;r^{(v)}(\ell)\},
\]
and summing over $v$ yields \eqref{eq:capacity-expansion-updated}.

Combining \eqref{eq:scaled-majorization-updated} and
\eqref{eq:capacity-expansion-updated}, the condition on $\hat d$ becomes:
\begin{equation}
\sum_{i=1}^t \hat d_i^\downarrow
\;\le\; \sum_{v=1}^m \sum_{\ell\ge 1}\min\{t,\;r^{(v)}(\ell)\}
\quad\text{for }t=1,\dots,k',
\ \text{with equality at }t=k'.
\tag{$\star$}\label{eq:multilevel-majorization}
\end{equation}

\medskip
\noindent\textbf{Step 3: Construct a nested level family realizing $\hat d$.}
We now realize $\hat d$ via a nested family of $0$--$1$ matrices.
Because all conditions and goals are symmetric under permuting target indices, we may
assume WLOG that $\hat d=\hat d^\downarrow$ and relabel back at the end.

By the multilevel Ferrers realization theorem (a standard consequence of the Gale--Ryser theorem together with level-nesting
constructions; see, e.g.,~\cite{Fujishige2005}), the inequalities \eqref{eq:multilevel-majorization} are sufficient to guarantee
the existence of $0$--$1$ matrices
\[
X^{(1)} \supseteq X^{(2)} \supseteq X^{(3)} \supseteq \cdots
\]
with rows indexed by ballots $v\in[m]$ and columns indexed by targets $i\in[k']$
such that:
\begin{itemize}
\item for every $v$ and every level $\ell\ge 1$, the $v$th row of $X^{(\ell)}$ has exactly
      $r^{(v)}(\ell)$ ones;
\item for every target $i$,
      \(\displaystyle \sum_{v=1}^m\sum_{\ell\ge 1} X^{(\ell)}_{v,i}=\hat d_i\);
\item the family is entrywise nested: $X^{(\ell+1)}_{v,i}\le X^{(\ell)}_{v,i}$ for all $v,i,\ell$.
\end{itemize}
(Only finitely many $X^{(\ell)}$ are nonzero.)

\medskip
\noindent\textbf{Step 4: Recover ballot-wise score vectors and verify they are permutations.}
Define for each ballot $v$ and target $i$ the \emph{scaled height}
\[
\tau_v(i):=\sum_{\ell\ge 1} X^{(\ell)}_{v,i}\in\mathbb{Z}_{\ge 0}.
\]
Because the $X^{(\ell)}$ are nested, the set
$\{\ell\ge 1: X^{(\ell)}_{v,i}=1\}$ is an initial segment $\{1,2,\dots,\tau_v(i)\}$, so equivalently
$\tau_v(i)=\max\{\ell: X^{(\ell)}_{v,i}=1\}$ (with $\tau_v(i)=0$ if none).

Now define the ballot-wise score assignment
\[
s^{(v)}_i := L_v + g\,\tau_v(i).
\]

\emph{Claim 1:} For each $v$, the multiset $\{\tau_v(1),\dots,\tau_v(k')\}$ equals the multiset
$\{\ell^{(v)}_1,\dots,\ell^{(v)}_{k'}\}$, hence $s^{(v)}\in P(r^{(v)})$.

\emph{Proof.}
Fix $v$ and a level $\ell\ge 1$. By definition of $\tau_v(i)$,
\[
|\{\,i:\tau_v(i)\ge \ell\,\}| \;=\; \sum_{i=1}^{k'} X^{(\ell)}_{v,i} \;=\; r^{(v)}(\ell),
\]
using the row-sum condition for $X^{(\ell)}$.
But $r^{(v)}(\ell)=|\{\,j:\ell^{(v)}_j\ge \ell\,\}|$ by definition.
Thus for every $\ell\ge 1$, the number of coordinates $\tau_v(i)$ at least $\ell$
matches the number of coordinates $\ell^{(v)}_j$ at least $\ell$.
This equality of all threshold counts determines the multiset uniquely, so
$\{\tau_v(i)\}_i=\{\ell^{(v)}_j\}_j$. Multiplying by $g$ and adding $L_v$ gives
$\{s^{(v)}_i\}_i=\{r^{(v)}_j\}_j$, i.e.\ $s^{(v)}\in P(r^{(v)})$.
%\qedhere

\medskip
\emph{Claim 2:} The ballot-wise scores sum to $y$.
Indeed, for each $i$,
\[
\sum_{v=1}^m s^{(v)}_i
=\sum_{v=1}^m L_v + g\sum_{v=1}^m \tau_v(i)
= L_{\mathrm{tot}} + g\sum_{v=1}^m\sum_{\ell\ge 1} X^{(\ell)}_{v,i}
= L_{\mathrm{tot}} + g\,\hat d_i
= y_i,
\]
where we used the defining property of the nested family and $\hat d=(y-L_{\mathrm{tot}}\mathbf{1})/g$.

Therefore $y=\sum_v s^{(v)}$ with each $s^{(v)}\in P(r^{(v)})$, proving realizability and completing the
sufficiency direction.
\end{proof}

\paragraph{Algorithmic note (constructive realization).}
Steps~3--4 are constructive and therefore yield a polynomial-time realization procedure.
Since each AP ladder has at most $k'$ distinct levels, the nested matrices $X^{(\ell)}$ in Step~3 can be
computed in time polynomial in the input size (in particular, in $(m,k')$) using standard
degree-constrained bipartite-flow (equivalently, $b$-matching) routines, while maintaining the nesting
$X^{(\ell+1)} \subseteq X^{(\ell)}$.
Given the resulting nested family, Step~4 computes $\tau_v(i)$ and hence the ballot-wise vectors
$s^{(v)}$, yielding a realization with $\sum_v s^{(v)}=y$.
If in Step~3 we first relabel indices so that $\hat d=\hat d^\downarrow$, we undo that relabeling at the end.

Consequently, given any integral aggregate $y$ satisfying conditions (A)--(C), we can compute
permutations $\sigma_1,\ldots,\sigma_m \in S_{k'}$ such that
\[
\sum_{v=1}^m \sigma_v\!\bigl(r^{(v)}\bigr) = y,
\]
where $\sigma_v$ acts by permuting coordinates.
We refer to this constructive procedure as \textsc{RealizeAP} and use it in Section~\ref{sec:construct-ballots} to build explicit
coalition ballots from the witness aggregates returned by the feasibility oracle.

\section{Proof of Lemma~\ref{lem:ap-boost}}
\label{sec:ProofBoostFeasibility}

\begin{proof}[Full proof of Lemma~\ref{lem:ap-boost}]
\textbf{Necessity.}
Suppose there exists a realizable aggregate $y$ satisfying
$y_i \ge b_i$ for all $i=1,\dots,k'$.
By Theorem~\ref{thm:ap-lattice-common}, realizability is equivalent to:
\begin{enumerate}[label=(\alph*),itemsep=1pt]
\item $\displaystyle \sum_{i=1}^t y_i^\downarrow \le F(t)$
      for all $t=1,\dots,k'-1$;
\item $\displaystyle \sum_{i=1}^{k'} y_i = F(k')$;
\item $y_i \equiv L_{\mathrm{tot}} \pmod g$
      for all $i=1,\dots,k'$.
\end{enumerate}
From (c) and $y_i \ge b_i$ for all $i=1,\dots,k'$, the smallest value in the coset
$L_{\mathrm{tot}} + g\mathbb{Z}$ that $y_i$ can take is exactly $\hat b_i$.
Thus $y_i \ge \hat b_i \ge b_i$ for all $i=1,\dots,k'$.
Since $b$ is nonincreasing, so is $\hat b$, hence $\hat b^\downarrow=\hat b$.
Moreover, sorting preserves coordinatewise dominance, so
$y_i^\downarrow \ge \hat b_i^\downarrow$ for all $i=1,\dots,k'$.
Combining with (a) and (b) yields
\[
\sum_{i=1}^t \hat b_i^\downarrow
\;\le\;
\sum_{i=1}^t y_i^\downarrow
\;\le\;
F(t),
\quad\text{for all } t=1,\dots,k'-1,
\]
and
\[
\sum_{i=1}^{k'} \hat b_i^\downarrow
\;\le\;
\sum_{i=1}^{k'} y_i
\;=\;
F(k').
\]
Thus the stated inequalities are necessary.

\medskip
\textbf{Sufficiency.}
Assume now that $\hat b^\downarrow$ satisfies
\[
\sum_{i=1}^t \hat b_i^\downarrow \le F(t)
\quad\text{for all } t=1,\dots,k'-1,
\qquad
\sum_{i=1}^{k'} \hat b_i^\downarrow \le F(k').
\]
Define the Block--HLP polymatroid
\[
\mathcal{P}
:= \left\{ z \in \mathbb{R}^{k'} :
           \sum_{i=1}^t z_i^\downarrow \le F(t)
           \ \text{for all } t=1,\dots,k'-1 \right\},
\]
and its base polytope
\[
\mathcal{B}
:= \left\{ z \in \mathcal{P} :
           \sum_{i=1}^{k'} z_i = F(k') \right\}.
\]
By standard polymatroid theory (Edmonds), $\mathcal{B}$ is a nonempty integral
base polytope, and $\mathcal{P}$ is the \emph{downward} closure of $\mathcal{B}$:
for any $z \in \mathcal{P}$, there exists a base $y \in \mathcal{B}$ such that
$y_i \ge z_i$ for all $i=1,\dots,k'$.
(An explicit construction is obtained by repeatedly increasing coordinates of $z$
while staying inside the prefix constraints until the total sum reaches $F(k')$.)

Apply this to $z := \hat b^\downarrow$.
By assumption, $z \in \mathcal{P}$ and
$\sum_{i=1}^{k'} z_i \le F(k')$, so there exists $y \in \mathcal{B}$ satisfying
\[
y_i \ge \hat b_i^\downarrow
\quad\text{for all } i=1,\dots,k'.
\]
Because the feasible region and capacities are symmetric in the coordinates, we
may relabel the targets so that
\[
y_i \ge \hat b_i \ge b_i
\quad\text{for all } i=1,\dots,k'
\]
in the original indexing.

By construction, $y$ satisfies the Block--HLP prefix inequalities and the
total-sum equality. Moreover, each coordinate of $\hat b$ lies in the congruence
class $L_{\mathrm{tot}} \pmod g$, and increasing coordinates by multiples of $g$
preserves this class. Thus
\[
y_i \equiv L_{\mathrm{tot}} \pmod g
\quad\text{for all } i=1,\dots,k'.
\]
Therefore $y$ satisfies conditions (A)--(C) of
Theorem~\ref{thm:ap-lattice-common}, and is hence realizable as
$y \in \sum_v \Pi(r^{(v)})$.
By construction, $y_i \ge b_i$ for all $i=1,\dots,k'$, so $b$ is boost-feasible,
completing the proof.
\end{proof}

\section{Proof of Lemma~\ref{lem:ap-suppress}}
\label{sec:ProofSupressFeasibility}

\begin{proof}[Full proof of Lemma~\ref{lem:ap-suppress}]
We again use the Block--HLP polymatroid
\[
\mathcal{P}
:= \Bigl\{ z \in \mathbb{R}^{k'} :
           \sum_{i=1}^t z_i^\downarrow \le F(t)
           \ \text{for all } t=1,\dots,k'-1 \Bigr\},
\quad
\mathcal{B}
:= \Bigl\{ z \in \mathcal{P} :
           \sum_{i=1}^{k'} z_i = F(k') \Bigr\},
\]
which is the base polytope of a cardinality-based polymatroid.
As before, $\mathcal{B}$ is integral and $\mathcal{P}$ is its downward closure:
for any $z \in \mathcal{P}$ there exists a base $w \in \mathcal{B}$ satisfying
$w_i \ge z_i$ for all $i=1,\dots,k'$
(this is a standard polymatroid-extension property; see, e.g., Edmonds).

\paragraph{Necessity.}
Suppose there exists a realizable aggregate $y$ satisfying
$y_i \le u_i$ for all $i=1,\dots,k'$.
By Theorem~\ref{thm:ap-lattice-common}, realizability is equivalent to:
\begin{enumerate}[label=(\alph*),itemsep=1pt]
\item $\displaystyle \sum_{i=1}^t y_i^\downarrow \le F(t)$
      for all $t=1,\dots,k'-1$;
\item $\displaystyle \sum_{i=1}^{k'} y_i = F(k')$;
\item $y_i \equiv L_{\mathrm{tot}} \pmod g$
      for all $i=1,\dots,k'$.
\end{enumerate}

Because all realizable coordinates must lie in the congruence class
$L_{\mathrm{tot}} + g\mathbb{Z}$, we can decrease each upper bound $u_i$ down to
$\hat u_i$ without excluding any feasible $y$:
if $y_i \le u_i$ and $y_i \equiv L_{\mathrm{tot}} \pmod g$, then
$y_i \le \hat u_i \le u_i$.
Hence we may assume $y_i \le \hat u_i$ for all $i=1,\dots,k'$.

Define the slack vectors
\[
z := F(k')\mathbf{1} - y,
\qquad
d := F(k')\mathbf{1} - \hat u.
\]
Then $y_i \le \hat u_i$ for all $i$ implies $z_i \ge d_i$ for all
$i=1,\dots,k'$.
Moreover, since $y$ satisfies the Block--HLP inequalities and has total sum
$F(k')$, it belongs to $\mathcal{B}$.
By symmetry of the feasible region and of the capacities $F(t)$ in the
coordinates, $z$ also belongs to $\mathcal{B}$ after a permutation of indices,
and in particular $z \in \mathcal{P}$.
Thus
\[
\sum_{i=1}^t z_i^\downarrow \le F(t)
\quad\text{for all } t=1,\dots,k'-1,
\qquad
\sum_{i=1}^{k'} z_i = F(k').
\]

Since $d_i \le z_i$ for all $i=1,\dots,k'$, sorting preserves this dominance and
yields $d_i^\downarrow \le z_i^\downarrow$ for all $i=1,\dots,k'$.
Therefore, for all $t=1,\dots,k'-1$,
\[
\sum_{i=1}^t d_i^\downarrow
\;\le\;
\sum_{i=1}^t z_i^\downarrow
\;\le\;
F(t),
\]
and also
\[
\sum_{i=1}^{k'} d_i^\downarrow
\;\le\;
\sum_{i=1}^{k'} z_i
\;=\;
F(k').
\]
Hence the stated inequalities are necessary.

\paragraph{Sufficiency.}
Now assume that $d^\downarrow$ satisfies
\[
\sum_{i=1}^t d_i^\downarrow \le F(t)
\quad\text{for all } t=1,\dots,k'-1,
\qquad
\sum_{i=1}^{k'} d_i^\downarrow \le F(k').
\]
That is, $d^\downarrow \in \mathcal{P}$ and has total sum at most $F(k')$.
By the polymatroid-extension property, there exists a base $z \in \mathcal{B}$
such that
\[
z_i \ge d_i^\downarrow
\quad\text{for all } i=1,\dots,k'.
\]
By symmetry of the capacities and of the feasible region in the coordinates, we
may relabel the targets so that
\[
z_i \ge d_i
\quad\text{for all } i=1,\dots,k'
\]
in the original indexing.

Because $F$ arises from common-step AP ladders, $\mathcal{B}$ coincides with the
continuous Block--HLP envelope from Theorem~\ref{thm:block-hlp}.
As in the boost case, we may choose $z$ so that
\[
z_i \equiv F(k') - L_{\mathrm{tot}} \pmod g
\quad\text{for all } i=1,\dots,k',
\]
since adding multiples of $g$ within the polymatroid constraints preserves both
the prefix inequalities and the congruence class.
Thus $z$ satisfies:
\begin{enumerate}[label=(\alph*),itemsep=1pt]
\item $\displaystyle \sum_{i=1}^t z_i^\downarrow \le F(t)$
      for all $t=1,\dots,k'-1$;
\item $\displaystyle \sum_{i=1}^{k'} z_i = F(k')$;
\item $z_i \equiv F(k') - L_{\mathrm{tot}} \pmod g$
      for all $i=1,\dots,k'$.
\end{enumerate}

Define
\[
y_i := F(k') - z_i,
\qquad i=1,\dots,k'.
\]
Then $y$ has total sum $F(k')$ and satisfies
\[
y_i \equiv L_{\mathrm{tot}} \pmod g
\quad\text{for all } i=1,\dots,k',
\]
so it has the correct congruence class for realizability.
Moreover, the prefix constraints for $z$ translate directly into the Block--HLP
inequalities for $y$, so $y$ satisfies conditions (A)--(C) of
Theorem~\ref{thm:ap-lattice-common} and is therefore realizable as
$y \in \sum_v \Pi(r^{(v)})$.

Finally, since $z_i \ge d_i$ for all $i=1,\dots,k'$, we have
\[
y_i
= F(k') - z_i
\;\le\;
F(k') - d_i
= \hat u_i
\;\le\;
u_i
\quad\text{for all } i=1,\dots,k'.
\]
Thus $y$ is a realizable aggregate satisfying $y_i \le u_i$ for all
$i=1,\dots,k'$, establishing suppression feasibility and completing the proof.
\end{proof}

\end{document}